\newcommand{\rcite}[1]{ref.~\cite{#1}}
\definecolor{mygreen}{rgb}{0,0.4,0}
\definecolor{myblue}{rgb}{0,0.0,0.4}
\definecolor{refrcolor}{rgb}{0,0.4,0}
\definecolor{cgreen}{rgb}{0,0.7,0}
\definecolor{ecolor}{rgb}{.52,.03,.06}
\definecolor{bgcolor}{rgb}{.96,.95,.80}
\definecolor{bgcolordark}{rgb}{.80,.80,.67}
\definecolor{faint}{rgb}{.80,.80,.80}
\theoremstyle{plain}
\newtheorem{theorem}{Theorem}
\newtheorem*{theorem*}{Theorem}
\newtheorem{proposition}[theorem]{Proposition}
\newtheorem{lemma}[theorem]{Lemma}
\newtheorem{definition}[theorem]{Definition}
\newtheorem{corollary}[theorem]{Corollary}
\theoremstyle{definition}
\newtheorem{example}[theorem]{Example}
\newtcolorbox{myproofbox}[1][]{
  enhanced,
  breakable,
  borderline west={10pt}{0pt}{faint},
  notitle,
  before skip=10pt,
  after skip=10pt,
  colback=white, 
  colframe=white,
  frame hidden,
  boxrule=0pt, 
  boxsep=0pt,
  sharp corners,
  left=16pt, right=0pt, top=1pt, bottom=0pt,
  fontupper=\small,
  #1
}
\renewenvironment{proof}[1][\proofname]{%
  \begin{myproofbox}\par\pushQED{\qed}\normalfont%
  \topsep6\p@\@plus6\p@\relax%
  \trivlist%
  \item[\hskip\labelsep\itshape#1\@addpunct{.}]%
  }%
  {\popQED\endtrivlist\end{myproofbox}\@endpefalse%
  % the next line avoids indentation in the paragraph following, but does not interfere with section headings.
  \if@noskipsec\leavevmode\fi\noindent\ignorespacesafterend}
\newtcolorbox{myexamplebox}[1][]{%
  enhanced,
  breakable,
  borderline west={8pt}{0pt}{faint},
  notitle,
  before skip=10pt,
  after skip=10pt,
  colback=white, 
  colframe=white,
  frame hidden,
  boxrule=0pt, 
  boxsep=0pt,
  sharp corners,
  left=14pt, right=0pt, top=1pt, bottom=0pt,
  %fontupper=\small,
  #1
}
\renewenvironment{example}[1][]{%
  \begin{myexamplebox}\refstepcounter{theorem}%
  \topsep6\p@\@plus6\p@\relax%
  \trivlist%
  %\item[\hskip\labelsep\itshape#1\@addpunct{.}]%
  \item[\hskip\labelsep\textbf{Example \theexample}\@addpunct{.}]%
  }%
  {\endtrivlist\end{myexamplebox}\@endpefalse%
  % the next line avoids indentation in the paragraph following, but does not interfere with section headings.
  \if@noskipsec\leavevmode\fi\noindent\ignorespacesafterend}
\def\mr@ignsp#1 {\ifx\:#1\@empty\else #1\expandafter\mr@ignsp\fi}%
\newcommand{\multiref}[1]{\begingroup%\let\protect\string%
\xdef\mr@no@sparg{\expandafter\mr@ignsp#1 \: }%
\def\mr@comma{}%
\@for\mr@refs:=\mr@no@sparg\do{\mr@comma\def\mr@comma{,}\ref{\mr@refs}}%
\endgroup}
\renewcommand{\eqref}[1]{(\multiref{#1})}
\newcommand{\namedref}[2]{\hyperref[#2]{#1~\ref*{#2}}}%
\newcommand{\namedreff}[2]{\hyperref[#2]{#1\,\ref*{#2}}}%
\newcommand{\secref}{\namedreff{\S}}
\newcommand{\appref}{\namedref{Appendix}}
\newcommand{\thmref}{\namedref{Theorem}}
\newcommand{\defref}{\namedref{Definition}}
\newcommand{\propref}{\namedref{Proposition}}
\newcommand{\exref}{\namedref{Example}}
\newcommand{\lemref}{\namedref{Lemma}}
\numberwithin{equation}{section}
\newcommand{\eqn}[1]{eq.~\eqref{#1}}
\newcommand{\eqns}[2]{eqs.~\eqref{#1} and~\eqref{#2}}
\providecommand{\href}[2]{#2}
\providecommand*{\shuffle}{%
  \mathbin{\mathpalette\shuffle@{}}%
}
\newcommand*{\shuffle@}[2]{%
  % #1: math style
  % #2: unused
  \sbox0{$#1\vcenter{}$}%
  \kern .15\ht0 % side bearing
  \rlap{\vrule height .25\ht0 depth 0pt width 2.5\ht0}%
  \raise.1\ht0\hbox to 2.5\ht0{%
    \vrule height 1.75\ht0 depth -.1\ht0 width .17\ht0 %
    \hfill
    \vrule height 1.75\ht0 depth -.1\ht0 width .17\ht0 %
    \hfill
    \vrule height 1.75\ht0 depth -.1\ht0 width .17\ht0 %
  }%
  \kern .15\ht0 % side bearing
}
\NewDocumentCommand{\Gtargz}{m m}
{
 \Gt\left(\begin{smallmatrix}
 \Gtargz_print:n {#1} \\
 \Gtargz_print:n {#2}
 \end{smallmatrix};z\right)
}
\NewDocumentCommand{\Gtargt}{m m}
{
 \Gt\left(\begin{smallmatrix}
 \Gtargt_print:n {#1} \\
 \Gtargt_print:n {#2}
 \end{smallmatrix};t\right)
}
\NewDocumentCommand{\Gtargzt}{m m}
{
 \Gt\left(\begin{smallmatrix}
 \Gtargzt_print:n {#1} \\
 \Gtargzt_print:n {#2}
 \end{smallmatrix};z|\tau\right)
}
\NewDocumentCommand{\Gtargxit}{m m}
{
 \Gt\left(\begin{smallmatrix}
 \Gtargxit_print:n {#1} \\
 \Gtargxit_print:n {#2}
 \end{smallmatrix};\xi|\tau\right)
}
\NewDocumentCommand{\Gtargtt}{m m}
{
 \Gt\left(\begin{smallmatrix}
 \Gtargtt_print:n {#1} \\
 \Gtargtt_print:n {#2}
 \end{smallmatrix};t|\tau\right)
}
\NewDocumentCommand{\Gtargzg}{m m}
{
 \Gt\left(\begin{smallmatrix}
 \Gtargzg_print:n {#1} \\
 \Gtargzg_print:n {#2}
 \end{smallmatrix};z|\SGroup\right)
}
\NewDocumentCommand{\Garg}{m m m}
{
	\Gt\left(\begin{smallmatrix}
		\Garg_print:n {#1} \\
		\Garg_print:n {#2}
	\end{smallmatrix};\Garg_print:n {#3}\big|\SGroup\right)
}
\NewDocumentCommand{\Gargbare}{m m m}
{
	\Gt\left(\begin{smallmatrix}
		\Gargbare_print:n {#1} \\
		\Gargbare_print:n {#2}
	\end{smallmatrix};\Gargbare_print:n {#3}\right)
}
\NewDocumentCommand{\Gtargxig}{m m}
{
 \Gt\left(\begin{smallmatrix}
 \Gtargxig_print:n {#1} \\
 \Gtargxig_print:n {#2}
 \end{smallmatrix};\xi|\SGroup\right)
}
\NewDocumentCommand{\Gtargtg}{m m}
{
 \Gt\left(\begin{smallmatrix}
 \Gtargtg_print:n {#1} \\
 \Gtargtg_print:n {#2}
 \end{smallmatrix};t|\SGroup\right)
}
\newcommand{\SI}[1]{\Sel[#1]}
\NewDocumentCommand{\SIE}{m m}
{
\SelE\!\Big[\begin{smallmatrix}
 \SI_print:n {#1} \\
 \SI_print:n {#2}
 \end{smallmatrix}\Big]
}
\providecommand{\hypersetup}[1]{}
\providecommand{\texorpdfstring}[2]{#1}
\let\@keywords\@empty
\let\@subject\@empty
\providecommand{\keywords}[1]{\gdef\@keywords{#1}}
\providecommand{\subject}[1]{\gdef\@subject{#1}}
\def\thetitle{\@title}
\def\theauthor{\@author}
\def\thesubject{\@subject}
\def\thedate{\@date}
\def\thekeywords{\@keywords}
\newif\ifnote 
\DeclareMathOperator{\id}{id}
\DeclareMathOperator{\permop}{\mathcal{P}}
\DeclareMathOperator*{\Res}{\operatorname{Res}}
\DeclareMathOperator{\Gt}{\tilde{\Gamma}}
\DeclareMathOperator{\Sel}{S}
\DeclareMathOperator{\SelE}{S^E}
\newcommand{\SGroup}{G}
\newcommand{\iunit}{{\mathring{\imath}}}
\newcommand{\eunit}{e}
\newcommand{\der}{\mathrm{d}}
\newcommand{\diff}[2][.]{\mathinner{\der#2\if #1.\else^{#1}\fi}}
\newcommand{\alg}[1]{\mathfrak{#1}}
\let\qed\relax\newcommand{\qed}
\hfill\ensuremath{\Box}}
\newcommand{\s}{\sigma}
\newcommand{\dd}{\mathrm{d}}
\newcommand{\zC}{\mathbb C}
\newcommand{\zN}{\mathbb N}
\newcommand{\cI}{\mathcal{I}}
\newcommand{\cL}{\mathcal{L}}
\newcommand{\cO}{\mathcal{O}}
\newcommand{\cR}{\mathcal{R}}
\newcommand{\cS}{\mathcal{S}}
\newcommand{\cU}{\mathcal{U}}
\newcommand{\cZ}{\mathcal{Z}}       
\newcommand{\abel}[0]{\mathfrak{u}}
\newcommand{\acyc}[0]{\mathfrak A}
\newcommand{\bcyc}[0]{\mathfrak B}
\newcommand{\Acycle}[0]{\mathfrak A}
\newcommand{\Bcycle}[0]{\mathfrak B}
\newcommand{\genus}{\ensuremath{h}}
\newcommand{\QFay}{\mathcal Q}
\newcommand{\qFay}{q}
\newcommand{\mindx}[1]{\mathbf{#1}}
\DeclareMathOperator{\RSurf}{\Sigma}
\newcommand{\fullalg}{\alg{t}}
\newcommand{\at}{\Big|}
\newcommand{\bsub}[1][\@empty]{%
  \ifx#1\@empty
    (\alg{b})%
  \else
    (\alg{b}_{#1})%
  \fi
}
\newcommand{\Sbsub}[1][\@empty]{%
  \ifx#1\@empty
    (S\alg{b})%
  \else
    (S\alg{b}_{#1})%
  \fi
}
\newcommand{\Dbsub}[1][\@empty]{%
  \ifx#1\@empty
    (\Delta \alg{b})%
  \else
    (\Delta \alg{b}_{#1})%
  \fi
}
\DeclareRobustCommand{\cev}[1]{%
  \mathpalette\do@cev{#1}%
}
\newcommand{\do@cev}[2]{%
  \fix@cev{#1}{+}%
  \reflectbox{$\m@th#1\vec{\reflectbox{$\fix@cev{#1}{-}\m@th#1#2\fix@cev{#1}{+}$}}$}%
  \fix@cev{#1}{-}%
}
\newcommand{\fix@cev}[2]{%
  \ifx#1\displaystyle
    \mkern#23mu
  \else
    \ifx#1\textstyle
      \mkern#23mu
    \else
      \ifx#1\scriptstyle
        \mkern#22mu
      \else
        \mkern#22mu
      \fi
    \fi
  \fi
}
\title{\textbf{\texorpdfstring{Higher-genus Fay-like identities from \\meromorphic generating functions}{}}}
\author{Konstantin Baune,~%\texorpdfstring{\textsuperscript{a}}{},
	Johannes Broedel,~%\texorpdfstring{\textsuperscript{a}}{},
	Egor Im,\\%\texorpdfstring{\textsuperscript{a}}{},\texorpdfstring{\\}{}
        Artyom Lisitsyn,~%\texorpdfstring{\textsuperscript{a}}{},
        Yannis Moeckli%\texorpdfstring{\textsuperscript{a}}{}
	}
\date{\today}
\begin{document}

\pdfbookmark[1]{Title Page}{title} 
\thispagestyle{empty}
\vspace*{1.0cm}
\begin{center}%
  \begingroup\LARGE\bfseries\thetitle\par\endgroup
\vspace{1.0cm}

\begingroup\large\theauthor\par\endgroup
\vspace{9mm}
\begingroup\itshape
Institute for Theoretical Physics, ETH Zurich\\Wolfgang-Pauli-Str.~27, 8093 Zurich, Switzerland\\[4pt]
\par\endgroup
\vspace*{7mm}

\begingroup\ttfamily
baunek@ethz.ch, jbroedel@ethz.ch, egorim@ethz.ch,\\
alisitsyn@ucdavis.edu, moeckliy@ethz.ch
\par\endgroup

\vspace*{2.0cm}

\textbf{Abstract}\vspace{5mm}

\begin{minipage}{13.4cm}
	A possible way of constructing polylogarithms on Riemann surfaces of higher genera facilitates integration kernels, which can be derived from generating functions incorporating the geometry of the surface. Functional relations between polylogarithms rely on identities for those integration kernels. \\
In this article, we derive identities for Enriquez' meromorphic generating function and investigate the implications for the associated integration kernels. The resulting identities are shown to be exhaustive and therefore reproduce all identities for Enriquez' kernels conjectured in \texttt{arXiv:2407.11476} recently.  
\end{minipage}
%\vspace*{4cm}
\end{center}
\vfill

\newpage
\setcounter{tocdepth}{2}
\tableofcontents

\newpage

%%%%%%%%%%%%%%%%%%%%%%%%%%%%%%%%%%%%%%%%%%%%%%%%%%%%%%%%%%%%%%%%%%%%%%%%%%%%%%
%%%%%%%%%%%%%%%%%%%%%%%%%%%%%%%%%%%%%%%%%%%%%%%%%%%%%%%%%%%%%%%%%%%%%%%%%%%%%%
%%%%%%%%%%%%%%%%%%%%%%%%%%%%%%%%%%%%%%%%%%%%%%%%%%%%%%%%%%%%%%%%%%%%%%%%%%%%%%
%%%%%%%%%%%%%%%%%%%%%%%%%%%%%%%%%%%%%%%%%%%%%%%%%%%%%%%%%%%%%%%%%%%%%%%%%%%%%%

\section{Introduction}
\label{sec:introduction}

Expressing observables in quantum field theories and string theories in an efficient way requires the use of polylogarithms on Riemann surfaces of arbitrary genus\cite{Remiddi:1999ew,Goncharov:2010jf,Moch:2001zr,Broedel:2013tta,Broedel:2014vla,Broedel:2017kkb,Marzucca:2023gto}: these functions allow linking parameters of the observables to geometric information determining the Riemann surface in question. Simultaneously, polylogarithms offer the possibility to mathematically implement the inevitable branch cut structure appearing in observables when considered as functions of -- for example -- Mandelstam variables and masses of particles.

One possible and viable way to construct polylogarithms is as iterated integrals \cite{Chen} over a class of integration kernels appropriate for the Riemann surface in question. Usually, this class of kernels can be derived from a generating function, which in turn gives rise to a connection on the Riemann surface. Flatness of this connection form is sufficient for homotopy invariance of the associated polylogarithms. 

In most situations the generating function does not only satisfy flatness of the associated connection form, but is subject to further relations and identities. At genus one, those are the so-called Fay identities for the Kronecker function \cite[Prop.~5 (iii)]{BrownLevin}, which can be derived from the Fay trisecant equations \cite{fay,mumford1984tata}. For genera higher than one, the connection between the Fay trisecant equation and the identities explored in this article remains unclear. The anticipation of establishing such a link later led us to call the kernel identities derived from higher-genus generating functions \textit{Fay-like}. 

For each generating function and associated connection, the resulting class of polylogarithms needs to be proven to be complete, that is, it shall be closed under integration and differentiation. 

Suitable connection forms on Riemann surfaces of genus zero and genus one have been identified and explored in \cite{Knizhnik:1984nr} and \cite{Hain,CEE,Bernard:1987df,Bernard:1988yv}. Taking these connection forms as a starting point, a vast literature is available for the construction of polylogarithms on Riemann surfaces of genus zero and genus one, see for example \cite{Goncharov:2001iea,Goncharov:1998kja,Brown:2013qva} and \cite{Weil76,BeilinsonLevin,Levin,LevinRacinet,BrownLevin}, respectively. For the construction of connection forms and associated polylogarithms on Riemann surfaces of genus greater than one, several approaches and suggestions have appeared in the last years \cite{Bernard:1987df,Bernard:1988yv,Felder:1995iv,enriquez2000solutions,EnriquezHigher,EZ1,EZ2,EZ3,DHoker:2023vax,BBILZ}. 

The connection forms underlying the constructions of polylogarithms live on a principal bundle of the Riemann surface and therefore take values in an associative algebra, whose generators are associated to the non-trivial cycles of the Riemann surface. While the subalgebra associated to the $\Bcycle$-cycles (which is relevant to the quasi-periodicities of the connection) is abelian at genus one, it will become non-abelian at higher genera, because generators associated to different cycles with non-trivial quasi-periodicity do not commute. At all genera, the choice of generators for the algebra is not unique and correspondingly, there are several representations for generating functions. In addition, there is as well choices for the analytic properties of the generating functions, which implies the analytic properties of the associated integration kernels: integration kernels can either be \textit{meromorphic} or \textit{periodic} in all of the cycles of the Riemann surface. Furthermore one can choose the integration kernels to exhibit simple poles only or allow for higher poles. 

In summary, knowing a generating function that incorporates the geometry of the Riemann surface and gives rise to a flat connection form, allows for the construction of polylogarithms on this particular Riemann surface. The properties of the connection form (including its flatness) will then imply relations between integration kernels, which in turn lead to functional relations between polylogarithms and associated special values. Proving closure of the polylogarithmic space thus constructed leads to a class of polylogarithms on a Riemann surface.  

%%%%%%%%

An important tool when exploring classes of polylogarithms is functional relations between them. In a physics context, functional relations allow for canonical (and thus comparable) representation of observables. In retrospective, the largest structural advantage when constructing polylogarithms as iterated integrals is the algebraization: the associated algebraic structures lead to an algorithmic way of using functional identities between polylogarithms to reduce them to a canonical set. Those algorithms have been described for polylogarithms at genus zero \cite{AKNoteDrinfeld,Mizera:2019gea} and genus one \cite{wojtkowiak1996,ZagierGangl,GKZ,Broedel:2019tlz}.
Some of those relations have been used in the (easier) context of exploring relations between various classes of (multiple) zeta values, see for example \cite{Blumlein:2009cf,Brown:2011ik,Broedel:2015hia,Brown:mmv,DHoker:2020hlp}.

Functional relations between polylogarithms based on relations between integration kernels have been put to use in numerous contexts in physics, in particular in the calculation of string scattering amplitudes as well as for numerous calculations of Feynman integrals in quantum field theory (see the review articles \cite{Berkovits:2022ivl} and \cite{Bourjaily:2022bwx}, respectively). 
\medskip

%%%%%%%%

For periodic and single-valued kernels, a set of Fay-like identities has been proven to be valid in \rcite{DS:Fay} recently. In the same reference, so-called interchange lemmas were proven and Fay-like identities have been conjectured for meromorphic versions of kernels. In this article, we are going to investigate the identities for meromorphic versions of integration kernels using the language of generating functions based on Enriquez' connection \cite{EnriquezHigher}. Employing our formalism at genus one, we recover the known elliptic Fay identities. Upon expansion into the meromorphic kernels suggested by Enriquez, our identities are shown to comprise all identities put forward in \rcite{DS:Fay}. We are going to derive and prove several classes of identities and show that there cannot be any identities other than those we have been considering.

\medskip

The structure of this article is as follows: starting from Enriquez' generating function, which is reviewed in \secref{sec:connection}, we are defining so-called \textit{component forms}, which are substructures of the original generating function, and allow incorporation of a second algebra for the quasi-periodicities. The definition and properties of these component forms, as well as their Hopf algebra structure is content of \secref{sec:componentforms}, together with a couple of basic identities and algebraic contractions of the component forms. 

The main results of our article are contained in \secref{sec:identities}, which are identities for the component forms. The principal statements are collected in Theorems \ref{thm:quadfay} and \ref{thrm:uniqueFay} along with the respective proofs. Furthermore, \secref{sec:kernelidentities} is aimed at exploring identities for integration kernels derived from the identities for component forms. We combine our identities to reproduce the identities suggested in \rcite{DS:Fay}.
In \secref{sec:genusone} we connect and interpret the language of the current article to the well-explored formalism for building polylogarithms at genus one. Based on identities for Enriquez' integration kernels, we show examples of functional relations for higher-genus polylogarithms in \secref{sec:funcrel}.
Finally, we summarize and formulate various open questions in \secref{sec:summary}. Several details of long manipulations are relegated from the respective proofs to various appendices. 

%%%%%%%%%%%%%%%%%%%%%%%%%%%%%%%%%%%%%%%%%%%%%%%%%%%%%%%%%%%%%%%%%%%%%%%%%%%%%%%%
%%%%%%%%%%%%%%%%%%%%%%%%%%%%%%%%%%%%%%%%%%%%%%%%%%%%%%%%%%%%%%%%%%%%%%%%%%%%%%%%
%%%%%%%%%%%%%%%%%%%%%%%%%%%%%%%%%%%%%%%%%%%%%%%%%%%%%%%%%%%%%%%%%%%%%%%%%%%%%%%%
%%%%%%%%%%%%%%%%%%%%%%%%%%%%%%%%%%%%%%%%%%%%%%%%%%%%%%%%%%%%%%%%%%%%%%%%%%%%%%%%

\section{Enriquez' connection}\label{sec:connection}

We consider a Riemann surface $\RSurf$ of genus \genus{} with homology basis $\Acycle_i,\,\Bcycle_i$, for $i\in \lbrace 1\ldots \genus{}\rbrace$, and associated differentials $\omega_i$ such that 
\begin{equation}\label{eqn:holomorphic-basis}
	\oint_{\acyc_i} \omega_j = \delta_{ij}, \qquad \oint_{\bcyc_i} \omega_j = \tau_{ij}.
\end{equation}

All objects defined below depend on the geometry of the Riemann surface in question. The precise way how this geometric dependence is captured depends on the choice of assigning quasi-periodicities to the cycles in the homology basis. Within this article we are going to choose the $\Acycle$-cycles to have trivial quasi-periodicities, where the $\Bcycle$-cycles will carry the non-trivial geometric information through their quasi-periodicities. This trivialization of the $\Acycle$-cycles is a crucial ingredient when constructing the Schottky uniformization of Riemann surfaces \cite{Bobenko:2011}. Below, we are going to formally assign algebra generators $a_i$ and $b_i$ to the quasi-periodicities when walking around $\Acycle$-cycles and $\Bcycle$-cycles, respectively. Usually only the non-trivial behavior when going around $\Bcycle$-cycles is noted.

Let us consider the free associative algebra $\fullalg$ generated by the set $\lbrace a_i,b_i\rbrace_{i=1}^\genus$. This algebra reflects the structure of the fundamental group of a punctured Riemann surface $\RSurf\backslash\{x\}$, where $x\in\RSurf$ denotes the puncture. In the following, we consider a unique one-form valued in the algebra $\fullalg$, which will serve as a generating function for the integration kernels used to define higher-genus polylogarithms later. In what follows, we will mostly consider the subalgebra $\alg{b} \subset \alg{t}$ generated by $\{b_i\}_{i=1}^\genus$ only.
\begin{definition}[Enriquez--Zerbini{\cite[Prop.~3.13]{EZ1}}]
  \label{def:connection}
  There exists a unique meromorphic one-form $K(z,x)$ in $z$, valued in the algebra $\fullalg$, for $z,x$ in $\Sigma^{\mathrm{univ}}$, the universal cover of $\Sigma$, defined by the two properties:
  \begin{subequations}
  \label{eqn:constraints}
  \begin{gather}
    K(\gamma_i z, x) = \eunit^{b_i} K(z,x),\quad\forall i\in\lbrace1\ldots\genus{}\rbrace,  \label{eqn:constraints:periodic}\\
    (-2\pi \iunit) \Res_{z=x} K(z,x) = \sum_{j=1}^h b_j a_j,\label{eqn:constraints:pole}
  \end{gather}
  \end{subequations}
  where $\gamma_iz$ is the image of $z$ under moving around the cycle\footnote{The notation for $\gamma_iz=z+\Bcycle_i$ is borrowed from \rcite{BBILZ}, where on the Schottky cover going around the cycle $\Bcycle_i$ is equivalent to applying a Schottky group generator $\gamma_i$.} $\Bcycle_i$. We are going to refer to $K(z,x)$ as \emph{Enriquez' connection}\footnote{
    In the original article \rcite{EnriquezHigher} this connection is defined on a Riemann surface without punctures, which has a different fundamental group and, hence, gets an additional constraint on the algebra $\alg{t}$, see \rcite{BBILZ} for more details. 
  }.
\end{definition}
This defines a holomorphic connection on a punctured Riemann surface $\RSurf\backslash\{x\}$ with fundamental group freely generated by the cycles $\lbrace\Acycle_i,\Bcycle_i\rbrace_{i=1}^\genus$. Restricting to genus one, the above connection valued in $\fullalg$ is related to the Kronecker function appearing in \rcite{Broedel:2014vla,CEE} and used in the construction of a single-valued connection in \rcite{BrownLevin}.

Enriquez' connection $K(z,x)$ can be expanded into words composed from the algebra generators of $\fullalg$. The corresponding coefficients $\omega_{i_1\cdots i_rj}(z,x)$ are one-forms in $z$ and functions in $x$ and are referred to as \textit{integration kernels},
\begin{equation}
\begin{aligned}\label{eqn:generatingfunction}
  K(z,x) &= \sum_{r = 0}^\infty \sum_{i_1,\cdots,i_r,j = 1}^\genus \omega_{i_1 \cdots i_r j}(z,x)\, b_{i_1} \cdots b_{i_r} a_j\\
  &= \sum_{r = 0}^\infty \omega_{i_1 \cdots i_r j}(z,x)\, b_{i_1} \cdots b_{i_r} a_j.
\end{aligned}
\end{equation}
Here and in the remainder of the article repeated indices are summed over from $1$ to $\genus$ implicitly unless otherwise noted. This is as well understood, if the two indices belong to the same object (e.g.~in $\omega_{ijj}$).

In \rcite{EnriquezHigher}, the integration kernels are shown to satisfy the following quasi-periodicity properties: 
\begin{subequations}
  \label{eqn:kernelQP}
  \begin{gather}
    \label{eqn:kernelQPz}
    \omega_{i_1\cdots i_rj}(\gamma_kz,x)=\sum_{l=0}^r\frac{1}{l!}\delta_{ki_1\cdots i_l}\,\omega_{i_{l+1}\cdots i_rj}(z,x), \\
    \label{eqn:kernelQPx}
    \omega_{i_1\cdots i_r ij}(z,\gamma_kx)=\omega_{i_1\cdots i_r ij}(z,x) + \delta_{ij}\sum_{l=0}^r\frac{(-1)^{l+1}}{(l+1)!}\delta_{ki_r\cdots i_{r-l+1}}\,\omega_{i_1\cdots i_{r-l}k}(z,x),
  \end{gather}
\end{subequations}
where $\delta_{ki_1\cdots i_l}=\prod_{n=1}^{l}\delta_{ki_n}$. Furthermore, the kernels satisfy the residue condition
\begin{equation}
  \label{eqn:kernelRes}
  (-2\pi \iunit)\Res_{z=x}\omega_{i_1\cdots i_rij}(z,x) = \delta_{r0}\delta_{ij}, 
\end{equation}
which implies that poles are carried by two-index kernels with equal indices exclusively. 
Considering the behavior of kernels under cyclic shifts, it becomes clear that poles on the r.h.s.~of \eqns{eqn:kernelQPz}{eqn:kernelQPx} propagate to the l.h.s.~at positions related to the original $z$ and $x$ by moving around cycles. Accordingly, \eqn{eqn:kernelRes} takes into account only the pole in the fundamental domain.

In general, higher-genus polylogarithms can be built recursively as iterated integrals from integration kernels\cite{EnriquezHigher, EZ1, EZ2, EZ3}. In the notation of \rcite{BBILZ}, polylogarithms are built from the kernels in \eqn{eqn:generatingfunction} via
\begin{equation}
\label{eqn:polylogdef}
	\Gargbare{\mindx{i}_{n_1}, \ldots ,\mindx{i}_{n_k}}{x_1, \ldots, x_k}{z} 
	=
	\int_{t=z_0}^z \omega_{\mindx{i}_{n_1}}(t, x_1) \Gargbare{\mindx{i}_{n_2}, \ldots ,\mindx{i}_{n_k}}{x_2, \ldots, x_k}{t},\qquad \Gargbare{}{}{z}=1,
\end{equation}
where $\mindx{i}_{n_\ell}$ are the multi-indices of the integration kernels $\omega$ of length $n_\ell$. The points $x_\ell$ in the second line denote the second parameter of these kernels. The point $z_0\in\Sigma^{\mathrm{univ}}$ is the fixed basepoint of integration.

For the remainder of this article, we will be concerned with relations among various generating functions for integration kernels. The simplest set of such relations, leading to linear 3-point identities for integration kernels, will be reviewed in the next subsection. 

%%%%%%%%%%%%%%%%%%%%%%%%%%%%%%%%%%%%%%%%%%%%%%%%%%%%%%%%%%%%%%%%%%%%%%%%%%%%%%%%
%%%%%%%%%%%%%%%%%%%%%%%%%%%%%%%%%%%%%%%%%%%%%%%%%%%%%%%%%%%%%%%%%%%%%%%%%%%%%%%%

\subsection{Linear 3-point identities}\label{sec:linearthreepointidentities}
Before studying Fay-like quadratic identities in \secref{sec:identities}, it is instructive to consider the space of linear 3-point identities satisfied by the integration kernels $\omega_{i_1\cdots i_rj}(z,x)$. 

Several of these identities have already been shown in ref.~\cite[Lemma 9]{EnriquezHigher}, where one finds\footnote{The notation in \eqn{eqn:matching} is our convention to write repeated indices which are not summed over.}
\begin{subequations}
\label{eqs:linearid}
\begin{align}
	\omega_{i_1 \cdots i_r jk}(z,x) &- \omega_{i_1 \cdots i_r jk}(z,y)  = 0,\quad\text{for }j \neq k,\label{eqn:non-matching} \\
	\big((\omega_{i_1 \cdots i_r pp'}(z,x) - \omega_{i_1 \cdots i_r qq'}(z,x)) &- (\omega_{i_1 \cdots i_r pp'}(z,y) - \omega_{i_1 \cdots i_r qq'}(z,y))\big)\at_{\substack{p'=p\\q'=q}}  = 0.\label{eqn:matching}
\end{align}
\end{subequations}
The above equations show that holomorphic (combinations of) kernels are independent of the location of the pole. Accordingly, we will omit the second argument for holomorphic kernels in the following. As an immediate application, let us show that any linear 3-point identity can be traced back to the pole-independence conditions above. 
\begin{proposition}[Triviality of linear identities]\label{prop:triviality-linear}
  Any linear 3-point identity between integration kernels can be written as a linear combination of identities \eqn{eqn:non-matching} and \eqn{eqn:matching}.
\end{proposition}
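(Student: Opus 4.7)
The plan is to take a generic linear 3-point identity, rewrite it modulo \eqn{eqn:non-matching} and \eqn{eqn:matching} into a canonical form, and then show that this canonical form can only hold if it is trivially zero. A generic linear 3-point identity takes the form
\begin{equation*}
	R(z,x,y) \;=\; \sum_{\mindx{i}}\bigl(c^x_{\mindx{i}}\,\omega_{\mindx{i}}(z,x) + c^y_{\mindx{i}}\,\omega_{\mindx{i}}(z,y)\bigr) \;=\; 0.
\end{equation*}
First I would apply \eqn{eqn:non-matching} to each \emph{non-matching-tail} kernel $\omega_{\mindx{j}kl}(z,\cdot)$ with $k\neq l$, collapsing its $x$- and $y$-evaluations into a single $x,y$-independent symbol $\omega_{\mindx{j}kl}(z)$ at the cost of a linear combination of \eqn{eqn:non-matching} identities. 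Then I would use \eqn{eqn:matching} to rewrite every \emph{matching-tail} kernel as
\begin{equation*}
	\omega_{\mindx{j}ii}(z,x) \;=\; \omega_{\mindx{j}11}(z,x) + \bigl[\omega_{\mindx{j}ii}(z,y) - \omega_{\mindx{j}11}(z,y)\bigr],
\end{equation*}
whose validity is precisely \eqn{eqn:matching} and whose bracket is manifestly $x$-independent. After these two reductions, in each multi-index sector all $x$-dependence (respectively $y$-dependence) is confined to a single reference kernel $\omega_{\mindx{j}11}(z,x)$ (resp.~$\omega_{\mindx{j}11}(z,y)$).

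The core of the argument is then a residue analysis at $z=x$ and $z=y$. By \eqn{eqn:kernelRes}, the only kernel with a pole at $z=x$ is $\omega_{11}(z,x)$ (empty prefix, matching tail), so the reduced identity forces $\sum_i c^x_{ii}=0$, and symmetrically $\sum_i c^y_{ii}=0$. These residue conditions are themselves contained in \eqn{eqn:matching}: whenever $\sum_i c^x_{ii}=0$, the sum $\sum_i c^x_{ii}\omega_{ii}(z,x)$ can be re-expressed as a linear combination of \eqn{eqn:matching} identities with empty prefix. Substituting this back into $R$ removes all meromorphic content and leaves a purely holomorphic remainder, composed of the $x,y$-independent symbols produced by the first two reductions together with the holomorphic kernels $\omega_{\mindx{j}11}(z,\cdot)$ for non-empty prefixes $\mindx{j}$.

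The main obstacle I anticipate is the final step: arguing that this holomorphic remainder must vanish coefficient-by-coefficient. The natural strategy is to exploit the quasi-periodicity relations \eqn{eqn:kernelQP}: since the subalgebra $\alg{b}\subset\alg{t}$ is free, kernels with distinct multi-indices give rise to monodromies along linearly independent words in $\alg{b}$, and induction on the word length of $\mindx{j}$ via \eqn{eqn:kernelQPz} reduces any alleged relation among the $\omega_{\mindx{j}11}(z,\cdot)$ to one involving strictly shorter kernels. This bootstrap forces every residual coefficient in the reduced form of $R$ to vanish, showing that $R$ itself is generated by \eqn{eqn:non-matching} and \eqn{eqn:matching} alone and establishing the triviality claim.
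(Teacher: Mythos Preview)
Your overall strategy is correct and closely parallels the paper's proof: both rely on the interplay between the residue condition \eqref{eqn:kernelRes} and the quasi-periodicity \eqref{eqn:kernelQPz} to successively constrain the coefficients. The paper organizes this as a clean downward induction on the maximal weight $m$, defining a family of derived identities $\mathcal L^{(m-k)}_{i_1\cdots i_k}$ by iterated application of $z\mapsto\gamma_{i_k}z$ and at each level taking residues to extract the relations $\lambda_{i_1\cdots i_m j}+\eta_{i_1\cdots i_m j}=0$ and $\lambda_{i_1\cdots i_{m-1}jj}=0$. You instead front-load a normalization modulo \eqref{eqn:non-matching} and \eqref{eqn:matching}, take one residue, and then appeal to a monodromy bootstrap.

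There is one genuine gap in your outline that you should close. After your normalization, the $x$-dependent part of $R$ reads $\sum_{\mindx{j}}\bigl(\sum_i c^x_{\mindx{j}ii}\bigr)\,\omega_{\mindx{j}11}(z,x)$, summed over \emph{all} prefixes $\mindx{j}$, not just the empty one. The single residue at $z=x$ kills only the empty-prefix coefficient $\sum_i c^x_{ii}$; for non-empty $\mindx{j}$ the kernels $\omega_{\mindx{j}11}(z,x)$ are holomorphic in $z$ (so the residue does not see them) but still genuinely depend on $x$. To force these higher coefficients to vanish you must apply the quasi-periodicity shift $z\mapsto\gamma_k z$ to peel off leading letters of $\mindx{j}$, which regenerates a new $\omega_{11}(z,x)$ term whose residue then yields the next constraint. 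In other words, the residue and quasi-periodicity steps need to be \emph{interleaved}, exactly as the paper does through its tower $\mathcal L^{(m-k)}_{i_1\cdots i_k}$; your final paragraph gestures at this (``induction on the word length of $\mindx{j}$'') but does not say that a fresh residue is required at each stage. Once you make that interleaving explicit, your argument and the paper's become essentially the same proof in different packaging.
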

\begin{proof}{}
	For the proof, define a linear 3-point identity $\mathcal L^{(m)}(z,y,x)$ of weight $m$ as 
	\begin{equation}
		\mathcal L^{(m)}(z,y,x) = \sum_{r=0}^m \lambda_{i_1 \cdots i_r j}\, \omega_{i_1 \cdots i_r j}(z,x) + \eta_{i_1 \cdots i_r j} \,\omega_{i_1 \cdots i_r j}(z,y) = 0,\label{eqn:linear-theorem}
	\end{equation}
	whose vanishing is ensured by a particular fixed choice of coefficients $ \lambda_{i_1 \cdots i_r j},\eta_{i_1 \cdots i_r j}\in\zC$, where repeated indices are summed over implicitly. In the following we will prove validity of the theorem inductively over the weight $m$.  

Validity of the induction start $m=0$ is evident as the holomorphic differentials are linearly independent, which implies all coefficients in any weight-zero identity to vanish.

For the inductive step, we assume that for $\mathcal L^{(\ell)}(z,y,x)=0$, for all $0\le\ell<m$, the theorem holds. Let us define recursively a collection of lower weight identities\footnotemark
	\begin{equation} \label{eqn:linear_lower_weight_ids}
		\mathcal L^{(m-k)}_{i_1 \cdots i_k} = \mathcal L^{(m-k+1)}_{i_1 \cdots i_{k-1}}(\gamma_{i_k} z , y,x) - \mathcal L^{(m-k+1)}_{i_1 \cdots i_{k-1}}(z,y,x) = 0,\quad\text{for }k=1,\ldots,m,
	\end{equation}
	each of which are true for arbitrary choices of $i_1 \cdots i_k$.  With each step, the quasi-periodicity \eqn{eqn:kernelQPz} produces extra terms of lower weight, and the highest-weight terms are canceled out, reducing the weight of the identity by one. Isolating the highest-weight terms, we can write
	\begin{align}
		\mathcal L^{(m-k)}_{i_1 \cdots i_k} =  \lambda_{i_1 \cdots i_m j}\, \omega_{i_{k+1} \cdots i_m j}(z,x) &+ \eta_{i_1 \cdots i_m j}\, \omega_{i_{k+1} \cdots i_m j}(z,y) \notag\\
		&+ [\text{lower-weight terms}] = 0.
	\end{align}
	Choosing $k=m$ yields a collection of weight-zero identities, for which linear independence of the holomorphic forms $\omega_j(z)$ implies
	the following conditions on the coefficients $\lambda$ and $\eta$:
	\begin{equation}\label{eqn:cond1}
		\mathcal L^{(0)}_{i_1 \cdots i_m} = (\lambda_{i_1 \cdots i_m j} + \eta_{i_1 \cdots i_m j}) \,\omega_{j}(z) = 0 \implies \lambda_{i_1 \cdots i_m j} + \eta_{i_1 \cdots i_m j} = 0.
	\end{equation}
	Taking the residue of the collection of weight-one identities obtained by choosing $k=m-1$, one deduces a second set of conditions on the coefficients $\lambda$ and $\eta$, 
	\begin{align}
			\Res_{z=x} \mathcal L^{(1)}_{i_1 \cdots i_{m-1}} & =  \Res_{z=x} \left( \lambda_{i_1 \cdots i_m j}\, \omega_{i_m j}(z,x) + \eta_{i_1 \cdots i_m j}\, \omega_{i_m j}(z,y) + [\text{weight-zero terms}] \right) \notag\\
									 & = -\frac{1}{2\pi \iunit} \lambda_{i_1 \cdots i_{m-1} jj} = 0 \implies \sum_{j=2}^\genus\lambda_{i_1 \cdots i_{m-1} jj} = -\lambda_{i_1 \cdots i_{m-1} 11},\label{eqn:cond2}
	\end{align}
	where only the first term contributes to the residue as all the other terms are regular at $z=x$.

	Using the conditions on $\lambda$ and $\eta$ in \eqns{eqn:cond1}{eqn:cond2} for the highest weight, one can rewrite $\mathcal L^{(m)}$ from \eqn{eqn:linear-theorem} as
	\begin{align}
	&\mathcal L^{(m)}(z,y,x) \notag\\
	&= \sum_{j \neq i_m} \lambda_{i_1 \cdots i_m j} \big(\omega_{i_1 \cdots i_m j}(z,x) - \omega_{i_1 \cdots i_m j}(z,y)\big) \notag\\
	& \quad + \sum_{j = 2}^\genus \lambda_{i_1 \cdots i_{m-1} jj} \big( [\omega_{i_1 \cdots i_m jj}(z,x) - \omega_{i_1 \cdots i_m 11}(z,x)] - [\omega_{i_1 \cdots i_m jj}(z,y) - \omega_{i_1 \cdots i_m 11}(z,y)] \big)\notag \\
	& \quad + \mathcal L^{(m-1)}(z,y,x),
	\end{align}
  	where one can observe the weight-$m$ terms to vanish due to \eqns{eqn:non-matching}{eqn:matching}. 
	Accordingly, combining the translational behavior of the integration kernels and validity of the weight-$m$ identity $\cL^{(m)}(z,y,x)=0$ gives rise to an identity 
  	\begin{equation}
  	\cL^{(m-1)}(z,y,x)= \sum_{r=0}^{m-1} \lambda_{i_1 \cdots i_r j}\, \omega_{i_1 \cdots i_r j}(z,x) + \eta_{i_1 \cdots i_r j} \,\omega_{i_1 \cdots i_r j}(z,y) =0
  	\end{equation}
of weight $m-1$, with the particular fixed choice of coefficients $\lambda_{i_1 \cdots i_r j},\eta_{i_1 \cdots i_r j}$ assumed for $\cL^{(m)}$ in \eqn{eqn:linear-theorem}.

        By the assumption of the induction step, the theorem holds for $\mathcal L^{(m-1)}(z,y,x)$.  Therefore, one concludes that $\mathcal L^{(m)}(z,y,x)$ can be written as a linear combination of eqs.~\eqref{eqn:non-matching} and \eqref{eqn:matching} for any weight $m \ge 0$.
\end{proof}
\footnotetext{Notice that despite using the same letter, objects $\cL$ with and without subindices are different.}
Notice that the above proof also shows that any linear identity is graded by weight, that is, in \eqn{eqn:linear-theorem} each contribution for a particular weight $r\in\zN$ vanishes individually: 
	\begin{equation}
		\lambda_{i_1 \cdots i_r j}\, \omega_{i_1 \cdots i_r j}(z,x) + \eta_{i_1 \cdots i_r j} \,\omega_{i_1 \cdots i_r j}(z,y) = 0\,.
	\end{equation}
This is due to the fact that the above theorem is proven inductively modulo lower-weight terms relying on \eqns{eqn:non-matching}{eqn:matching} of particular weight $r$. 

Validity of \propref{prop:triviality-linear} has been shown using identities \eqref{eqs:linearid} for individual kernels. These identities will be rewritten in terms of generating functions in \eqns{eqn:linearidfunc}{eqn:pole-independencegenfunc}. 

%%%%%%%%%%%%%%%%%%%%%%%%%%%%%%%%%%%%%%%%%%%%%%%%%%%%%%%%%%%%%%%%%%%%%%%%%%%%%%%%
%%%%%%%%%%%%%%%%%%%%%%%%%%%%%%%%%%%%%%%%%%%%%%%%%%%%%%%%%%%%%%%%%%%%%%%%%%%%%%%%
%%%%%%%%%%%%%%%%%%%%%%%%%%%%%%%%%%%%%%%%%%%%%%%%%%%%%%%%%%%%%%%%%%%%%%%%%%%%%%%%
%%%%%%%%%%%%%%%%%%%%%%%%%%%%%%%%%%%%%%%%%%%%%%%%%%%%%%%%%%%%%%%%%%%%%%%%%%%%%%%%

\section{Component forms}\label{sec:componentforms}

In order to show quadratic relations between generating functions for Enriquez' integration kernels, we will define so-called \emph{component forms}, which are generating functions for specific combinations of integration kernels, where certain indices are fixed. This allows for expressions involving contractions of free indices. However, before doing so we briefly introduce the Hopf algebra structure on the free associative algebra $\alg{b}$ that will prove to be a useful tool in what follows.

%%%%%%%%%%%%%%%%%%%%%%%%%%%%%%%%%%%%%%%%%%%%%%%%%%%%%%%%%%%%%%%%%%%%%%%%%%%%%%%%
%%%%%%%%%%%%%%%%%%%%%%%%%%%%%%%%%%%%%%%%%%%%%%%%%%%%%%%%%%%%%%%%%%%%%%%%%%%%%%%%

\subsection{Hopf algebra}

Component forms are best described as one-forms valued in the tensor algebra of $\alg{b}$. The algebra $\alg{b}$ has a natural Hopf algebra structure, which allows for systematic investigation of the component forms. The Hopf algebraic structure of $\alg{b}$ was already exploited in \rcite{EZ1}.
\begin{definition}
	A Hopf algebra\footnote{See for example \rcite{Chari:1994pz} for a general introduction.} $\alg{h}$ (over a field $\mathbb K$) is a bialgebra together with a (unique) anti-homomorphism map $S: \alg{h} \to \alg{h}$ called the \emph{antipode} such that the algebra structure (\emph{product} $\mu: \alg{h} \otimes \alg{h} \to \alg{h}$ and \emph{unit} $\eta: \mathbb{K} \to \alg{h}$), coalgebra structure (\emph{coproduct} $\Delta: \alg{h} \to \alg{h} \otimes \alg{h} $ and \emph{counit} $\epsilon: \alg{h} \to \mathbb{K}$) and the antipode are all compatible. This means that the coproduct $\mu$ and counit $\epsilon$ must be algebra homomorphisms and the antipode $S$ satisfies
  \begin{equation}
    \mu \circ (S \otimes \id) \circ \Delta = \mu \circ (\id \otimes S) \circ \Delta = \eta \circ \epsilon.
  \end{equation}
\end{definition}
For the free associative algebra $\alg{b}$ defined at the beginning of \secref{sec:connection} and generated by elements $b_i$ with $i \in \{1,\ldots,\genus\}$, the canonical Hopf algebra structure has the standard product and unit maps. The coproduct, counit and antipode acting on the generators of $\alg{b}$ and the neutral element are defined as
\begin{align}
  \Delta (b_i) &= b_i \otimes 1 + 1 \otimes b_i, &
  \Delta (1) &= 1 \otimes 1, &
  \epsilon (b_i) &= 0, &
  \epsilon (1) &= 1, &
  S (b_i) & = - b_i,&
  S (1) &= 1.
\end{align}
These relations are extended to the whole of $\alg{b}$ by linearity and (anti-)homomorphism property.

In the remainder of the article we will use the coproduct and antipode to construct maps for generating functions in order to derive several useful formulas and identities. 

%%%%%%%%%%%%%%%%%%%%%%%%%%%%%%%%%%%%%%%%%%%%%%%%%%%%%%%%%%%%%%%%%%%%%%%%%%%%%%%%
%%%%%%%%%%%%%%%%%%%%%%%%%%%%%%%%%%%%%%%%%%%%%%%%%%%%%%%%%%%%%%%%%%%%%%%%%%%%%%%%

\subsection{First component form}
\begin{definition}\label{def:firstcomp}
Following the notation in refs.~\cite{EZ1,BBILZ}, the $j$-th component $K_{\bsub j}$ of Enriquez' connection, which is referred to as the \emph{first component form}, valued in the free associative algebra $\alg{b} \subset \fullalg$, is defined as\footnote{This notion of first component form has also been considered in \rcite{EZ1}.}
\begin{equation}
	\label{eqn:firstcomp}
K_{\bsub j}(z, x)=\sum_{r \geq 0} \omega_{i_1 \cdots i_r j}(z, x)\, b_{i_1} \cdots b_{i_r} \,,\quad j\in\{1,\ldots,\genus\}\,,
\end{equation}
where as before $\{b_i\}_{i=1}^\genus$ are the generators of the algebra $\alg{b}$.
\end{definition}
Comparing with \eqn{eqn:generatingfunction}, it is clear that Enriquez' connection is a contraction of the first component forms $K_{\bsub j}$ with the letters $a_j$,
\begin{equation}
  K(z,x) = K_{\bsub j} a_j.
\end{equation}
Using quasi-periodicity \eqref{eqn:kernelQP} and residue \eqref{eqn:kernelRes} of the integration kernels within \eqn{eqn:firstcomp}, one can derive the quasi-periodicity and residue of the first component form to be 
\begin{subequations}\label{eqn:firstformprops}
\begin{align}
	\label{eqn:firstcompquasiz}
	K_{\bsub j}\left(\gamma_k z, x\right)&=\eunit^{b_k} K_{\bsub j}(z, x),\\
  \label{eqn:firstcompquasix}
	K_{\bsub j}\left(z, \gamma_k x\right)&=K_{\bsub j}(z, x)+K_{\bsub k}(z, x) \frac{\eunit^{-b_k}-1}{b_k} b_j ,\\\label{eqn:firstcompres}
	(-2 \pi \iunit)\Res_{z=x}K_{\bsub j}(z, x) &= b_j,
\end{align}
\end{subequations}
where there is no summation over $k$ in \eqn{eqn:firstcompquasix} and $\gamma_k$ again denotes the operation of moving along the cycle $\Bcycle_k$. The details of the calculation are spelled out in \appref{app:compprop}.

As a side note, upon acting with the counit on the first component form, one obtains the holomorphic differential
\begin{equation}\label{eqn:counit_first_form}
  \epsilon(K_{\bsub j}(z, x)) = \omega_j(z)\,.
\end{equation}
%

%%%%%%%%%%%%%%%%%%%%%%%%%%%%%%%%%%%%%%%%%%%%%%%%%%%%%%%%%%%%%%%%%%%%%%%%%%%%%%%%
%%%%%%%%%%%%%%%%%%%%%%%%%%%%%%%%%%%%%%%%%%%%%%%%%%%%%%%%%%%%%%%%%%%%%%%%%%%%%%%%

\subsection{Second component form}\label{sec:seccomp}
\begin{definition}\label{def:seccomp}
	We define the \emph{second component form} $K_{\bsub[1] j \bsub[2] k}$, valued in the tensor algebra $\alg{b} \otimes \alg{b} = \alg{b}^{\otimes 2}$, as
\begin{equation}
	\label{eqn:secondcomp}
K_{\bsub[1] j \bsub[2] k}(z, x)=\sum_{r, s \geq 0} \omega_{i_1 \cdots i_r j p_1 \cdots p_s k}(z, x)\, b_{i_1} \cdots b_{i_r} \otimes b_{p_1} \cdots b_{p_s},\quad j,k\in\{1,\ldots,\genus\}.
\end{equation}
\end{definition}
The notation $\bsub[1]$ and $\bsub[2]$ in \eqn{eqn:secondcomp} explains the position of the corresponding algebra generators within the tensor product\footnote{The total number of tensor sites is understood from the context.}. In what follows we will enumerate the algebra label in the first component form too, whenever the position of the generators in a tensor product is not obvious.

Similar to the first component form above, we can calculate the quasi-periodicity and the residue of the second component form. We find that (cf.~\appref{app:compprop})
\begin{equation}
	\label{eqn:seccompquasiz}
		K_{\bsub[1] j \bsub[2] k}\left(\gamma_i z, x\right)=\eunit^{b_i} K_{ \bsub[1] j \bsub[2] k}(z, x)+\delta_{ij}\frac{1\otimes \eunit^{b_i}-\eunit^{b_i}\otimes1}{1\otimes b_i-b_i\otimes 1}K_{\bsub[2] k}(z, x),
\end{equation}
where there is no summation over $i$ and
\begin{equation}\label{eqn:seccompres}
(-2 \pi\iunit) \Res_{z=x}K_{\bsub[1] j \bsub[2] k}(z, x)= \delta_{j k}(1\otimes1).
\end{equation}
A special case of the second component form is obtained when it is acted upon with the counit on the second tensor site 
\begin{equation}\label{eqn:seccompspecial}
  K_{\bsub jk}(z,x) \coloneqq (\id \otimes\, \epsilon) \left(K_{\bsub[1] j \bsub[2] k}(z,x)\right) = \epsilon_2 \left(K_{\bsub[1] j \bsub[2] k}(z,x)\right) = \sum_{r \geq 0} \omega_{i_1 \cdots i_r jk}(z,x)\, b_{i_1} \cdots b_{i_r},
\end{equation}
where we used the notation for a map $m$ that $m_{p_1 \ldots p_n}$ acts non-trivially only on the sites $p_1, \ldots, p_n$. Contraction of the special form in \eqn{eqn:seccompspecial} over $\alg{b}$ yields the first component form: 
\begin{equation}\label{eqn:firstcompassecond}
K_{\bsub k}(z,x) =  K_{\bsub jk}(z,x) b_j.
\end{equation}
Quasiperiodicity $K_{\bsub jk}(\gamma_i z,x)$ for the special form can be easily deduced from \eqn{eqn:seccompquasiz}. The expression for the quasi-periodicity of the full second component form \eqref{eqn:secondcomp}, $K_{\bsub[1] j\bsub[2]k}(z,\gamma_i x)$, is involved and will not be used in the rest of the article. Instead, we note quasi-periodicity in the auxiliary variable for the special form
\begin{equation}
  \label{eqn:seccompquasix}
  K_{\bsub jk}\left(z, \gamma_i x\right)=K_{\bsub jk}(z, x)+\delta_{j k} K_{\bsub i}(z, x) \frac{\eunit^{-b_i}-1}{b_i},
\end{equation}
where again there is no summation over $i$. A detailed derivation of the properties in eqs.~\eqref{eqn:seccompquasiz}, \eqref{eqn:seccompres} and \eqref{eqn:seccompquasix} is noted in \appref{app:compprop}.

Note, that in principle one could define a $k$-th component form valued in $\alg{b}^{\otimes k}$. However, we are not going to consider this possibility here, as the second component form is sufficient for deriving the Fay-like identities in this article.

%%%%%%%%%%%%%%%%%%%%%%%%%%%%%%%%%%%%%%%%%%%%%%%%%%%%%%%%%%%%%%%%%%%%%%%%%%%%%%%%
%%%%%%%%%%%%%%%%%%%%%%%%%%%%%%%%%%%%%%%%%%%%%%%%%%%%%%%%%%%%%%%%%%%%%%%%%%%%%%%%

\subsection{Hopf algebra operations on component forms}\label{sec:second-algebra}

This subsection is dedicated to understanding the additional structures and properties of the component forms arising from applying various Hopf algebra maps. This will allow us to construct variations of the component forms, build relations between them and map between different identities for generating functions.

\paragraph{Antipode.}
One of the tools necessary for the identities in \secref{sec:identities} will be the antipode map. Since it is an anti-homomorphism, application of the antipode on a component form effectively reverses the order of the indices of the integration kernels in the expansion. For the first and second component forms one finds
\begin{subequations}
  \begin{align}
    K_{\Sbsub j}(z, x) \coloneqq& S\left(K_{\bsub j}(z,x)\right) = \sum_{r\ge0} (-1)^r \omega_{i_r \cdots i_1} \,b_{i_1} \cdots b_{i_r},\\
    K_{\Sbsub[1] j\bsub[2] k}(z, x) \coloneqq& S_1\left(K_{\bsub[1] j\bsub[2] k}(z,x)\right) = \sum_{r,s\ge0} (-1)^r \omega_{i_r \cdots i_1 j p_1 \cdots p_s k} \,b_{i_1} \cdots b_{i_r} \otimes b_{p_1} \cdots b_{p_s},\\
    K_{\bsub[1] j\Sbsub[2] k}(z, x) \coloneqq& S_2\left(K_{\bsub[1] j\bsub[2] k}(z,x)\right) = \sum_{r,s\ge0} (-1)^s \omega_{i_1 \cdots i_r j p_s \cdots p_1 k} \,b_{i_1} \cdots b_{i_r} \otimes b_{p_1} \cdots b_{p_s}.
  \end{align}
\end{subequations}

\paragraph{Coproduct.} Another useful set of results comes from applying the coproduct on the component forms. By explicitly evaluating the coproduct of the first component form \eqn{eqn:firstcomp}, one finds
\begin{equation}\label{eqn:firstcompshuffle}
  \begin{aligned}
  K_{\Dbsub[12]  j}(z,x) \coloneqq& \Delta\left(K_{\bsub j}(z,x)\right) = \sum_{r \geq 0} \omega_{i_1 \cdots i_r j}(z, x)\, \Delta(b_{i_1}) \cdots \Delta(b_{i_r}) \\
  =& \sum_{r,s \geq 0} \omega_{(i_1 \cdots i_r\shuffle p_1 \cdots p_s) j}(z, x) \, b_{i_1} \cdots b_{i_r} \otimes b_{p_1} \cdots b_{p_s} \,,
  \end{aligned}
\end{equation}
where $\shuffle$ denotes the shuffle product\footnote{The shuffle of two ordered sets $P$ and $Q$ is obtained by selecting from Perm$(P\cap Q)$ those permutations, which keep the individual orderings of the sets $P$ and $Q$ intact. If $P$ and $Q$ are index sets, a summation over the objects indexed by the shuffles is implied.}. 
Here we used the subscript for the label $\Dbsub$ in order to denote the position of the generators in a tensor product \emph{after} application of the coproduct. 
Acting with the coproduct $\Delta$ on either of the tensor sites in \eqn{eqn:secondcomp} yields
\begin{subequations}
\begin{align}\label{eqn:secondcompshuffle-1}
  K_{\Dbsub[12] j\bsub[3] k}(z,x) \coloneqq& \Delta_1 \left(K_{\bsub[1] j\bsub[2] k}(z,x)\right) \nonumber \\
  =& \sum_{r,s,l \geq 0} \omega_{(i_1 \cdots i_r \shuffle j_1 \cdots j_s )j p_1 \cdots p_l k}  \,b_{i_1} \cdots b_{i_r} \otimes b_{j_1} \cdots b_{j_s} \otimes b_{p_1} \cdots b_{p_l}, \\
  K_{\bsub[1] j\Dbsub[23] k}(z,x) \coloneqq& \Delta_2 \left(K_{\bsub[1] j\bsub[2] k}(z,x)\right) \nonumber \\
  =& \sum_{r,s,l \geq 0} \omega_{i_1 \cdots i_r j ( j_1 \cdots j_s \shuffle p_1 \cdots p_l) k}  \,b_{i_1} \cdots b_{i_r} \otimes b_{j_1} \cdots b_{j_s} \otimes b_{p_1} \cdots b_{p_l}.
\end{align}
\end{subequations}
The resulting forms live in the triple tensor product $\alg{b}^{\otimes 3}$. When exploring quadratic identities in \secref{sec:quadraticidentities} below, we will have to consider the double product $\alg{b}^{\otimes 2}$ only, to which the above forms can be mapped to by applying the product map $\mu$ to any two of the three tensor sites.

Quasi-periodicities and residues for the above objects can be deduced from linearity of the coproduct. An example containing a similar calculation is shown in \appref{app:compprop}.

\paragraph{Product and permutation.} Application of the permutation operator $\permop_{pq}$ and the product operator $\mu_{pq}$ is evident: the former permutes tensor sites $p$ and $q$, while the latter concatenates the generators at sites $p$ and $q$ (only for neighboring sites $q\,{=}\,p\,{+}\,1$). The notation with enumerated algebra labels $\bsub[i]$ allows to define
\begin{equation} \label{eqn:forms_sites_notation}
	\begin{aligned}
		K_{\bsub[2]j\bsub[1]k} &\coloneqq \permop_{12} K_{\bsub[1] j \bsub[2] k},\\
		K_{\bsub[1]j\bsub[1]k} &\coloneqq \mu_{12} K_{\bsub[1] j \bsub[2] k}.
	\end{aligned}
\end{equation}
Notice a possible ambiguity: despite yielding different results, the action of $\mu_{12}$ and $\mu_{12} \circ \permop_{12}$ cannot be distinguished with this notation. Throughout this article the concatenation of algebra generators will be assumed to be in the order, in which the corresponding algebra labels appear.

\paragraph{Contractions of component forms.} We have already seen in \eqn{eqn:firstcompassecond} that the second component form is related to the first component form upon acting with the counit and contracting the first free index $j$ with a corresponding algebra generator $b_j$. Combining the coproduct map together with contractions gives rise to more interesting results, summarized in the following proposition.

\begin{proposition}
  The following contraction identities for the first and second component forms hold:
  \footnote{Using the notation \eqref{eqn:forms_sites_notation} the r.h.s. of \eqn{eq:contraction_id_1} is abbreviated as 
  \[
  \mu_{12} \circ \permop_{23} \left(K_{\Dbsub[12]j\bsub[3]k}(z,x)\right) = \mu_{12} \left(K_{\Dbsub[13]j\bsub[2]k}(z,x)\right) = K_{\Dbsub[12]j\bsub[1]k}(z,x).
  \]}
  \begin{subequations}\label{eqns:contractions}
  	\begin{align}
  		\label{eq:contraction_id_1}
  		K_{\Dbsub[12] k}(z,x) - K_{\bsub[1]k}(z,x) &= K_{\Dbsub[12]j\bsub[1]k}(z,x)(1\otimes b_j) , \\
  		\label{eqn:contractionID2}
  		K_{\Dbsub[12] k}(z,x) - K_{\bsub[1]k}(z,x) &=(1 \otimes b_j) K_{\bsub[1]j\Dbsub[12]k}(z,x),\\
  		\label{eqn:contraction_id_2}
  		K_{\bsub[2]k}(z,x)  -  K_{\bsub[1]k}(z,x) &= (1\otimes b_j) K_{\bsub[1] j\bsub[2] k}(z,x)-K_{\bsub[1] j\bsub[2] k}(z,x)(b_j \otimes 1),
  	\end{align}
  \end{subequations}
  where in each line there is an implicit summation over the repeated index $j$. 
\end{proposition}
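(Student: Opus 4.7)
The strategy is to expand both sides of each identity in the basis $\{b_X\otimes b_Y\}$ of $\alg{b}^{\otimes 2}$, where $X,Y$ range over words in the generators $\{b_i\}_{i=1}^\genus$, and match the scalar coefficients (which are integration kernels $\omega$). Using the defining expansions \eqref{eqn:firstcomp} and \eqref{eqn:secondcomp} together with the coproduct formulas \eqref{eqn:firstcompshuffle} and \eqref{eqn:secondcompshuffle-1}, each identity then reduces to a short combinatorial statement about shuffles.

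For \eqref{eq:contraction_id_1}, the subtraction $K_{\Dbsub[12] k}-K_{\bsub[1] k}$ simply kills the terms of \eqref{eqn:firstcompshuffle} whose site-$2$ word is empty. On the right-hand side, tracing $K_{\Dbsub[12]j\bsub[1]k}=\mu_{12}\circ\permop_{23}\circ\Delta_1\,K_{\bsub[1]j\bsub[2]k}$ through the coproduct, permutation, and product, and then multiplying by $(1\otimes b_j)$ to append the letter $b_j$ at the end of the site-$2$ word, one reads off for each $X,Y$ with $Y$ non-empty the coefficient $\sum_{X=IP}\omega_{(I\shuffle J')\cdot j\cdot P\cdot k}$, where the factorisation $Y=J'j$ splits off the final letter. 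Matching with the left-hand side then reduces to the elementary shuffle identity
\[
X\shuffle Y\;=\;\sum_{X=IP}(I\shuffle J')\cdot j\cdot P\qquad(Y=J'j),
\]
which follows by classifying each shuffle of $X$ with $Y$ according to the position of the last letter $j$ of $Y$: everything to its right must come from $X$ (giving the suffix $P$), and everything to its left is a shuffle of the complementary prefix of $X$ with $J'$. Identity \eqref{eqn:contractionID2} is proved by the symmetric argument, splitting off the \emph{first} letter of $Y$ in $K_{\bsub[1]j\Dbsub[12]k}(1\otimes b_j)$ read from the left, instead of the last.

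For \eqref{eqn:contraction_id_2} the proof is even more direct. Using \eqref{eqn:secondcomp}, the coefficient of $b_X\otimes b_Y$ in $(1\otimes b_j)K_{\bsub[1]j\bsub[2]k}$ equals $\omega_{X\cdot Y\cdot k}$ whenever $Y$ is non-empty (absorb the $b_j$ as the first letter of $Y$), while the coefficient in $-K_{\bsub[1]j\bsub[2]k}(b_j\otimes 1)$ equals $-\omega_{X\cdot Y\cdot k}$ whenever $X$ is non-empty (absorb the $b_j$ as the last letter of $X$). These two contributions cancel whenever both $X$ and $Y$ are non-empty, and the boundary cases $Y=\emptyset$ and $X=\emptyset$ reproduce precisely the expansion $K_{\bsub[2]k}-K_{\bsub[1]k}=\sum_{I}\omega_{Ik}(1\otimes b_I-b_I\otimes 1)$ of the left-hand side.

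The only real obstacle is notational: the compound subscripts such as $\Dbsub[12]j\bsub[1]k$ and $\bsub[1]j\Dbsub[12]k$ each encode a specific composition of Hopf-algebra operations, and one must track carefully which letters land at which tensor site, and in which order they concatenate under $\mu$, before extracting coefficients. Once this bookkeeping is done, the remaining combinatorics is elementary shuffle algebra, and the three identities follow uniformly.
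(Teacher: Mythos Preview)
Your proposal is correct and follows essentially the same approach as the paper: expand both sides in the tensor basis and match coefficients via the elementary shuffle recursion. The paper only spells out \eqref{eq:contraction_id_1} explicitly (starting from the right-hand side and absorbing the extra index into the shuffle, which is precisely your identity $X\shuffle Y=\sum_{X=IP}(I\shuffle J')\cdot j\cdot P$) and declares the other two ``completely analogous''; one small slip in your write-up is that for \eqref{eqn:contractionID2} the factor $(1\otimes b_j)$ sits on the \emph{left}, not the right, but your ``split off the first letter of $Y$'' description already accounts for this correctly.
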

\begin{proof}
By direct calculation, one can prove identity \eqref{eq:contraction_id_1}:
	\begin{align}
		K_{\Dbsub[12]j\bsub[1]k}(z,x)(1\otimes b_j) &= 
    \sum_{j=1}^h\sum_{r,s\geq0}\sum_{l=0}^r\omega_{(i_1\cdots i_l\shuffle p_1\cdots p_s)ji_{l+1}\cdots i_rk}(z,x)\,b_{i_1}\cdots b_{i_r}\otimes b_{p_1}\cdots{b}_{p_s}{b}_j \nonumber \\
		&= \sum_{r,s\geq0}\sum_{l=0}^r\omega_{(i_1\cdots i_l\shuffle p_1\cdots p_s)p_{s+1}i_{l+1}\cdots i_rk}(z,x)\,b_{i_1}\cdots b_{i_r}\otimes{b}_{p_1}\cdots{b}_{p_s}{b}_{p_{s+1}} \nonumber \\
		&= \sum_{r,s\geq0}\omega_{(i_1\cdots i_r\shuffle p_1\cdots p_sp_{s+1})k}(z,x)\,b_{i_1}\cdots b_{i_r}\otimes{b}_{p_1}\cdots{b}_{p_s}{b}_{p_{s+1}} \nonumber \\
		&= \sum_{\substack{r\geq0\\s\geq1}}\omega_{(i_1\cdots i_r\shuffle p_1\cdots p_s)k}(z,x)\,b_{i_1}\cdots b_{i_r}\otimes{b}_{p_1}\cdots\otimes{b}_{p_s} \nonumber \\
		&= K_{\Dbsub[12] k}(z,x) - K_{\bsub[1]k}(z,x)  , %  \label{eq:contraction_id_1}
	\end{align}
	where we have used the definition \eqref{eqn:secondcomp} together with \eqn{eqn:secondcompshuffle-1} in the first line, relabeled $j\rightarrow p_{s+1}$ in the second line, got rid of the sum over $l$ by absorbing all indices in the shuffle-product in the third line and relabeled $s\rightarrow s+1$ in the fourth line. Finally, we used \eqn{eqn:firstcompshuffle} together with the definition \eqref{eqn:firstcomp} in the last line.\\
	In a completely analogous way one can proof the contraction identities \eqref{eqn:contractionID2} and \eqref{eqn:contraction_id_2}.
\end{proof}
The identities \eqref{eqns:contractions} can be reformulated in terms of identities of Hopf algebra maps. While we refrain from showing them explicitly, since it is not very enlightening, this implies that the relations are merely consequences of combinatorics and not features of the component forms. In particular, these identities do not yield any interesting identities between the integration kernels. However, they will turn out to be useful tools for simplifying complicated expressions as for example~\eqn{eqn:fay-like1} when showing that \eqn{eqn:HigherGenusFay} reduces to the Fay identity at genus one.

For completing this technical section, let us note the echo of relations \eqref{eqs:linearid} for the integration kernels in terms of component forms. We find%
\begin{subequations}
  \label{eqn:linearidfunc}
\begin{align}
    \label{eqn:nonmatchinggenfunc}
      K_{\bsub jk}(z,x)&=K_{\bsub jk}(z,y),\quad\text{for }j\neq k, \\
      \label{eqn:matchinggenfunc}
      K_{\bsub jj'}(z,x)\big|_{j'=j}-K_{\bsub kk'}(z,x)\big|_{k'=k}&=K_{\bsub jj'}(z,y)\big|_{j'=j}-K_{\bsub kk'}(z,y)\big|_{k'=k}
\end{align}
\end{subequations}
for the component forms. Eqs.~\eqref{eqn:nonmatchinggenfunc},\eqref{eqn:matchinggenfunc} and \eqref{eqn:firstcompassecond} also imply the relation
\begin{equation}
  \label{eqn:pole-independencegenfunc}
  K_{\bsub l}(y,z) - K_{\bsub l}(y,x) = \left(K_{\bsub ll'}(y,z)-K_{\bsub ll'}(y,x)\right){b}_{l''}\Big|_{l''=l'=l} ,\quad\text{for }l\in\{1,\ldots,\genus\} .
\end{equation}
Tracing the above relations back to the corresponding kernel relations \eqref{eqs:linearid} follows immediately when performing an expansion order by order.

%%%%%%%%%%%%%%%%%%%%%%%%%%%%%%%%%%%%%%%%%%%%%%%%%%%%%%%%%%%%%%%%%%%%%%%%%%%%%%%%
%%%%%%%%%%%%%%%%%%%%%%%%%%%%%%%%%%%%%%%%%%%%%%%%%%%%%%%%%%%%%%%%%%%%%%%%%%%%%%%%
%%%%%%%%%%%%%%%%%%%%%%%%%%%%%%%%%%%%%%%%%%%%%%%%%%%%%%%%%%%%%%%%%%%%%%%%%%%%%%%%
%%%%%%%%%%%%%%%%%%%%%%%%%%%%%%%%%%%%%%%%%%%%%%%%%%%%%%%%%%%%%%%%%%%%%%%%%%%%%%%%

\section{Identities for component forms}\label{sec:identities}

Employing the definitions and properties of the component forms introduced in \secref{sec:componentforms}, we can now explore the space of possible identities. Linear 3-point identities have been shown to be trivially related to the kernel identities in \eqns{eqn:linearidfunc}{eqn:pole-independencegenfunc}. After stating a differential 2-point identity in \secref{sec:differentialidentities}, we are going to consider quadratic 3-point identities for component forms in \secref{sec:quadraticidentities}. In \secref{sec:kernelidentities}, we discuss the corresponding identities once expanded into integration kernels. The quadratic 3-point identities will be shown to qualify as valid generalizations of the genus-one Fay identities in \secref{sec:genusone}. 

%%%%%%%%%%%%%%%%%%%%%%%%%%%%%%%%%%%%%%%%%%%%%%%%%%%%%%%%%%%%%%%%%%%%%%%%%%%%%%%%
%%%%%%%%%%%%%%%%%%%%%%%%%%%%%%%%%%%%%%%%%%%%%%%%%%%%%%%%%%%%%%%%%%%%%%%%%%%%%%%%

\subsection{Differential 2-point identities}\label{sec:differentialidentities}

The first component form in \defref{eqn:firstcomp} takes values in the algebra $\mathfrak{b}$. As visible in \eqn{eqn:firstcomp}, the associated kernels are one-forms in the first variable and ordinary functions in the second variable. Consequently, one can apply the exterior derivative to the second variable of the first component form. Using the results from ref.~\cite[Lemma 8]{EnriquezHigher}, one finds for the integration kernels
\begin{equation}
\label{eqn:diffid}
  \dd_x\omega_{i_1\ldots i_rjk}(z,x) = (-1)^r\dd_z\omega_{i_r\ldots i_1jk}(x,z)\,,
\end{equation}
which has already been reported in \cite[Corollary~9.5]{DS:Fay}.
Written in terms of component forms
\begin{equation}\label{eqn:diffidentity}
  \dd_x K_{\bsub jk}(z,x)=  \dd_z K_{\Sbsub[]jk}(x,z)
\end{equation}
it constitutes a neat example for the interplay between the two different notions of component forms. The above identity will be put to use in particular in 
\secref{sec:funcrel}, where identities between higher-genus polylogarithms are considered.

%%%%%%%%%%%%%%%%%%%%%%%%%%%%%%%%%%%%%%%%%%%%%%%%%%%%%%%%%%%%%%%%%%%%%%%%%%%%%%%%
%%%%%%%%%%%%%%%%%%%%%%%%%%%%%%%%%%%%%%%%%%%%%%%%%%%%%%%%%%%%%%%%%%%%%%%%%%%%%%%%

\subsection{Quadratic 3-point identities}\label{sec:quadraticidentities}

In this subsection, we are going to identify quadratic combinations of component forms that depend on three points on the manifold, $\mathcal Q(z,y,x)\in \alg{b}\otimes \alg{b}$. These expressions will be one-forms in each of $z$ and $y$, and functions in $x$. Provided that $\mathcal Q$ vanishes for any combination of $z,y,x \in \RSurf$, and expanding $\mathcal Q$ in the generators of $\alg{b} \otimes \alg{b}$, one finds
\begin{subequations}
  \begin{align}
   0 &\equiv \mathcal Q(z,y,x) = \sum_{r,s \geq 0}  \kappa_{i_1 \cdots i_r,p_1 \cdots p_s}(z,y,x)\, b_{i_1} \cdots b_{i_r} \otimes b_{p_1} \cdots b_{p_s}, \\
   0 &\equiv \kappa_{i_1 \cdots i_r,p_1 \cdots p_s}(z,y,x), \quad \forall i_1, \ldots, i_r,p_1, \ldots, p_s \in \{1,\ldots,\genus\},
  \end{align}
\end{subequations}
where the coefficients $\kappa$ are quadratic combinations of integration kernels from Enriquez' connection in \eqn{eqn:generatingfunction}. 

\subsubsection{Index swapping identities}
\label{sec:indexswap}

Starting from relations \eqn{eqn:linearidfunc} and \eqn{eqn:pole-independencegenfunc}, one can derive so-called \emph{index-swapping identities} relating first and second component forms. These identities will be used extensively throughout the later sections.

\begin{lemma} \label{lem:indexswap1}
Let $k,l\in\{1,\ldots,\genus\}$ and consider the expression
\begin{equation}
   \label{eqn:indexswap}
     \cR_{kl}(z,y,x)=K_{\bsub[1]j}(z,x) \Big(K_{\bsub[2]jk}(y,z)-K_{\bsub[2]jk}(y,x) -\delta_{jk}K_{\bsub[2]lm}(y,z)+\delta_{jk}K_{\bsub[2]lm}(y,x)\Big)\Big|_{m=l}.
\end{equation}
Then $\cR_{kl}(z,y,x)=0$ for all pairwise distinct points $z,y,x\in\RSurf$.
\end{lemma}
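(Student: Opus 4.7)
The plan is to split the implicit summation over $j$ in the definition of $\cR_{kl}(z,y,x)$ into the two cases $j\neq k$ and $j=k$, and to handle each of them using one of the linear identities \eqref{eqn:nonmatchinggenfunc} and \eqref{eqn:matchinggenfunc}. Since the first factor $K_{\bsub[1]j}(z,x)$ sits in the first tensor site and the bracketed combination of special second component forms in the second, it suffices to show that the bracket already vanishes at each value of $j$ separately.

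For $j\neq k$ the Kronecker delta annihilates the two $\delta_{jk}$ contributions, leaving $K_{\bsub[2]jk}(y,z)-K_{\bsub[2]jk}(y,x)$, which is zero by the non-matching pole-independence relation \eqref{eqn:nonmatchinggenfunc}. For $j=k$, after setting $m=l$, the bracket reduces to $\bigl[K_{\bsub[2]kk}(y,z)-K_{\bsub[2]kk}(y,x)\bigr]-\bigl[K_{\bsub[2]ll}(y,z)-K_{\bsub[2]ll}(y,x)\bigr]$. I would then appeal to the matching relation \eqref{eqn:matchinggenfunc}, applied with first argument $y$ and base points $z$ and $x$ and rearranged into the equivalent statement that $K_{\bsub jj'}(y,z)|_{j'=j}-K_{\bsub jj'}(y,x)|_{j'=j}$ is independent of $j$. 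Taking this common value at $j=k$ and at $j=l$ and subtracting gives zero, which is precisely the vanishing of the bracket in the $j=k$ case.

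Combining both cases, every term in the sum over $j$ vanishes, so $\cR_{kl}(z,y,x)=0$. There is no real obstacle: the combination in the bracket has been engineered so that the linear identities \eqref{eqn:linearidfunc} collapse it termwise, and the result should be regarded as a convenient repackaging of those identities tailored to feed into the quadratic 3-point identities derived in the remainder of this section.
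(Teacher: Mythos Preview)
Your proof is correct and follows essentially the same approach as the paper: both arguments use \eqref{eqn:nonmatchinggenfunc} to kill the $j\neq k$ contributions and then \eqref{eqn:matchinggenfunc} to identify the $j=k$ term with the $\delta_{jk}K_{\bsub[2]lm}$ counterterm. The only difference is cosmetic---you split the sum over $j$ into explicit cases, whereas the paper keeps the sum intact and inserts a $\delta_{jk}$ to carry out the same two steps.
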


\begin{proof}
We start with the expression $K_{\bsub[1]j}(z,x)\left[K_{\bsub[2]jk}(y,z)-K_{\bsub[2]jk}(y,x)\right]$ and use \eqn{eqn:nonmatchinggenfunc} to conclude that
\begin{equation}
  K_{\bsub[1]j}(z,x)\left[K_{\bsub[2]jk}(y,z)-K_{\bsub[2]jk}(y,x)\right]=K_{\bsub[1]j}(z,x)\delta_{jk}\left[K_{\bsub[2]km}(y,z)-K_{\bsub[2]km}(y,x)\right]\at_{m=k}  .
\end{equation}
As a next step, we use \eqn{eqn:matchinggenfunc} to arrive at
\begin{equation}
  K_{\bsub[1]j}(z,x)\left[K_{\bsub[2]jk}(y,z)-K_{\bsub[2]jk}(y,x)\right]=K_{\bsub[1]j}(z,x)\delta_{jk}\left[K_{\bsub[2]lm}(y,z)-K_{\bsub[2]lm}(y,x)\right]\at_{m=l} ,
\end{equation}
which immediately yields $\cR_{kl}(z,y,x)=0$.
\end{proof}
We will also need a similar index swapping identity for component forms involving the coproduct.
\begin{lemma}
  Let $k\in\{1,\ldots,\genus\}$ and define
  \begin{equation}
    \begin{aligned}
      \label{eq:indexswap2}
        \cR_k(z,y,x) =& (K_{\bsub[2]j}(y,z) - K_{\bsub[2]j}(y,x)) \, K_{\bsub[1]j\Dbsub[12]k}(z,x) \\
        &- \left(K_{\bsub[2]jk}(y,z)-K_{\bsub[2]jk}(y,x)\right)\left(K_{\Dbsub[12]j}(z,x)-K_{\bsub[1]j}(z,x) \right).
    \end{aligned}
  \end{equation}
  Then $\cR_{k}(z,y,x)=0$ for all pairwise distinct $z,y,x\in\RSurf$.
\end{lemma}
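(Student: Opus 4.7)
The plan is to rewrite each of the two terms defining $\cR_k(z,y,x)$ into a common expression, so that their difference vanishes identically. Concretely, I will aim to bring both terms to the form $D\cdot[K_{\Dbsub[12]k}(z,x)-K_{\bsub[1]k}(z,x)]$, where $D$ denotes the quantity $[K_{\bsub[2]ll'}(y,z)-K_{\bsub[2]ll'}(y,x)]_{l'=l}\in 1\otimes\alg{b}$, which, by the matching identity \eqref{eqn:matchinggenfunc} placed at tensor site $2$, is independent of the choice of $l\in\{1,\ldots,\genus\}$.

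For the first term, I would apply the pole-independence identity \eqref{eqn:pole-independencegenfunc} at tensor site $2$ to the factor $K_{\bsub[2]j}(y,z)-K_{\bsub[2]j}(y,x)$, rewriting it summand by summand in $j$ as $D\cdot(1\otimes b_j)$. Since $D$ does not depend on $j$, it can be pulled out of the (implicit) sum over $j$, leaving $D\cdot\sum_j(1\otimes b_j)\,K_{\bsub[1]j\Dbsub[12]k}(z,x)$. The remaining sum is exactly the right-hand side of the contraction identity \eqref{eqn:contractionID2}, which reduces the first term of $\cR_k$ to $D\cdot[K_{\Dbsub[12]k}(z,x)-K_{\bsub[1]k}(z,x)]$.

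For the second term, the non-matching identity \eqref{eqn:nonmatchinggenfunc} forces $K_{\bsub[2]jk}(y,z)-K_{\bsub[2]jk}(y,x)$ to vanish whenever $j\ne k$, collapsing the $j$-sum to its diagonal contribution $j=k$. On the diagonal, \eqref{eqn:matchinggenfunc} identifies this quantity with $D$, so the second term also reduces to $D\cdot[K_{\Dbsub[12]k}(z,x)-K_{\bsub[1]k}(z,x)]$. The two contributions therefore cancel.

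The main obstacle will be notational rather than conceptual: one has to carefully track which tensor site each generator and each component form sits at, and maintain the correct non-commutative ordering in $\alg{b}^{\otimes 2}$ when factoring $D$ out of the sum over $j$. No identity beyond those already established in Sections \ref{sec:componentforms} and \ref{sec:identities} is required; in particular, the structure of the proof closely parallels that of \lemref{lem:indexswap1}, with \eqref{eqn:contractionID2} playing the role that the definition of $K_{\bsub[2]j}$ plays there.
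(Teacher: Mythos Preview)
Your proposal is correct and follows essentially the same approach as the paper: both rewrite the first term via \eqref{eqn:pole-independencegenfunc} and \eqref{eqn:matchinggenfunc} to extract the common factor $D=[K_{\bsub[2]ll'}(y,z)-K_{\bsub[2]ll'}(y,x)]|_{l'=l}$, reduce the second term via \eqref{eqn:nonmatchinggenfunc} and \eqref{eqn:matchinggenfunc} to the same $D$ times $K_{\Dbsub[12]k}-K_{\bsub[1]k}$, and use the contraction identity \eqref{eqn:contractionID2}. The only cosmetic difference is that you apply \eqref{eqn:contractionID2} to the first term before comparing, whereas the paper subtracts first and then applies \eqref{eqn:contractionID2} to the difference.
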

\begin{proof}
	We can apply \eqns{eqn:matchinggenfunc}{eqn:pole-independencegenfunc} to obtain
  \begin{align}
    \Big[K_{\bsub[2]j}(y,z)-K_{\bsub[2]j}&(y,x)\Big]K_{\bsub[1]j\Dbsub[12]k}(z,x) \notag\\
    &= \Big[K_{\bsub[2]ll'}(y,z)-K_{\bsub[2]ll'}(y,x)\Big]\Big|_{l'=l}(1 \otimes b_j)K_{\bsub[1]j\Dbsub[12]k}(z,x)
  \end{align}
  for any $l\in\{1,\ldots,\genus\}$. On the other hand, we can use \eqn{eqn:nonmatchinggenfunc} and \eqn{eqn:matchinggenfunc} to conclude
    \begin{align}
      \Big[K_{\bsub[2]jk}(y,z)-K_{\bsub[2]jk}&(y,x)\Big]\Big[K_{\Dbsub[12]j}(z,x)-K_{\bsub[1]j}(z,x)\Big]\notag\\
      &=\Big[K_{\bsub[2]ll'}(y,z)-K_{\bsub[2]ll'}(y,x)\Big]\Big|_{l'=l}\Big[K_{\Dbsub[12]k}(z,x)-K_{\bsub[1]k}(z,x)\Big] .
    \end{align}
  Subtracting the above two equations from each other yields
  \begin{align}
    &\cR_k(z,y,x)\notag\\
    &=\Big[K_{\bsub[2]ll'}(y,z)-K_{\bsub[2]ll'}(y,x)\Big]\Big|_{l'=l}\Big[(1 \otimes b_j)K_{\bsub[1]j\Dbsub[12]k}(z,x)-\left(K_{\Dbsub[12]k}(z,x)-K_{\bsub[1]k}(z,x)\right)\Big]  ,
  \end{align}
  where we conclude that the r.h.s.~does indeed vanish by means of the contraction identity \eqref{eqn:contractionID2}, which completes the proof.
\end{proof}
Notice that these identities are trivial in the sense that they follow directly from the properties \eqref{eqn:linearidfunc} and \eqref{eqn:pole-independencegenfunc}. However, the form derived here will prove very useful for transforming identities in later sections.

\subsubsection{The quadratic Fay-like identity}

In this section, we state and prove a quadratic $3$-point identity among the component forms. We call these identities \emph{Fay-like}, which is supported by the fact that we show in \secref{sec:genusone} that this identity reduces to the well-known Fay identity for the elliptic Kronecker function upon restriction to genus one. However, in contrast to the elliptic case, it is currently not clear if and how this generalized Fay-like identity may connect to the Fay trisecant equation for higher-genus theta functions \cite{fay,mumford1984tata}.
\begin{theorem}[Fay-like identity]\label{thm:quadfay}

	Let $k,l,m \in\{1,\ldots,\genus\}$ and $K_{\bsub j}(z,x)$, $K_{\bsub[1]j\bsub[2]k}(z,x)$ be the first and second component form as in Definitions \ref{def:firstcomp} and \ref{def:seccomp}, respectively. Define
\begin{align} \label{eqn:HigherGenusFay}
	\QFay_{klm}(z,y,x) = \qFay_{klm}(z,y,x)+ \permop_{12}(\qFay_{kml}(y,z,x)).
\end{align}
where 
\begin{equation}
  \begin{aligned}
  \qFay_{klm}(z,y,x)=&\left[K_{\bsub[1]l}(z,x)- K_{\bsub[1]l}(z,y)\right] K_{\bsub[2]k}(y,x) (1\otimes b_m) \\
  &-K_{\bsub[1]j}(z,y) K_{\bsub[2]j\Dbsub[12]k}(y,x)(b_l \otimes b_m) \, .
  \end{aligned}
\end{equation}
Then $\QFay_{klm}(z,y,x) = 0$ for all pairwise distinct points $z,y,x\in\RSurf$.
\end{theorem}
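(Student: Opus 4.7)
The plan is to prove $\QFay_{klm}(z,y,x)=0$ by a uniqueness-style argument modelled on the characterization of Enriquez' connection in \defref{def:connection}: show that, viewed as a meromorphic one-form in each of $z$ and $y$ with values in $\alg{b}\otimes\alg{b}$, the quantity $\QFay_{klm}$ is both holomorphic in each variable and \emph{truly} (not just quasi-)periodic under all $\gamma_i$. Together, these force it to be a constant linear combination of $\omega_i(z)\,\omega_j(y)$, after which the linear identities of \eqn{eqn:linearidfunc} and \eqn{eqn:pole-independencegenfunc} pin down the coefficients to zero.

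The first step is pole analysis. The first component form $K_{\bsub[1]l}(z,\cdot)$ has a simple pole with residue $b_l/(-2\pi\iunit)$ by \eqn{eqn:firstcompres}, and the coproduct-split second component form $K_{\bsub[2]j\Dbsub[12]k}(y,x)$ inherits its pole at $y=x$ with residue proportional to $\delta_{jk}$ from \eqn{eqn:seccompres} via linearity of $\Delta$. The very design of $\qFay_{klm}$ -- in particular the difference $K_{\bsub[1]l}(z,x)-K_{\bsub[1]l}(z,y)$ in the first line -- is tailored so that the residues in $z$ at $z=x$ and at $z=y$ cancel against those arising from the second line; adding the $\permop_{12}$-partner disposes of the residues at $y=x$ and $y=z$ by the same mechanism. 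Hence $\QFay_{klm}$ is holomorphic in both $z$ and $y$.

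The second and heaviest step is periodicity. Under $z\to\gamma_iz$, equation~\eqref{eqn:firstcompquasiz} multiplies every first-component factor on tensor site one by $\eunit^{b_i}$. Writing $\eunit^{b_i}=1+(\eunit^{b_i}-1)$ isolates a correction that must cancel against the quasi-periodic response of $K_{\bsub[2]k}(y,z)$ and of the coproduct-split second component form appearing in $\permop_{12}(\qFay_{kml}(y,z,x))$. Reorganising these corrections so that they telescope is the main obstacle: \eqn{eqn:seccompquasiz} mixes the second component form with the first one, and composing this with the coproduct implicit in $K_{\bsub[2]j\Dbsub[12]k}$ requires careful tracking of which tensor site each algebra generator ends up on. The tools tailored for this bookkeeping are the index-swapping \lemref{lem:indexswap1} together with its coproduct variant in \eqn{eq:indexswap2}, and the contraction identities \eqn{eqns:contractions}; I expect the cancellation to become transparent once the power series of $\eunit^{b_i}-1$ is paired with the shuffle-coproduct of $b$-letters in $K_{\bsub[2]j\Dbsub[12]k}$. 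Full periodicity in $y$ follows from the $\permop_{12}$-symmetry built into \eqn{eqn:HigherGenusFay}.

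With holomorphy and true periodicity in both $z$ and $y$, the bidifferential $\QFay_{klm}(\cdot,\cdot,x)$ is forced to be a constant combination $\omega_i(z)\,\omega_j(y)\,C_{ij}(x)$ with coefficients $C_{ij}(x)\in\alg{b}\otimes\alg{b}$. Applying the counits $\epsilon_1$ and $\epsilon_2$ to the two tensor sites, using \eqn{eqn:counit_first_form}, reduces the identity $\QFay_{klm}=0$ to a finite collection of relations among holomorphic differentials and first component forms, which already follow from \eqn{eqn:linearidfunc} and \eqn{eqn:pole-independencegenfunc}. This final reduction pins the $C_{ij}(x)$ to zero and completes the argument.
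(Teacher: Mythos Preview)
Your overall strategy---holomorphicity plus a periodicity-type constraint forcing vanishing---is the right shape, and your residue analysis in the first step is correct and matches the paper's computation in \appref{app:fayprop}.

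The gap is in your second step. The combination $\QFay_{klm}$ is \emph{not} truly periodic under $z\to\gamma_iz$; one finds instead the quasi-periodicity
\[
\QFay_{klm}(\gamma_iz,y,x)=(\eunit^{b_i}\otimes 1)\,\QFay_{klm}(z,y,x),\qquad
\QFay_{klm}(z,\gamma_iy,x)=(1\otimes\eunit^{b_i})\,\QFay_{klm}(z,y,x).
\]
In $\qFay_{klm}(z,y,x)$ every factor on tensor site one carrying $z$ is a first component form $K_{\bsub[1]\bullet}(z,\cdot)$, and by \eqn{eqn:firstcompquasiz} all of these pick up the \emph{same} left factor $\eunit^{b_i}$; there is nothing for the $(\eunit^{b_i}-1)$-correction you isolate to cancel against. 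The $\permop_{12}$-partner is more intricate---here the $z$-shift acts through \eqn{eqn:firstcompquasix} and \eqn{eqn:seccompquasiz}---but the extra pieces cancel among themselves via the contraction identity \eqn{eqn:contractionID2}, again leaving only the overall factor $\eunit^{b_i}\otimes 1$. So your conclusion that $\QFay_{klm}=\omega_i(z)\,\omega_j(y)\,C_{ij}(x)$ is not justified, and the counit/linear-identity endgame you propose does not get off the ground.

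The fix is precisely what the paper does: with holomorphicity and the quasi-periodicity above, the $z$-primitive $\Omega_{klm}(z,y,x)=\int_{z_0}^z\QFay_{klm}$ is, for fixed $z$ and $x$, an $\alg{b}\otimes\alg{b}$-valued holomorphic one-form in $y$ with exactly the quasi-periodicity factor $\eunit^{b_i}$ that characterises Enriquez' connection in \eqn{eqn:constraints}, but now with zero residue. The uniqueness argument behind \defref{def:connection} (spelled out in \appref{app:hopfproof}) then forces $\Omega_{klm}\equiv 0$, and differentiating in $z$ gives $\QFay_{klm}\equiv 0$. No separate reduction via counits or linear identities is needed.
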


\begin{proof}
  The first step of the proof is to study the analytical behavior of $\QFay_{klm}$. In \appref{app:fayprop}, we show explicitly that we have
  \begin{equation}\label{eqn:quadraticidentity-properties}
    \QFay_{klm}(\gamma_i z,y,x) = (\eunit^{b_i}\otimes1) \QFay_{klm}(z,y,x), \quad \QFay_{klm}(z,\gamma_i y,x) = (1\otimes \eunit^{b_i}) \QFay_{klm}(z,y,x),
  \end{equation}
  and that the combination of terms is holomorphic in the variables $z$, $y$ and $x$.

Suppose we integrate with respect to the variable $z$, defining a one-form $\Omega_{klm}(z,y,x)$,
\begin{equation}\label{eqn:QFayIntegration}
    \Omega_{klm}(z,y,x) = \int_{z'=z_0}^z \QFay_{klm}(z',y,x),
  \end{equation}
where $z_0$ is an arbitrary fixed point on the surface, and the integral is homotopy-invariant since $\QFay_{klm}$ is holomorphic in $z$. Fixing arbitrary $z$ and $x$, we can interpret $\Omega_{klm}(z,y,x)$ as a holomorphic one-form satisfying
\begin{equation}
  \Omega_{klm}(z,\gamma_i y,x) = (1\otimes \eunit^{b_i}) \Omega_{klm}(z,y,x).
\end{equation}
The quasi-periodicity and (vanishing) residue of $\Omega_{klm}$ are similar to the conditions that had uniquely determined Enriquez' connection in \eqn{eqn:constraints}. Indeed, as detailed in \appref{app:hopfproof}, the strategy used to uniquely identifiy Enriquez' connection also allows us allow us to uniquely identify $\Omega_{klm}(z,y,x) \equiv 0$.

We can reverse the integration in \eqn{eqn:QFayIntegration} by taking an exterior derivative, finding
\begin{equation}
    \mathcal Q_{klm}(z,y,x) = \dd_z ( \Omega_{klm}(z,y,x) ) \equiv 0.
\end{equation}
\end{proof}

%%%%%%%%%%%%%%%%%%%%%%%%%%%%%%%%%%%%%%%%%%%%%%%%%%%%%%%%%%%%%%%%%%%%%%%%%%%%%%%%
%%%%%%%%%%%%%%%%%%%%%%%%%%%%%%%%%%%%%%%%%%%%%%%%%%%%%%%%%%%%%%%%%%%%%%%%%%%%%%%%

\subsubsection{Identities for $z$-reduction}\label{sec:isolating} 

In the process of rewriting scattering amplitudes and other observables in the language of polylogarithms formally defined in \eqn{eqn:polylogdef}, one frequently faces the situation that an integration variable will occur multiple times as argument of two integration kernels, for example:
\begin{equation}\label{eqn:zproblem}
	\omega_{i_1 \cdots i_r}(z,x)\omega_{p_1 \cdots p_s}(y,z)
\end{equation}
In order to (at least formally) perform the next integration, one will have to rewrite the product of the two kernels as 
\begin{equation}\label{eqn:zsolution}
  \begin{aligned}
	\omega_{i_1 \cdots i_r}(z,x)\,\omega_{p_1 \cdots p_s}(y,z)= \sum_{r',s'} ~& \lambda_{i_1' \cdots i_r' , p_1' \cdots p_s'} \omega_{i'_1 \cdots i'_{r'}}(z,x)\, \omega_{p_1' \cdots p_s'}(y,x) \\
  +& \eta_{i_1' \cdots i_r' , p_1' \cdots p_s'} \omega_{i'_1 \cdots i'_{r'}}(z,x)\, \omega_{p_1' \cdots p_s'}(y,z)
  \end{aligned}
\end{equation}
with some constant coefficients $\lambda_{\cdots}$ and $\eta_{\cdots}$ contracted with the terms on the r.h.s., with an implicit dependence on the indices $i_1 \cdots i_r,p_1 \cdots p_s$. This process has been referred to as ``removing repeated occurrences of $z$'' \cite{Broedel:2013tta} and -- more recently -- ``$z$-reduction'' in \rcite{DS:Fay}. In a genus-one context the Fay identities were sufficient to $z$-reduce all occurring expressions. Similarly, we will employ the Fay-like identities from \thmref{thm:quadfay} to perform this task for the kernels derived from Enriquez connection at higher genera.

We are going to rewrite identity \eqn{eqn:HigherGenusFay} from \thmref{thm:quadfay} in order to isolate a single term with repeated dependence on the argument $z$. In particular, this means that we want to collapse the last line of \eqn{eqn:HigherGenusFay} into a single term. To achieve this, we apply the index swapping identities from \secref{sec:indexswap}.

Once the term with repeated $z$-dependence has been isolated, we can perform a relabeling of the $b$-alphabet in order to arrive at an equivalent identity, which exactly matches the identity stated in~\cite[Conjecture 9.7]{DS:Fay} upon expansion into kernels. The result in this section therefore proves this conjecture and emphasizes the benefits of the formalism developed in this article.

\paragraph{Isolating repeated $z$-dependence.}
Notice, that the Fay-like identity \eqref{eqn:HigherGenusFay} can be factorized by means of \eqn{eqn:matchinggenfunc} and \eqn{eqn:pole-independencegenfunc} to yield an identity with a factor $b_l \otimes b_m$
\begin{align}
    \QFay'_{klm}(z,y,x) &=  \Big(\left[(K_{\bsub[1]pp'}(z,x)-K_{\bsub[1]pp'}(z,y))\right]\at_{p'=p} K_{\bsub[2]k}(y,x) \nonumber \\ \label{eqn:HigherGenusFaynon-trivial}
    &\quad+ \left[(K_{\bsub[2]qq'}(y,x)-K_{\bsub[2]qq'}(y,z))\right]\at_{q'=q} K_{\bsub[1]k}(z,x) \\
    &\quad- K_{\bsub[1]j}(z,y)K_{\bsub[2]j\Dbsub[12]k}(y,x) - K_{\bsub[2]j}(y,z)K_{\bsub[1]j\Dbsub[12]k}(z,x)\Big) (b_l \otimes b_m), \nonumber 
\end{align}
for arbitrary $p,q\in\{1,\ldots,\genus\}$. Applying now the index swapping identities \eqref{eqn:indexswap} and \eqref{eq:indexswap2} leads to an expression with a single isolated term with repeated $z$-dependence (see \appref{app:derivations} for more details of the derivation) 
\begin{align}
  \label{eqn:Fayisolated} 
  \QFay''_{klm}(z,y,x)
  &=\Big(\left[K_{\bsub[2]jk}(y,x) -  K_{\bsub[2]jk}(y,z) \right]K_{\Dbsub[12]j}(z,x) \notag\\
  &\quad+\left[K_{\bsub[1]jk}(z,x)-K_{\bsub[1]jk}(z,y)\right] K_{\bsub[2]j}(y,x) \notag\\
  &\quad- K_{\bsub[2]j}(y,x)K_{\bsub[1]j\Dbsub[12]k}(z,x))-K_{\bsub[1]j}(z,y)(K_{\bsub[2]j\Dbsub[12]k}(y,x)\Big) (b_l \otimes b_m)\notag\\
  &\eqqcolon\mathcal{I}_k(z,y,x)(b_l \otimes b_m ) \, .
\end{align}

\paragraph{An equivalent identity.}
In the following, we would like to rewrite identity \eqref{eqn:Fayisolated} to allow us to directly match -- upon expansion -- against the identity in ref.~\cite[Conjecture 9.7]{DS:Fay}. The rewriting shall be implemented by defining a map $\Xi$ to be applied to \eqn{eqn:Fayisolated}. 
In order to construct the mapping, it should be noted that transformations of algebra generators do not affect which arguments a generating function uses. Given that both identities have precisely one term with repeated $z$-dependence, it should be possible to construct the map from demanding they are mapped onto each other.  \\
Accordingly, we want to transform the algebra generators such that the mapping
\begin{equation}
  \label{eqn:mappingOS}
  K_{\bsub[2]jk}(y,z)K_{\Dbsub[12]j}(z,x) \mapsto K_{\bsub[1]j}(z,x) K_{\Sbsub[2]jk}(y,z)
\end{equation}
is realized for the term with repeated $z$-dependence, which can be achieved employing the Hopf algebra structure. 

The notation in \eqn{eqn:mappingOS} already suggests that the transformation should somehow correspond to first applying the antipode to the second tensor site and applying the coproduct to the first site in a second step. The application of the antipode is motivated by the fact that the first and second component forms should swap upon application of the transformation~\eqref{eqn:mappingOS} and also by the appearance of the antipode in the final expression. The application of the coproduct is suggested by the requirement to remove the coproduct from the final expression. We now want to formally define a map that realizes the transformation \eqref{eqn:mappingOS}.
\begin{definition}\label{def:xi}
  We define the map $\Xi:\alg{b}^{\otimes 2}\rightarrow\alg{b}^{\otimes 2}$ as the combination
  \begin{equation}
    \Xi = \mu_{23}\circ\Delta_1\circ S_2 \, ,
  \end{equation}
  where the subscripts as usual indicate the tensor sites on which the maps are applied.
\end{definition}
As shown in \appref{app:derivations}, the map $\Xi$ exactly realizes the desired transformation \eqref{eqn:mappingOS}.

Applying the map $\Xi$ to the quantity $\mathcal{I}_k$ defined in \eqn{eqn:Fayisolated}, we obtain 
\begin{equation}
  \label{eqn:FayDS} 
  \begin{aligned}
  \tilde{\mathcal I}_k(z,y,x)
  &=\Xi \circ \mathcal{I}_k(z,y,x),\\
  &=\left[K_{\Sbsub[2]jk}(y,x)-K_{\Sbsub[2]jk}(y,z)\right] K_{\bsub[1]j}(z,x) \\
  &\quad+\left[K_{\Dbsub[12]jk}(z,x)-K_{\Dbsub[12]jk}(z,y)\right] K_{\Sbsub[2]j}(y,x) \\
  &\quad-K_{\Dbsub[12]j\bsub[1]k}(z,x) K_{\Sbsub[2]j}(y,x)-K_{\Dbsub[12]j}(z,y)K_{\Sbsub[2]j\bsub[1]k}(y,x) \, .
  \end{aligned}
\end{equation}
While \eqn{eqn:FayDS} is already in beautiful shape for removing double occurrences of $z$, let us rewrite the identity one last time by removing contractions of indices in the first line by using \eqns{eqn:nonmatchinggenfunc}{eqn:matchinggenfunc}: 
\begin{equation}
  \label{eqn:FayDSIsolated}
  \begin{aligned}
  \tilde{\mathcal I}'_k(z,y,x)
  =& \left[K_{\Sbsub[2]pp'}(y,x)-K_{\Sbsub[2]pp'}(y,z)\right]\at_{p' = p} K_{\bsub[1]k}(z,x) \\
  &+\left[K_{\Dbsub[12]jk}(z,x)-K_{\Dbsub[12]jk}(z,y)\right]K_{\Sbsub[2]j}(y,x) \\
  &-K_{\Dbsub[12]j\bsub[1]k}(z,x) K_{\Sbsub[2]j}(y,x) -K_{\Dbsub[12]j}(z,y)K_{\Sbsub[2]j\bsub[1]k}(y,x) \, .
  \end{aligned}
\end{equation}
where the above identity is valid for any values of $p = 1, \ldots, \genus$.  Using the \eqn{eqn:FayDSIsolated}, we will straightforwardly be able to extract kernel identities in the next section. 

\section{Identities for integration kernels}\label{sec:kernelidentities}

The identities formulated in the previous subsection in terms of component forms provide a convenient way of obtaining relations between the integration kernels for higher-genus polylogarithms: they just need to be expanded in algebra generators. In this section, we aim to make this explicit and thereby state a collection of useful identities between integration kernels, which can then be used to derive functional relations between higher-genus polylogarithms in \secref{sec:funcrel}. 

Let us expand \eqn{eqn:FayDSIsolated} into component forms. While the explicit calculation is performed in \appref{app:derivations}, the resulting identity arising as coefficient corresponding to a word $(-1)^sb_{i_1}\ldots b_{i_r}\otimes b_{p_s}\ldots b_{p_1}$ reads
\begin{equation}
  \label{eqn:kernelId}
  \begin{aligned}
	  \omega_{i_1\cdots i_rk}(z,x)\,\omega_{p_1\cdots p_spp'}&(y,z)\at_{p' = p}\\
    &=~\omega_{i_1\cdots i_rk}(z,x)\,\omega_{p_1\cdots p_spp'}(y,x)\at_{p'=p} \\
    &\quad-\sum_{l=0}^{r-1}\sum_{m=0}^s(-1)^{m-s}\omega_{(i_1\cdots i_l\shuffle p_s\cdots p_{m+1})ji_{l+1}\cdots i_rk}(z,x)\,\omega_{p_1\cdots p_mj}(y,x) \\
    &\quad-\sum_{m=0}^s(-1)^{m-s}\omega_{(i_1\cdots i_r\shuffle p_s\cdots p_{m+1})jk}(z,y)\,\omega_{p_1\cdots p_mj}(y,x) \\
    &\quad-\sum_{l=0}^r\sum_{m=0}^s(-1)^{m-s}\omega_{(i_1\cdots i_l\shuffle p_s\cdots p_{m+1})j}(z,y)\,\omega_{p_1\cdots p_mji_{l+1}\cdots i_rk}(y,x) \\
    &\eqqcolon \cU_{i_1\cdots i_rk,p_1\cdots p_sp}(z,y,x)
  \end{aligned}
\end{equation}
and is valid for each set of indices $(i_1,\ldots,i_r,k,p_1,\ldots,p_s,p)\in\{1,\ldots,\genus\}^{r+s+2}$. In the above equation, $\cU_{i_1\cdots i_rk,p_1\cdots p_sp}(z,y,x)$ denotes the collection of all $z$-reduced terms.

Notice that \eqn{eqn:kernelId} matches the identity stated in ref.~\cite[Conjecture 9.7]{DS:Fay} upon application of the expanded index swapping identity from \lemref{lem:indexswap1} and absorption of the index $j$ into the shuffle product in the second line. 
Furthermore, the interchange lemmas from ref.~\cite[Theorem 9.2, Corollary 9.3]{DS:Fay} can be shown to be equivalent to \eqn{eqn:kernelId} in the special case of $r = 0$ by virtue of \lemref{lem:indexswap1} and some combinatorics.

%%%%%%%%%%%%%%%%%%%%%%%%%%%%%%%%%%%%%%%%%%%%%%%%%%%%%%%%%%%%%%%%%%%%%%%%%%%%%%%%
%%%%%%%%%%%%%%%%%%%%%%%%%%%%%%%%%%%%%%%%%%%%%%%%%%%%%%%%%%%%%%%%%%%%%%%%%%%%%%%%

\subsection{Uniqueness of identities for kernels}

In this subsection we prove that the Fay-like identity \eqref{eqn:HigherGenusFay} is the unique quadratic identity up to linear combinations of the trivial linear identities \eqref{eqn:nonmatchinggenfunc} and \eqref{eqn:matchinggenfunc}.
Following the strategy from \secref{sec:linearthreepointidentities}, we prove this statement using expansions of the component forms in \eqn{eqn:HigherGenusFay} into integration kernels. It is currently unsettled, whether the uniqueness statement can be lifted to a statement on identities between generating functions consistently.  
The uniqueness statement implies that the collection of conjectures in \cite{DS:Fay} is complete. 

Proving uniqueness proceeds in two steps: we first notice that relation \eqn{eqn:kernelId} together with the linear identity \eqref{eqn:non-matching} allows to find a $z$-reduced expression for any quadratic combination of the integration kernels with repeated $z$-dependence. Afterwards, this fact can be combined with the following lemma to deduce the uniqueness in \thmref{thrm:uniqueFay} below.
\begin{lemma}[Uniqueness of \rm{$z$}-reduced quadratic identities] \label{lem:uniqueZ-red}
	Any $z$-reduced 3-point quadratic identity holds by means of the trivial linear equations \eqref{eqn:non-matching} and \eqref{eqn:matching}. Moreover, it does so separately for each weight implying a weight-grading of $z$-reduced quadratic identities.
\end{lemma}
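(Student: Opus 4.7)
The strategy mirrors the inductive proof of \propref{prop:triviality-linear}, now applied to $z$-reduced quadratic identities. I would write a generic $z$-reduced 3-point quadratic identity of total weight $m$ schematically as
\begin{equation*}
  \mathcal{Q}^{(m)}(z,y,x) = \sum_{r+s\leq m}\bigl[\lambda^{(r,s)}_{I,P}\,\omega_I(z,x) + \mu^{(r,s)}_{I,P}\,\omega_I(z,y)\bigr]\,\omega_P(y,x) = 0,
\end{equation*}
with $I = i_1\cdots i_rj$, $P = p_1\cdots p_sk$ multi-indices and repeated indices summed; the absence of any factor $\omega_P(y,z)$ is precisely the $z$-reduction hypothesis. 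The plan is to induct on $m$. The base case $m=0$ is immediate: both kernels reduce to holomorphic one-forms $\omega_j(z)$, $\omega_k(y)$ whose products are linearly independent, so all weight-zero coefficients must vanish and the identity is trivial.

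For the inductive step I would emulate \eqn{eqn:linear_lower_weight_ids} but using two independent families of shifts, one in $z$ and one in $y$. Iterating the quasi-periodicity \eqref{eqn:kernelQPz} on both arguments yields a collection of lower-weight $z$-reduced quadratic identities $\mathcal{Q}^{(m-k-\ell)}_{i_1\cdots i_k;\,p_1\cdots p_\ell}$ holding for arbitrary auxiliary indices. Specializing to $k+\ell = m$ produces weight-zero identities of the form $(\lambda+\mu)\,\omega_j(z)\omega_k(y) = 0$ at highest weight; linear independence of these products forces the quadratic analogue of \eqn{eqn:cond1},
\begin{equation*}
  \lambda^{(r,s)}_{I,P} + \mu^{(r,s)}_{I,P} = 0 \quad\text{for all } r+s = m.
\end{equation*}
Taking the residue at $z=x$ of the weight-one-in-$z$ identities would then extract the quadratic analogue of \eqn{eqn:cond2}, tying together the ``diagonal'' coefficients with matched pole-generating indices $j = j'$.

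With these coefficient constraints in place, I would rewrite the weight-$m$ part of $\mathcal{Q}^{(m)}$ as a sum of off-diagonal pieces $\lambda^{(r,s)}_{I,P}\bigl[\omega_I(z,x) - \omega_I(z,y)\bigr]\omega_P(y,x)$ with $j\neq k$ and diagonal combinations of the shape \eqref{eqn:matching}, each bracket vanishing identically by the trivial linear identities. The leftover $\mathcal{Q}^{(m-1)}$ is then a $z$-reduced identity of strictly lower weight, to which the inductive hypothesis applies. The weight-by-weight grading is an immediate corollary: the construction shows that the weight-$m$ contribution is itself writable as a linear combination of \eqref{eqn:non-matching} and \eqref{eqn:matching}, independently of all lower-weight data.

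The main obstacle I anticipate is the combinatorial bookkeeping: the highest-weight piece is spread across all partitions $r+s=m$, and each quasi-periodicity shift mixes contributions from different $(r,s)$ through the kernel QP formula, producing lower-weight remainders that depend on the highest-weight coefficients. A secondary subtlety is that residue extraction in the $z$-variable leaves a spectator factor $\omega_P(y,x)$ still depending on $y$, so concluding linear relations on the coefficients requires a two-step linear-independence argument, first in $z$ and then in $y$, rather than the single-step argument that sufficed for \propref{prop:triviality-linear}.
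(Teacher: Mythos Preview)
Your overall strategy matches the paper's: quasi-periodicity shifts to reach weight zero, residue extraction to constrain diagonal coefficients, and then elimination via \eqref{eqn:non-matching} and \eqref{eqn:matching}. However, the induction you propose (on total weight $m$, treating all partitions $r+s=m$ simultaneously) has a genuine gap that the paper's organisation is specifically designed to avoid.

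The issue is the $y$-dependence of the factor $\omega_I(z,y)$. When you apply a $y$-shift to the identity, this factor transforms via \eqref{eqn:kernelQPx}, not \eqref{eqn:kernelQPz}; your proposal only invokes the latter. Concretely, after $k$ $z$-shifts and $\ell=m-k$ $y$-shifts the resulting weight-zero identity receives contributions not only from the partition $(k,\ell)$ but also from every $(r,s)$ with $r>k$, because those terms still carry a $y$-dependent first kernel of positive weight after $k$ $z$-shifts, and the $y$-shifts act on it through \eqref{eqn:kernelQPx}. For example, at $(k,\ell)=(m{-}1,1)$ the weight-zero relation reads $(\lambda+\mu)_{(m-1,1)}=\delta_{jq}\sum_{j'}\mu_{i_1\cdots i_{m-1}j'j',p}$ rather than simply $(\lambda+\mu)_{(m-1,1)}=0$; the right-hand side vanishes only once the \emph{diagonal} $(m,0)$ coefficients have already been shown to vanish via the residue step. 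Your ordering---first establish $\lambda+\mu=0$ for all partitions, then do all residues---therefore cannot close.

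The paper fixes this by inducting on $r$ within fixed $m$ rather than on $m$: at step $k$ one assumes all terms with $r>m-k$ have already been removed trivially, so that after $m-k$ $z$-shifts only the single value $r=m-k$ survives and the first kernel is the holomorphic $\omega_j(z)$, independent of $y$. One can then safely iterate $y$-shifts (now purely via \eqref{eqn:kernelQPz}) to peel off $s=k,k-1,\ldots,0$, obtaining $\lambda+\eta=0$; the residue argument at $z=x$ and $z=y$ of the weight-one identity (followed again by $y$-shifts) yields the diagonal constraints for that same $r$, and the block is removed before proceeding to $r-1$. The two-step linear-independence issue you anticipate is handled exactly this way: the spectator $\omega_P(y,x)$ after residue extraction is reduced by further $y$-shifts, not by an abstract independence statement. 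So your ingredients are correct, but the proof requires interleaving the $\lambda+\mu$ and residue steps partition-by-partition, descending in $r$, rather than attempting all partitions of weight $m$ in parallel.
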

\begin{proof}
Consider an arbitrary $z$-reduced quadratic identity of highest weight $m$, 
\begin{equation}
  \label{eqn:quadz-reduced_alt}
  \mathcal{I}^{(m)}(z,y,x)=\sum_{r=0}^m\sum_{s=0}^{m-r}[\lambda_{i_1 \cdots i_r i, p_1 \cdots p_s p} \,\omega_{i_1 \cdots i_r i}(z,x) + \eta_{i_1 \cdots i_r i, p_1 \cdots p_s p}\, \omega_{i_1 \cdots i_r i}(z,y) ]\, \omega_{p_1 \cdots p_s p}(y,x) = 0
\end{equation}
whose vanishing is achieved by a particular fixed choice of coefficients $\lambda_{i_1\cdots i_r i,p_1\cdots p_s p},\eta_{i_1\cdots i_r i,p_1\cdots p_s p}\in\zC$. 
We will show that each term in the sum of \eqn{eqn:quadz-reduced_alt} vanishes due to the trivial linear identities \eqref{eqs:linearid} by induction.

\textbf{Induction start:} we will show that the terms with $r=m$ in \eqn{eqn:quadz-reduced_alt} vanish trivially.
Similar to the linear case (cf.~\eqn{eqn:linear_lower_weight_ids}), we define a family of lower weight identities by iteratively applying the quasi-periodicity rule first in $z$ and then in $y$. 
\begin{subequations}
\begin{align}
  \cI^{(m-k, 0)}_{i_1\cdots i_{k}}(z,y,x)\coloneqq~&
  \cI^{(m-k+1, 0)}_{i_1\cdots i_{k-1}}(\gamma_{i_{k}}z,y,x) - \cI^{(m-k+1, 0)}_{i_1\cdots i_{k-1}}(z,y,x) = 0,\\
  \cI^{(m-k-l, l)}_{i_1\cdots i_{k+l}}(z,y,x)\coloneqq~&
  \cI^{(m-k-l+1, l-1)}_{i_1\cdots i_{k+l-1}}(z,\gamma_{i_{k+l}}y,x) - \cI^{(m-k-l+1, l-1)}_{i_1\cdots i_{k+l-1}}(z,y,x) = 0,
\end{align}
\end{subequations}
for arbitrary choices of $k,l \ge 0$ subject to $1 \le k+l \le m$ and $i_1,\ldots,i_{k+l}\in\{1,\ldots,\genus\}$ and with the identification $\cI^{(m,0)} \coloneqq \cI^{(m)}$. Notice that $\cI^{(m-k-l,l)}_{i_1\cdots i_{k+l}}(z,y,x)$ indeed has lower weight as the highest weight terms are canceled out. 

At first, let us consider $\cI^{(0,0)}_{i_1 \cdots i_{k}}(z,y,x)$, where only the quasi-periodicity in $z$ is used $m$ times. 
Since the quasi-periodicity in $z$ only affects the terms in square brackets in \eqn{eqn:quadz-reduced_alt}, when we apply the quasi-periodicity $m$ times, the non-zero contribution will only come from the term with $r = m$ and $s = 0$ in \eqn{eqn:quadz-reduced_alt}, because application of quasi-periodicity in $z$ $m$ times will kill $m$ first indices in $\omega_{i_1 \cdots i_r i}(z, \cdot)$. Therefore, it is easy to calculate 
\begin{equation}\label{eqn:r=m}
    \cI^{(0,0)}_{i_1\cdots i_{m}}(z,y,x) = \left[\lambda_{i_1\cdots i_m i,p}\,\omega_{i}(z)+\eta_{i_1\cdots i_m i,p}\,\omega_{i}(z)\right]\,\omega_{p}(y) = 0.
\end{equation}
which directly implies
\begin{equation}
  \label{eqn:r=m_impl}
  \lambda_{i_1\cdots i_mi,p}+\eta_{i_1\cdots i_mi,p} = 0 \, .
\end{equation}
Now, we consider the identity $\cI^{(1,0)}_{i_1\cdots i_{m-1}}(z,y,x)$, where $z$ quasi-periodicity is applied $(m{-}1)$ times. The non-trivial contribution to $\cI^{(1,0)}_{i_1\cdots i_{m-1}}(z,y,x)$ will come from the terms $r=m,m-1$ and $s=0,\ldots,m-r$ in \eqn{eqn:quadz-reduced_alt}. Evidently, the contribution from $r=m-1$ terms will be holomorphic, because application of quasi-periodicity in $z$ $(m{-}1)$ times will kill all but one index in $\omega_{i_1 \cdots i_r i}(z, \cdot)$. Therefore, it is easy to verify that
\begin{equation}
  \label{eqn:r=m_before_res}
  \cI^{(1,0)}_{i_1\cdots i_{m-1}}(z,y,x) = 
  \left[\lambda_{i_1\cdots i_m i,p}\,\omega_{i_m i}(z,x)+\eta_{i_1\cdots i_m i,p}\,\omega_{i_m i}(z,y)\right]\,\omega_{p}(y) 
  + [\text{regular in $z$}]= 0.
\end{equation}
Taking residues at $z = x$ or $z = y$ allows us to isolate terms with $\lambda_{\ldots}$ and $\eta_{\ldots}$, resulting in the relations 
\begin{subequations}\label{eqn:r=m_res} 
    \begin{align}
    \Res_{z=x}\cI^{(1,0)}_{i_1\cdots i_{m-1}}(z,y,x)&=0\implies \lambda_{i_1\cdots i_{m-1}ii,p}=0 \, , \\
    \Res_{z=y}\cI^{(1,0)}_{i_1\cdots i_{m-1}}(z,y,x)&=0\implies \eta_{i_1\cdots i_{m-1}ii,p}=0 \,.
    \end{align}
\end{subequations}
Equations~\eqref{eqn:r=m_impl} and \eqref{eqn:r=m_res} together with the trivial linear identities \eqref{eqs:linearid} now yield
\begin{equation}
  \begin{aligned}
    [\lambda_{i_1 \cdots i_m i,p}\, \omega_{i_1 \cdots i_m i}(z,x) &+ \eta_{i_1 \cdots i_m i,p}\, \omega_{i_1 \cdots i_m i}(z,y)]\,\omega_p(y) \\
    &=\lambda_{i_1 \cdots i_m i,p}[\omega_{i_1 \cdots i_m i}(z,x)-\omega_{i_1 \cdots i_m i}(z,y)]\,\omega_p(y) \\
    &=\lambda_{i_1 \cdots ii,p}[\omega_{i_1 \cdots kk'}(z,x)-\omega_{i_1 \cdots kk'}(z,y)]\at_{k'=k}\,\omega_p(y) = 0 ,
  \end{aligned}
\end{equation}
for some fixed $1 \le k \le \genus$. We have therefore shown that the $r=m$ term in \eqn{eqn:quadz-reduced_alt} indeed trivially vanishes, thus concluding the induction start.

\textbf{Induction step:} assume that all terms with $r > m-k$ in \eqn{eqn:quadz-reduced_alt} trivially vanish, so that it can be written as 
\begin{equation}
  \label{eqn:quadz-reduced_upd}
  \mathcal{I}^{(m)}(z,y,x)=\sum_{r=0}^{m-k}\sum_{s=0}^{m-r}[\lambda_{i_1 \cdots i_r i, p_1 \cdots p_s p} \,\omega_{i_1 \cdots i_r i}(z,x) + \eta_{i_1 \cdots i_r i, p_1 \cdots p_s p}\, \omega_{i_1 \cdots i_r i}(z,y) ]\, \omega_{p_1 \cdots p_s p}(y,x) = 0
\end{equation}
Now, we would like to show that the terms with $r = m-k$ vanish trivially too.

To begin with, we consider the identity $\cI^{(k,0)}_{i_1 \ldots i_{m-k}}(z,y,x)$. Similarly to the argumentation at the induction start, the non-trivial contribution after applying $z$ quasi-periodicity $(m{-}k)$ times comes from the terms with $r=m-k$, leading to 
\begin{equation} \label{eqn:r=m-k}
  \cI^{(k,0)}_{i_1 \cdots i_{m-k}}(z,y,x) = 
  \sum_{s=0}^{k}\left[\lambda_{i_1\cdots i_m i,p_1 \cdots p_s p}\,\omega_{i}(z)+\eta_{i_1\cdots i_m i,p_1\cdots p_s p}\,\omega_{i}(z)\right]\,\omega_{p_1 \cdots p_s p}(y,x) = 0.
\end{equation}
Now we can apply the $y$ quasi-periodicity $k$ times, which eliminates all terms in \eqn{eqn:r=m-k} with $s<k$ leading to 
\begin{equation}\label{eqn:r=m-k_impl}
  \begin{aligned}
    \cI^{(0,k)}_{i_1 \cdots i_{m-k} p_1 \cdots p_k}(z,y,x) &=
    \left[\lambda_{i_1\cdots i_m i,p_1 \cdots p_k p}\,\omega_{i}(z)+\eta_{i_1\cdots i_m i,p_1\cdots p_k p}\,\omega_{i}(z)\right]\,\omega_{p}(y) \\
    &\implies \lambda_{i_1\cdots i_{m-k} i,p_1 \cdots p_k p}+\eta_{i_1\cdots i_{m-k} i,p_1 \cdots p_k p} = 0\, .
  \end{aligned}
\end{equation}
Inserting this relation back into \eqn{eqn:r=m-k} eliminates the terms with $s=k$ and we can now take the $y$ quasi-periodicity $(k{-}1)$ times and repeat the argument. Therefore, we conclude that 
\begin{equation}
  \lambda_{i_1\cdots i_{m-k} i,p_1 \cdots p_s p}+\eta_{i_1\cdots i_{m-k} i,p_1 \cdots p_s p} = 0
\end{equation}
for all $s = 1, \ldots, k$.

As at the induction start, let us now consider the application of $z$ quasi-periodicity $(m{-}k{-}1)$ times. The non-trivial non-holomorphic contribution will come from the terms with $r = m-k$ in \eqn{eqn:quadz-reduced_upd}, thus allowing us to calculate
\begin{align}
\label{eqn:r=m-k_before_res}
  \cI^{(k+1,0)}_{i_1 \cdots i_{m-k-1}}(z,y,x) =&
  \sum_{s=0}^{k}\left[\lambda_{i_1\cdots i_{m-k} i,p_1 \cdots p_s p}\,\omega_{i_{m-k} i}(z,x)+\eta_{i_1\cdots i_{m-k} i,p_1 \cdots p_s p}\,\omega_{i_{m-k} i}(z,y)\right]\,\omega_{p_1 \cdots p_s p}(y,x) \notag\\ 
  &+ [\text{regular in $z$}]= 0\,.
\end{align}
Now we can take residue of \eqn{eqn:r=m-k_before_res} at $z=y$ or $z=x$, that gives us
\begin{subequations}\label{eqn:r=m-k_res}
  \begin{align}
  \Res_{z=x}\cI^{(k+1,0)}_{i_1\cdots i_{m-k-1}}(z,y,x)&=\sum_{s=0}^{k}\lambda_{i_1\cdots i_{m-k-1} i i,p_1 \cdots p_s p} \omega_{p_1 \cdots p_s p}(y,x ) = 0\, , \\
  \Res_{z=y}\cI^{(k+1,0)}_{i_1\cdots i_{m-k-1}}(z,y,x)&=\sum_{s=0}^{k}\eta_{i_1\cdots i_{m-k-1} i i,p_1 \cdots p_s p} \omega_{p_1 \cdots p_s p}(y,x ) = 0 \,.
  \end{align}
\end{subequations}
The r.h.s.~of \eqn{eqn:r=m-k_res} is a one-form in $y$ and we can take quasi-periodicity $k$ times in order to eliminate all terms in the sums of \eqn{eqn:r=m-k_res} but those with $s = k$. This gives us that $\lambda_{i_1\cdots i_{m-k-1} i i,p_1 \cdots p_k p} = \eta_{i_1\cdots i_{m-k-1} i i,p_1 \cdots p_k p} = 0$ and we can insert this result back to \eqn{eqn:r=m-k_res}. Repeating this argument $k$ times allows us to conclude that 
\begin{equation}\label{eqn:r=m-k_res_impl}
  \lambda_{i_1\cdots i_{m-k-1} i i,p_1 \cdots p_s p} = \eta_{i_1\cdots i_{m-k-1} i i,p_1 \cdots p_s p} = 0
\end{equation}
for all $s = 1, \ldots, k$.

Using the results \eqref{eqn:r=m-k_impl} and \eqref{eqn:r=m-k_res_impl} allows us to show that for $r=m-k$ and each $s=0,\ldots, k$ we have
\begin{equation}
  \begin{aligned}
    [\lambda_{i_1 \cdots i_{m-k} i,p_1 \cdots p_s p}\,& \omega_{i_1 \cdots i_{m-k} i}(z,x) + \eta_{i_1 \cdots i_{m-k} i,p_1 \ldots p_s p}\, \omega_{i_1 \cdots i_{m-k} i}(z,y)]\,\omega_{p_1 \cdots p_s p}(y,x) \\
    &=\lambda_{i_1 \cdots i_{m-k} i,p_1 \cdots p_s p}[\omega_{i_1 \cdots i_{m-k} i}(z,x)-\omega_{i_1 \cdots i_{m-k} i}(z,y)]\,\omega_{p_1 \cdots p_s p}(y,x)  \\
    &=\lambda_{i_1 \cdots i_{m-k-1} ii,p}[\omega_{i_1 \cdots i_{m-k-1} kk'}(z,x)-\omega_{i_1 \cdots i_{m-k-1} kk'}(z,y)]\at_{k'=k}\,\omega_{p_1 \cdots p_s p}(y,x) \\
    &= 0 ,
  \end{aligned}
\end{equation}
by means of the trivial identities \eqref{eqs:linearid}. This concludes the proof.
\end{proof}
The above lemma proves that no non-trivial identities among $z$-reduced terms exist (of which weight grading is an immediate consequence). This allows us to conclude that the Fay-like identity \eqref{eqn:HigherGenusFay} expanded into kernels is the unique quadratic $3$-point identity:
\begin{theorem}[Uniqueness of the Fay-like identity] \label{thrm:uniqueFay}
Any quadratic $3$-point identity among the integration kernels is equivalent to a linear combination of the Fay-like identity \eqref{eqn:kernelId} and the trivial linear identities \eqref{eqn:non-matching} and \eqref{eqn:matching}, meaning that the Fay-like identity \eqref{eqn:HigherGenusFay} is unique modulo the trivial linear identities.
\end{theorem}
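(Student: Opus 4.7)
The plan is to imitate at the quadratic level the strategy used for \propref{prop:triviality-linear}: employ the Fay-like identity \eqref{eqn:kernelId} to reduce an arbitrary quadratic 3-point identity to one whose terms are all $z$-reduced, and then invoke \lemref{lem:uniqueZ-red} to conclude that what remains is a linear combination of the trivial linear identities \eqref{eqn:non-matching} and \eqref{eqn:matching}.

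More precisely, let $\mathcal{Q}(z,y,x) = 0$ be an arbitrary quadratic 3-point identity among the integration kernels. I would partition its summands according to whether every kernel factor has its second (pole-location) argument in $\{y,x\}$ or not. For each summand in which $z$ still appears as a second argument, the Fay-like identity \eqref{eqn:kernelId} directly rewrites the canonical product $\omega_{\ldots k}(z,x)\,\omega_{\ldots pp'}(y,z)\at_{p'=p}$ as an explicit linear combination of $z$-reduced products. Applying \eqref{eqn:kernelId} exhaustively, together with \eqref{eqs:linearid} to bring summands into this canonical shape, yields a transformed identity $\mathcal{Q}'(z,y,x) = 0$ that is globally $z$-reduced and differs from $\mathcal{Q}$ by a known linear combination of instances of \eqref{eqn:kernelId}.

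By \lemref{lem:uniqueZ-red}, the resulting $z$-reduced identity $\mathcal{Q}'(z,y,x) = 0$ is itself a linear combination of the trivial linear identities \eqref{eqn:non-matching} and \eqref{eqn:matching}. Retracing the Fay-like rewrites, the original identity $\mathcal{Q}(z,y,x) = 0$ is therefore equivalent to a linear combination of \eqref{eqn:kernelId} and the trivial linear identities, which is exactly the content of the theorem.

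The main difficulty I anticipate lies in the reduction step: while \eqref{eqn:kernelId} applies to products of the specific form $\omega_{\ldots k}(z,x)\,\omega_{\ldots pp'}(y,z)\at_{p'=p}$, a general non-$z$-reduced quadratic term may take many other shapes—products with the factors in reversed order, products in which $z$ occurs as a second argument in more than one factor, or products whose index patterns do not fit \eqref{eqn:kernelId} verbatim. Handling these cases requires systematically combining \eqref{eqn:kernelId} with the linear identities \eqref{eqs:linearid} and with index relabellings, so that no non-$z$-reduced term escapes the reduction. Once this bookkeeping is controlled, the uniqueness statement follows immediately from \lemref{lem:uniqueZ-red}.
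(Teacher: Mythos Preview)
Your proposal is correct and follows essentially the same route as the paper: group the non-$z$-reduced terms, rewrite them via \eqref{eqn:non-matching} and \eqref{eqn:kernelId}, and then apply \lemref{lem:uniqueZ-red} to the resulting $z$-reduced difference. The concerns in your last paragraph largely dissolve once you use the structural constraint that $\cI(z,y,x)$ is a one-form in $z$ and $y$ and a function in $x$: every term is then of the shape $\omega_{\ldots}(z,\cdot)\,\omega_{\ldots}(y,\cdot)$ with second arguments in $\{x,y\}$ and $\{x,z\}$ respectively, so the only non-$z$-reduced products are $\omega_{\ldots}(z,x)\,\omega_{\ldots}(y,z)$ and $\omega_{\ldots}(z,y)\,\omega_{\ldots}(y,z)$, the latter being converted to the former via \eqref{eqs:linearid}.
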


\begin{proof}
  Let $\cI(z,y,x)=0$ be an arbitrary $3$-point identity among integration kernels such that $\cI(z,y,x)$ is a $1$-form in $z,y$ and a function in $x$. We begin by grouping the terms in $\cI(z,y,x)=0$ such that
  \begin{equation}\label{eqn:given-zreduced}
    \cI(z,y,x)=0 \Leftrightarrow \sum_{r = 0}^{m} \sum_{s = 0}^{m - r} \nu_{i_1\cdots i_r i,p_1\cdots p_s p}\,\omega_{i_1\cdots i_r i}(z,x)\,\omega_{p_1\cdots p_s p}(y,z) = \cZ(z,y,x) \, ,
  \end{equation}
  where $\cZ(z,y,x)$ is a placeholder function that collects all $z$-reduced terms, $m$ is the weight of the highest-weight term, and $\nu_{i_1\cdots i_r i,p_1\cdots p_s p} \in \mathbb C$ are arbitrary coefficients defined by the original identity.

  Using \eqn{eqn:non-matching} for $p_s \neq p$ and \eqn{eqn:kernelId} for $p_s = p$ to simplify the r.h.s., one finds
  \begin{equation}
    \begin{aligned}
	    \sum_{r = 0}^{m} \sum_{s = 0}^{m - r} \nu_{i_1\cdots i_r i,p_1\cdots p_s p}\,&\omega_{i_1\cdots i_r i}(z,x)\,\omega_{p_1\cdots p_s p}(y,z) \\= 
    &~\sum_{r = 0}^{m} \sum_{s = 0}^{m - r} \left[ \nu_{i_1\cdots i_r i,p_1\cdots p_s p} \,\omega_{i_1\cdots i_r i}(z,x)\,\omega_{p_1\cdots p_s p}(y,x)\right]\at_{p \neq p_s} \\
    &~ \quad\quad\quad + \left[\nu_{i_1\cdots i_r i,p_1\cdots p_s p} \,\cU_{i_1 \cdots i_r i,p_1\cdots p_{s-1} p}(z,y,x)\right]\at_{p = p_s} \\
     \eqqcolon &~ \cZ_{\text{F}}(z,y,x) \,\
    \end{aligned}
  \end{equation}
  where $\cZ_{\text{F}}(z,y,x)$ is a placeholder function that collects the $z$-reduced terms coming from the trivial and Fay-like identity. Then, the difference $\cZ(z,y,x) - \cZ_{\text{F}}(z,y,x) = 0$ is a $z$-reduced quadratic identity, which holds by means of the trivial linear equations \eqref{eqn:non-matching} and \eqref{eqn:matching} as a result of \lemref{lem:uniqueZ-red}.
  
  Thus, we recognize that the identity $\mathcal I(z,y,x) = 0$ is equivalent to a linear combination of eqs.~\eqref{eqn:non-matching}, \eqref{eqn:matching}, and \eqref{eqn:kernelId} as desired.
\end{proof}
As a final remark, the identity given in ref.~\cite[Conjecture 9.6]{DS:Fay}, which superficially depends on arbitrary many points on the Riemann surface, can be reduced to the proven identity \eqn{eqn:kernelId}. One has to notice that the dependence on the points $a_l$ and $b_l$ can be eliminated by means of the linear relations eqs.~\eqref{eqs:linearid}. Then the identity only depends on three points and \thmref{thrm:uniqueFay} can be applied.

%%%%%%%%%%%%%%%%%%%%%%%%%%%%%%%%%%%%%%%%%%%%%%%%%%%%%%%%%%%%%%%%%%%%%%%%%%%%%%%%
%%%%%%%%%%%%%%%%%%%%%%%%%%%%%%%%%%%%%%%%%%%%%%%%%%%%%%%%%%%%%%%%%%%%%%%%%%%%%%%%
%%%%%%%%%%%%%%%%%%%%%%%%%%%%%%%%%%%%%%%%%%%%%%%%%%%%%%%%%%%%%%%%%%%%%%%%%%%%%%%%
%%%%%%%%%%%%%%%%%%%%%%%%%%%%%%%%%%%%%%%%%%%%%%%%%%%%%%%%%%%%%%%%%%%%%%%%%%%%%%%%

\section{Comparison to generating functions at genus one}\label{sec:genusone}

Enriquez' generating function \eqn{eqn:generatingfunction} is the unique solution to two conditions: while the pole structure (with corresponding residues) is fixed by the (analytic) condition \eqref{eqn:constraints:pole}, its periodicities are constrained algebraically through \eqref{eqn:constraints:periodic}. When constructing Fay-like identities as above, residues at the poles have to add to zero and periodicities have to match. 

The full range of analytic and non-abelian algebraic structures can only be implemented for genus two and higher. In this section we are going to explore what remains from the analytic and algebraic structures when constructing Fay-like identities at genera zero and one.

%%%%%%%%%%%%%%%%%%%%%%%%%%%%%%%%%%%%%%%%%%%%%%%%%%%%%%%%%%%%%%%%%%%%%%%%%%%%%%%%
%%%%%%%%%%%%%%%%%%%%%%%%%%%%%%%%%%%%%%%%%%%%%%%%%%%%%%%%%%%%%%%%%%%%%%%%%%%%%%%%

\subsection{Echo of the formalism at genus zero}
At genus zero, there are no non-trivial cycles on the Riemann sphere. As all generators of the algebra correspond to ``moving around a cycle of the Riemann surface'', there are no generators and the algebra is trivial. Accordingly, there are no algebraic (quasi-periodicity) conditions which have to be matched for Fay-like identities at genus zero. The situation is different for matching the poles: despite having no cycles and therefore no integration kernels with a single simple pole, e.g.~no one-form in the sense of \eqn{eqn:generatingfunction}, the closest one gets is with kernels 

\begin{equation}\label{eqn:genuszerokernels}
	\frac{\dd z}{ z-a}
\end{equation}
that have simple poles at $z = a, \infty$ on the Riemann sphere. As a result, there does not exist a direct relationship wherein Enriquez' connection and the corresponding quadratic identities are reduced to genus zero. Nevertheless, the pole-matching pattern is met by the partial fraction identity
\begin{equation}
  \frac{\dd z}{z-x} \frac{\dd y}{y-x} - \frac{\dd z}{z-y} \frac{\dd y}{y-x} - \frac{\dd z}{z-x}\frac{\dd y}{y-z} = 0.
\end{equation}
The cycle structure at genus one can be mathematically realized by averaging over the genus-zero kernels in \eqn{eqn:genuszerokernels}. One efficient way of performing the average is to use the Schottky parametrization and has been discussed in \cite[Eqn.~(5.22)]{BBILZ}. A similar approach expressing genus-one polylogarithms in terms of genus-zero differentials has been discussed in \rcite{BrownLevin}.

%%%%%%%%%%%%%%%%%%%%%%%%%%%%%%%%%%%%%%%%%%%%%%%%%%%%%%%%%%%%%%%%%%%%%%%%%%%%%%%%
%%%%%%%%%%%%%%%%%%%%%%%%%%%%%%%%%%%%%%%%%%%%%%%%%%%%%%%%%%%%%%%%%%%%%%%%%%%%%%%%

\subsection{Genus one}
As discussed in \secref{sec:connection}, the algebra $\alg{b}$ used in \defref{def:connection} will be abelian, that is there is just a single pair of generators $a=a_1$ and $b=b_1$.  The meromorphic and quasi-periodic generating function for integration kernels at genus one is the Kronecker function \cite{BrownLevin},
\begin{equation}
  F(\xi-\chi,\alpha) = \sum_{m = 0}^\infty g^{(m)}(\xi-\chi) \alpha^{m-1},
\end{equation}
where $\abel$ is the Abel map, $\xi = \abel(z)$ and $\chi = \abel(x)$ are the images of points $z$ and $x$ on the universal cover, and $\alpha$ is the chosen power-counting variable. To compare to Enriquez' connection, we can identify the Kronecker form\footnote{The convention for the power-counting variables for the Kronecker function and Enriquez' connection differ by a factor of $(-2\pi \iunit)$. Furthermore, we use the calligraphic $\cS$ instead of $S$ as done in \rcite{BBILZ} to avoid confusion with the antipode.} 
\begin{equation}
  \label{eqn:Kroneckerformh1}
  \cS_{\bsub}(z,x) = F(\abel(z)-\abel(x),-b_1/2\pi \iunit) \dd z,
\end{equation}
which can be shown to satisfy
\begin{equation}
  \cS_{\bsub}(\gamma_1 z,x) = \eunit^{b_1} \cS(z,x), \quad \Res_{z=x} \cS_{\bsub}(z,x) = 1.
\end{equation}
One can immediately use the conditions satisfied by Enriquez' connection \eqn{eqn:constraints} and the first component form \eqn{eqn:firstcompquasiz} at genus one to show
\begin{equation}
  \label{eqn:1stcompformh1}
  K(z,x)\at_{\genus = 1} = -\frac{1}{2\pi \iunit} \cS_{\bsub}(z,x) b_1 a_1, \quad K_{\bsub 1}(z,x)\at_{\genus = 1} = -\frac{1}{2\pi \iunit}\cS_{\bsub}(z,x) b_1,
\end{equation}
implying
\begin{equation}
  \cS_{\bsub}(z,x) = -\frac{1}{2\pi \iunit} \sum_{m = 0}^\infty \omega_{\scriptsize \underset{\mathclap{m \text{ times}}}{\underbrace{\text{\scriptsize $1\cdots1$}}}}(z,x) b_1^m\,.
\end{equation}
\begin{corollary}\label{prp:genus-one-reduction}
The quadratic Fay-like identity for the component forms \eqn{eqn:HigherGenusFay},
  \begin{align}
    0=&\left[K_{\bsub[1]l}(z,x)- K_{\bsub[1]l}(z,y)\right] K_{\bsub[2]k}(y,x) (1\otimes b_m)\nonumber\\
    &+ \left[K_{\bsub[2]m}(y,x)- K_{\bsub[2]m}(y,z)\right] K_{\bsub[1]k}(z,x) (b_l\otimes 1)\nonumber\\
    &-\left(K_{\bsub[1]j}(z,y) K_{\bsub[2]j\Dbsub[12]k}(y,x)+K_{\bsub[2]j}(y,z) K_{\bsub[1]j\Dbsub[12]k}(z,x)\right)(b_l \otimes b_m),
  \end{align}
  reduces to the Fay identity for the Kronecker function at genus one,
  \begin{equation}\label{eqn:fay-kroneckerfunc}
    0=F(\xi - \chi,\alpha) F(\tilde \xi - \chi,\tilde \alpha) - F(\xi-\tilde \xi,\alpha) F(\tilde \xi-\chi,\alpha + \tilde \alpha) - F(\xi-\chi,\alpha + \tilde \alpha) F(\tilde \xi-\xi,\tilde \alpha)
  \end{equation}
  for points $\xi,\tilde\xi,\chi$ on the universal cover and with formal identification $\alpha=-(b_1\otimes1)\slash(2\pi\iunit)$, $\tilde \alpha=-(1\otimes b_1)\slash(2\pi\iunit)$.
\end{corollary}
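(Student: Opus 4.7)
The plan is to specialize the Fay-like identity \eqref{eqn:HigherGenusFay} to genus one, eliminate the two coproduct-split component forms via the contraction identities \eqref{eqns:contractions}, and identify the resulting scalar equation as \eqref{eqn:fay-kroneckerfunc}. At genus one, the algebra $\alg{b}$ is generated by a single commuting element $b = b_1$, so every free index in \eqref{eqn:HigherGenusFay} collapses to $1$, and under the identifications $\alpha \coloneqq -(b \otimes 1)/(2\pi\iunit)$ and $\tilde\alpha \coloneqq -(1 \otimes b)/(2\pi\iunit)$ the generators on the two tensor sites commute. Relation \eqref{eqn:1stcompformh1} then directly gives $K_{\bsub[1]1}(z,x) = \alpha\, F(\xi-\chi,\alpha)\, \dd z$ and $K_{\bsub[2]1}(y,x) = \tilde\alpha\, F(\eta-\chi,\tilde\alpha)\, \dd y$, with $\eta \coloneqq \abel(y)$. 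Applying the coproduct to $K_{\bsub 1}$ together with $\Delta(b) = b\otimes 1 + 1\otimes b$ similarly produces $K_{\Dbsub[12]1}(z,x) = (\alpha+\tilde\alpha)\, F(\xi-\chi,\alpha+\tilde\alpha)\, \dd z$.

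The key step is to rewrite the two terms of \eqref{eqn:HigherGenusFay} containing $K_{\bsub[1]j\Dbsub[12]k}$ and $K_{\bsub[2]j\Dbsub[12]k}$. For the former, \eqref{eqn:contractionID2} reduces at genus one to $(1 \otimes b)\, K_{\bsub[1]1\Dbsub[12]1}(z,x) = K_{\Dbsub[12]1}(z,x) - K_{\bsub[1]1}(z,x)$. For the latter, I will apply $\permop_{12}$ to \eqref{eqn:contractionID2} and exploit that $\alg{b}$ is cocommutative at genus one, so that $\permop_{12}\circ\Delta = \Delta$ acting on $K_{\bsub 1}$, to obtain the site-swapped analog $(b \otimes 1)\, K_{\bsub[2]1\Dbsub[12]1}(y,x) = K_{\Dbsub[12]1}(y,x) - K_{\bsub[2]1}(y,x)$. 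After factoring $(b_l \otimes b_m) = (b\otimes 1)(1\otimes b)$ in \eqref{eqn:HigherGenusFay} and pushing the appropriate single-site factor through the commuting tensor product, each coproduct-split contribution is replaced by a difference of known first and $\Delta$-split component forms.

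After substitution, I expect a clean bookkeeping cancellation: the four mixed terms proportional to $F(\xi-\eta,\alpha)F(\eta-\chi,\tilde\alpha)$ and $F(\eta-\xi,\tilde\alpha)F(\xi-\chi,\alpha)$ cancel pairwise against pieces arising from the contracted coproduct-split terms, while the two surviving copies of $F(\xi-\chi,\alpha)F(\eta-\chi,\tilde\alpha)$, carrying weights $\alpha$ and $\tilde\alpha$, combine into the factor $(\alpha+\tilde\alpha)$. Extracting the common overall factor $-2\pi\iunit\,\alpha\tilde\alpha(\alpha+\tilde\alpha)\, \dd z\, \dd y$ and identifying $\tilde\xi = \eta$ then yields precisely the Fay identity \eqref{eqn:fay-kroneckerfunc}. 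The main obstacle I anticipate is the rigorous justification of the site-swapped contraction identity: it relies essentially on cocommutativity of $\alg{b}$ at genus one, a property that fails at higher genera and whose absence is consistent with the Fay-like identity carrying genuine new content beyond the genus-one reduction.
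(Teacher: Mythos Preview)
Your approach is essentially the same as the paper's: specialize all indices to $1$, use the contraction identities \eqref{eqns:contractions} to replace each coproduct-split second component form by a difference $K_{\Dbsub[12]1}-K_{\bsub[\bullet]1}$, and then identify the result with the Kronecker Fay identity via \eqref{eqn:1stcompformh1}. The paper cites \eqref{eq:contraction_id_1} where you cite \eqref{eqn:contractionID2}, but at genus one these coincide, and the paper's remark ``recall that at $\genus=1$ algebra $\alg{b}$ is abelian'' is exactly the justification you supply for moving the single-site $b$ factors through the component forms.

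One minor conceptual slip in your closing paragraph: the Hopf algebra $\alg{b}$ is cocommutative at \emph{every} genus, since all generators $b_i$ are primitive; hence $\permop_{12}\circ\Delta=\Delta$ always holds and your site-swapped contraction identity is valid in general. The property that is genuinely special to genus one, and which your argument actually uses, is \emph{commutativity} of $\alg{b}$ --- this is what lets you push $b\otimes 1$ and $1\otimes b$ past the component forms so that the contraction identities apply on the side where the factors sit in \eqref{eqn:HigherGenusFay}.
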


\begin{proof}
  Setting $\abel(z) = \xi, \abel(y) = \tilde \xi, \abel(x) = \chi$, and multiplying the Fay identity \eqref{eqn:fay-kroneckerfunc} by $(b_1 \otimes b_1) \Delta(b_1) \, \dd z \, \dd y$, we find that it is equivalent to the Fay identity for the Kronecker form
  \begin{equation}\label{eqn:fay-kroneckerform}
    \begin{aligned}
    0=&~ (\cS_{\bsub}(z,x) b_1 \otimes \cS_{\bsub}(y,x)  b_1) \Delta(b_1) \\
    &~ - (\cS_{\bsub}(z,y) b_1 \otimes 1) \cS_{\Dbsub[12]}(y,x) \Delta(b_1) (1 \otimes b_1) \\
    &~ - \cS_{\Dbsub[12]}(z,x) \Delta(b_1) (1 \otimes \cS_{\bsub}(y,z)  b_1) (b_1 \otimes 1) .
    \end{aligned}
  \end{equation}
  by using \eqn{eqn:Kroneckerformh1}.
  On the other hand, one can see that
  \begin{equation}
		\begin{aligned} \label{eqn:fay-like1}
			0=&~ K_{\bsub[1]1}(z,x) K_{\bsub[2]1}(y,x) (1 \otimes b_1) + K_{\bsub[1]1}(z,x) K_{\bsub[2]1}(y,x) (b_1 \otimes 1) \\
			& - K_{\bsub[1]1}(z,y) [ K_{\bsub[2]1}(y,x)  + K_{\bsub[2]1\Dbsub[12]1}(y,x) (b_1 \otimes 1)] (1 \otimes b_{1}) \\
			& - K_{\bsub[2]1}(y,z) [ K_{\bsub[1]1}(z,x)  + K_{\bsub[1]1\Dbsub[12]1}(z,x) (1 \otimes b_1)] (b_{1} \otimes 1)
		\end{aligned}
	\end{equation}
  by directly setting all indices to $1$ in \eqn{eqn:HigherGenusFay}. Using the contraction identity \eqn{eq:contraction_id_1} (recall that at $\genus=1$ algebra $\alg{b}$ is abelian) we find
  \begin{equation}
    \begin{aligned}
    0=&~ K_{\bsub[1]1}(z,x) K_{\bsub[2]1}(y,x) (1 \otimes b_1 +  b_1 \otimes 1) \\
    &- K_{\bsub[1] 1}(z,y) K_{\Dbsub[12]1}(y,x) (1 \otimes b_1) \\
    &- K_{\Dbsub[12]1}(z,x) K_{\bsub[2]1}(y,z)  (b_1 \otimes 1) .
    \end{aligned}
  \end{equation}

  This precisely matches the Fay identity for the Kronecker form \eqref{eqn:fay-kroneckerform} by \eqn{eqn:1stcompformh1} (up to a factor $-1\slash2\pi\iunit$) and thus shows that \eqn{eqn:HigherGenusFay} reduces to the Fay identity for the Kronecker function \eqref{eqn:fay-kroneckerfunc}.\\$\phantom{}$
\end{proof}

%%%%%%%%%%%%%%%%%%%%%%%%%%%%%%%%%%%%%%%%%%%%%%%%%%%%%%%%%%%%%%%%%%%%%%%%%%%%%%%%
%%%%%%%%%%%%%%%%%%%%%%%%%%%%%%%%%%%%%%%%%%%%%%%%%%%%%%%%%%%%%%%%%%%%%%%%%%%%%%%%
%%%%%%%%%%%%%%%%%%%%%%%%%%%%%%%%%%%%%%%%%%%%%%%%%%%%%%%%%%%%%%%%%%%%%%%%%%%%%%%%
%%%%%%%%%%%%%%%%%%%%%%%%%%%%%%%%%%%%%%%%%%%%%%%%%%%%%%%%%%%%%%%%%%%%%%%%%%%%%%%%

\section{Application to polylogarithms} \label{sec:funcrel}

Relations between integration kernels of linear, derivative, and quadratic type as in eqs.~\eqref{eqs:linearid}, \eqref{eqn:diffid} and \eqref{eqn:kernelId}, respectively, can be used to derive functional relations between higher-genus polylogarithms defined in \eqn{eqn:polylogdef}. While ref.~\cite{DS:Fay} explored some relations involving Fay-like identities recently, we will consider relations that can involve the appearance of special values, i.e.~multiple zeta values here. In the future, these relations can be used to derive identities between higher-genus multiple zeta values. 
%%%%%%%%%%%%%%%%%%%%%%%%%%%%%%%%%%%%%%%%%%%%%%%%%%%%%%%%%%%%%%%%%%%%%%%%%%%%%%%%
%%%%%%%%%%%%%%%%%%%%%%%%%%%%%%%%%%%%%%%%%%%%%%%%%%%%%%%%%%%%%%%%%%%%%%%%%%%%%%%%

\subsection{Removing $z$ from the labels}
If we have a higher-genus polylogarithm of the form $\Gargbare{\ldots}{\ldots\, z \,\ldots}{z}$, where the argument $z$ of the polylogarithm also appears as a label for the pole locations, it can be desired to remove $z$ from the labels. This was already done in the elliptic case \cite[Sec.~2.2.2]{Broedel:2014vla}, yielding a functional relation where on the r.h.s.~there is an elliptic polylogarithm where $z$ is also a label, while on the l.h.s.~only elliptic polylogarithms without $z$ as a label appear and we furthermore can have genus-zero multiple zeta values (MZVs). The same technique can also be applied in the higher-genus case to remove $z$ from the label by using quadratic identities for the kernels. 

One obvious question then is whether elliptic zeta values appear appear in these higher-genus polylogarithmic relations, or only MZVs. As we will answer below, the identities will only involve genus-zero MZVs but no elliptic or higher-genus special values.

For the derivation of these functional relations, we start off with a higher-genus polylogarithm
\begin{equation}
	\Gargbare{\mindx{i}_{n_1}, \ldots ,\mindx{i}_{n_{\ell-1}},(i_1\cdots i_rpq),\mindx{i}_{n_{\ell+1}},\ldots,\mindx{i}_{n_k}}{x_1, \ldots,x_{\ell-1},z,x_{\ell+1},\ldots, x_k}{z}
\end{equation}
that has the argument $z$ also a label for a pole (at a position $\ell\in\{1,\ldots,k\}$). We are particularly interested in the case $p=q$, since if the two last indices of a kernel do not equal each other, it is independent of the second variable (see \eqn{eqn:non-matching}, so if $p\neq q$, the variable $z$ in the label of the kernel is irrelevant. \exref{ex:polylog1} will look at one of these trivial examples with $p\neq q$, while in \exref{ex:polylog2} a more interesting non-trivial identity for the case $p=q$ will be shown.

Following the derivations for the elliptic case from \rcite{Broedel:2014vla}, we start off by writing
\begin{align}\label{eqn:identitygeneral}
	\Gargbare{\mindx{i}_{n_1}, \ldots ,\mindx{i}_{n_{\ell-1}},(i_1\cdots i_rpq),\mindx{i}_{n_{\ell+1}},\ldots,\mindx{i}_{n_k}}{x_1, \ldots,x_{\ell-1},z,x_{\ell+1},\ldots, x_k}{z}&=\int_{t_0=z_0}^z\underbrace{\dd t_0 \frac{\dd}{\dd t_0}}_{=\,\dd_{t_0}}\Gargbare{\mindx{i}_{n_1}, \ldots ,\mindx{i}_{n_{\ell-1}},(i_1\cdots i_rpq),\mindx{i}_{n_{\ell+1}},\ldots,\mindx{i}_{n_k}}{x_1, \ldots,x_{\ell-1},t_0,x_{\ell+1},\ldots, x_k}{t_0}\notag\\
	&\quad+\lim_{t\to z_0}\Gargbare{\mindx{i}_{n_1}, \ldots ,\mindx{i}_{n_{\ell-1}},(i_1\cdots i_rpq),\mindx{i}_{n_{\ell+1}},\ldots,\mindx{i}_{n_k}}{x_1, \ldots,x_{\ell-1},t,x_{\ell+1},\ldots, x_k}{t}
\end{align}
and investigate the two appearing terms separately:

\paragraph{Derivative term.} Taking the derivative w.r.t.~$t_0$ of the higher-genus polylogarithm inside the integral with $1<\ell<k$ yields
\begin{small}
\begin{align}
	&\!\!\!\!\int_{t_0=z_0}^z\dd_{t_0}\Gargbare{\mindx{i}_{n_1}, \ldots,\mindx{i}_{n_{\ell-1}},(i_1\cdots i_rpq),\mindx{i}_{n_{\ell+1}},\ldots,\mindx{i}_{n_k}}{x_1, \ldots,x_{\ell-1},t_0,x_{\ell+1},\ldots, x_k}{t_0}\notag\\
	&=
	\int_{t_0=z_0}^z\omega_{\mindx{i}_{n_1}}(t_0,x_1)\Gargbare{\mindx{i}_{n_2}, \ldots ,\mindx{i}_{n_{\ell-1}},(i_1\cdots i_rpq),\mindx{i}_{n_{\ell+1}},\ldots,\mindx{i}_{n_k}}{x_2, \ldots,x_{\ell-1},t_0,x_{\ell+1},\ldots, x_k}{t_0}\notag\\
	&\quad+\int_{t_0=z_0}^z\!\!\left(\prod_{j=1}^{\ell-1}\int_{t_j=z_0}^{t_{j-1}}\omega_{\mindx{i}_{n_j}}(t_j,x_j)\right)\int_{t_\ell=z_0}^{t_{\ell-1}}\hspace{-0.4ex}\underbrace{\dd_{t_0}\omega_{i_1\cdots i_rpq}(t_\ell,t_0)}_{(-1)^{r}\dd_{t_\ell}\omega_{i_r\cdots i_1pq} (t_0,t_\ell)}\hspace{-1.0ex}\Gargbare{\mindx{i}_{n_{\ell+1}}, \ldots ,\mindx{i}_{n_k}}{x_{\ell+1},\ldots, x_k}{t_\ell}\notag\\
	&\!\!\!\overset{\text{(P.I.)}}{=}\int_{t_0=z_0}^z\omega_{\mindx{i}_{n_1}}(t_0,x_1)\Gargbare{\mindx{i}_{n_2}, \ldots ,\mindx{i}_{n_{\ell-1}},(i_1\cdots i_rpq),\mindx{i}_{n_{\ell+1}},\ldots,\mindx{i}_{n_k}}{x_2, \ldots,x_{\ell-1},t_0,x_{\ell+1},\ldots, x_k}{t_0}\notag\\
	&\quad+(-1)^{r}\!\!\int_{t_0=z_0}^z\!\!\left(\prod_{j=1}^{\ell-1}\int_{t_j=z_0}^{t_{j-1}}\omega_{\mindx{i}_{n_j}}(t_j,x_j)\right)\int_{t_\ell=z_0}^{t_{\ell-1}}\bigg[\dd_{t_\ell}\left(\omega_{i_r\cdots i_1pq} (t_0,t_\ell)\Gargbare{\mindx{i}_{n_{\ell+1}}, \ldots ,\mindx{i}_{n_k}}{x_{\ell+1},\ldots, x_k}{t_\ell}\right)\notag\\
	&\hspace{57ex}-\omega_{i_r\cdots i_1pq} (t_0,t_\ell)\hspace{-4ex}\underbrace{\dd_{t_\ell}\Gargbare{\mindx{i}_{n_{\ell+1}}, \ldots ,\mindx{i}_{n_k}}{x_{\ell+1},\ldots, x_k}{t_\ell}}_{\omega_{\mindx{i}_{n_{\ell+1}}}(t_\ell,x_{\ell+1})\Gargbare{\mindx{i}_{n_{\ell+2}}, \ldots ,\mindx{i}_{n_k}}{x_{\ell+2},\ldots, x_k}{t_\ell}}\hspace{-4ex}\bigg]\notag\\
	&=\int_{t_0=z_0}^z\omega_{\mindx{i}_{n_1}}(t_0,x_1)\Gargbare{\mindx{i}_{n_2}, \ldots ,\mindx{i}_{n_{\ell-1}},(i_1\cdots i_rpq),\mindx{i}_{n_{\ell+1}},\ldots,\mindx{i}_{n_k}}{x_2, \ldots,x_{\ell-1},t_0,x_{\ell+1},\ldots, x_k}{t_0}\notag\\
	&\quad+(-1)^{r}\!\!\int_{t_0=z_0}^z\!\!\left(\prod_{j=1}^{\ell-2}\int_{t_j=z_0}^{t_{j-1}}\!\!\omega_{\mindx{i}_{n_j}}(t_j,x_j)\!\right)\!\int_{t_{\ell-1}=z_0}^{t_{\ell-2}}\!\!\omega_{\mindx{i}_{n_{\ell-1}}}(t_{\ell-1},x_{\ell-1})\,\omega_{i_r\cdots i_1pq} (t_0,t_{\ell-1})\Gargbare{\mindx{i}_{n_{\ell+1}}, \ldots ,\mindx{i}_{n_k}}{x_{\ell+1},\ldots, x_k}{t_{\ell-1}}\notag\\
	&\quad-(-1)^{r}\!\!\int_{t_0=z_0}^z\!\!\left(\prod_{j=1}^{\ell-1}\int_{t_j=z_0}^{t_{j-1}}\!\!\omega_{\mindx{i}_{n_j}}(t_j,x_j)\!\right)\!\int_{t_\ell=z_0}^{t_{\ell-1}}\!\!\omega_{i_r\cdots i_1pq} (t_0,t_\ell)\,\omega_{\mindx{i}_{n_{\ell+1}}}(t_\ell,x_{\ell+1})\Gargbare{\mindx{i}_{n_{\ell+2}}, \ldots ,\mindx{i}_{n_k}}{x_{\ell+2},\ldots, x_k}{t_\ell},
\end{align}
\end{small}

\noindent where we used identity \eqref{eqn:diffid} to swap the arguments of the kernel with a derivative acting on it, as well as partial integration (P.I.). Now we encounter quadratic kernel terms inside our iterated integral, which calls for the use of the quadratic 3-point relations derived in~\secref{sec:kernelidentities}, to reduce the quadratic terms in such a way, that we have a proper iterated integral again (with only one kernel used per integration step). 
Thus, application of fitting instances of the quadratic identity \eqn{eqn:kernelId} will reduce the r.h.s.~of the equation to iterated integrals, where $z$ is removed from the labels. As the result becomes very lengthy and involved in the general case, we will not state it here but will look at specific examples of this identity later.

If the argument appears to be the pole of the innermost integral ($\ell=k$) the above result simplifies and we find
\begin{small}
\begin{align}
	\int_{t_0=z_0}^z\dd_{t_0}&\Gargbare{\mindx{i}_{n_1}, \ldots, \mindx{i}_{n_{k-1}}, (i_1\cdots i_rpq)}{x_1, \ldots, x_{k-1}, t_0}{t_0}\notag\\
	&=\int_{t_0=z_0}^z\omega_{\mindx{i}_{n_1}}(t_0,x_1)\Gargbare{\mindx{i}_{n_2},\ldots,\mindx{i}_{n_{k-1}},(i_1\cdots i_rpq)}{x_2, \ldots, x_{k-1},t_0}{t_0}\notag\\
	&\qquad+\int_{t_0=z_0}^z\!\!\left(\prod_{j=1}^{k-1}\int_{t_k=z_0}^{t_{j-1}}\!\!\omega_{\mindx{i}_{n_j}}(t_j,x_j)\!\right)\!\int_{t_k=z_0}^{t_{k-1}}\underbrace{\dd_{t_0}\omega_{i_1\cdots i_rpq}(t_k,t_0)}_{(-1)^{r}\dd_{t_k}\omega_{i_r\cdots i_1pq}(t_0,t_k)}\notag\\
	&=\int_{z_0}^z\omega_{\mindx{i}_{n_1}}(t_0,x_1)\Gargbare{\mindx{i}_{n_2},\ldots,\mindx{i}_{n_{k-1}},(i_1\cdots i_rpq)}{x_2, \ldots, x_{k-1},t_0}{t_0}\notag\\
	&\qquad+(-1)^{r}\int_{t_0=z_0}^z\!\!\left(\prod_{j=1}^{k-1}\int_{t_j=z_0}^{t_{j-1}}\!\!\omega_{\mindx{i}_{n_j}}(t_j,x_j)\!\right)\!\left(\omega_{i_r\cdots i_1pq}(t_0,t_{k-1})-\omega_{i_r\cdots i_1pq}(t_0,z_0)\right)\notag\\
	&=\int_{t_0=z_0}^z\omega_{\mindx{i}_{n_1}}(t_0,x_1)\Gargbare{\mindx{i}_{n_2},\ldots,\mindx{i}_{n_{k-1}},(i_1\cdots i_rpq)}{x_2, \ldots, x_{k-1},t_0}{t_0}\notag\\
	&\qquad+(-1)^{r}\int_{t_0=z_0}^z\!\!\left(\prod_{j=1}^{k-2}\int_{t_j=z_0}^{t_{j-1}}\!\!\omega_{\mindx{i}_{n_j}}(t_j,x_j)\!\right)\!\int_{t_{k-1}=z_0}^{t_{k-2}}\omega_{\mindx{i}_{n_{t_{k-1}}}}(t_{k-1},x_{k-1})\,\omega_{i_r\cdots i_1pq}(t_0,t_{k-1})\notag\\
	&\qquad-(-1)^{r}\Gargbare{(i_r\cdots i_1pq),\mindx{i}_{n_1},\ldots,\mindx{i}_{n_{k-1}}}{z_0,x_1,\ldots,x_{k-1}}{z}.
	\label{eqn:polylogellk}
\end{align}
\end{small}

The other special case, $\ell=1$ is more involved: since the outermost integration is divergent in general, a regularization prescription for the higher-genus polylogarithms (and associated higher-genus MZVs) is required.
Whenever regularization is required below, we will either comment on how we deal with it or avoid the corresponding polylogarithms. Making regularization rigorous is beyond the scope of the current article and will be part of future work.   

A further problem sets the situation at higher genera apart from the elliptic case: denoting by $\Tilde\omega$ the function part of integration kernels, e.g.~
\begin{equation}
  \omega_{i_1\cdots i_r jk}(z,x)=:\Tilde{\omega}_{i_1\cdots i_r jk}(z,x)\dd z\,,
\end{equation}
the anti-symmetry in the elliptic case \cite{Broedel:2014vla}
\begin{align}
	\label{eqn:ellipticfail}
  \omega_{\underbrace{\text{\scriptsize $1\cdots1$}}_{n+1}}(z,x)\at_{\genus=1}=(-1)^n\omega_{\underbrace{\text{\scriptsize $1\cdots1$}}_{n+1}}(x,z)\at_{\genus=1}
\end{align}
could be used to cancel separately divergent terms (as can be seen in the example in \eqn{eqn:ell1}, reduced to $\genus=1$).
Unfortunately, this anti-symmetry property does not carry over to higher genera in an obvious way. Even worse, identity~\eqref{eqn:diffid} involves reversal of the order of indices, which prohibits all obvious cancellation identities. The following depth-one example is the higher-genus version of \eqn{eqn:ellipticfail}:
\begin{align}\label{eqn:ell1}
 	\Gargbare{(i_1\cdots i_r jk)}{z}{z}&=\int_{t_0=z_0}^z\underbrace{\dd t_0\frac{\dd}{\dd t_0}}_{=:\dd_{t_0}}\Gargbare{(i_1\cdots i_r jk)}{t_0}{t_0}+\lim_{t\to z_0}\Gargbare{(i_1\cdots i_r jk)}{t}{t}\notag\\
 	&=\int_{t_0=z_0}^z\lim_{x\to t_0}\left(\dd_{t_0}\Gargbare{(i_1\cdots i_r jk)}{x}{t_0}+\dd t_0\frac{\dd}{\dd x}\Gargbare{(i_1\cdots i_r jk)}{x}{t_0}\right)+\lim_{t\to z_0}\Gargbare{(i_1\cdots i_r jk)}{t}{t}\notag\\
 	&=\int_{t_0=z_0}^z\lim_{x\to t_0}\bigg(\omega_{i_1\cdots i_r jk}(t_0,x)+\dd t_0\int_{t_1=z_0}^{t_0}\underbrace{\frac{\dd}{\dd x}\omega_{i_1\cdots i_r jk}(t_1,x)}_{(-1)^r\dd_{t_1}\Tilde{\omega}_{i_r\cdots i_1jk}(x,t_1)}\bigg)\notag\\
 	&\quad+\lim_{t\to z_0}\Gargbare{(i_1\cdots i_r jk)}{t}{t}\notag\\
 	&=\int_{t_0=z_0}^z\lim_{x\to t_0}\dd t_0\bigg(\Tilde{\omega}_{i_1\cdots i_r jk}(t_0,x)+(-1)^r\left(\Tilde{\omega}_{i_r\cdots i_1jk}(x,t_0)-\Tilde{\omega}_{i_r\cdots i_1jk}(x,z_0)\right)\bigg)\notag\\
 	&\quad+\lim_{t\to z_0}\Gargbare{(i_1\cdots i_r jk)}{t}{t},
\end{align}
where we used the differential identity \eqref{eqn:diffid} moving to the fourth line. Taking the above expression as starting point, it would be desirable to make the expected cancellation obvious. Alternatively, one could try to transform rewrite the expression back into polylogarithms, for which a relation between $\Tilde{\omega}_{i_1\cdots i_rjk}(z,x)$ and $\Tilde{\omega}_{i_r\cdots i_1jk}(x,z)$ would be useful. However, as alluded to above, such relations are not known to us beyond genus one. Nevertheless, we are going to consider \eqn{eqn:ell1} for $r=0$ and $r=1$ in \exref{ex:polylog3}.

\paragraph{Boundary term.} Let us now consider the boundary term 
\begin{equation}
	\lim_{t\to z_0}\Gargbare{\mindx{i}_{n_1}, \ldots ,\mindx{i}_{n_{\ell-1}},\mindx{i}_{n_\ell},\mindx{i}_{n_{\ell+1}},\ldots,\mindx{i}_{n_k}}{x_1, \ldots,x_{\ell-1},t,x_{\ell+1},\ldots, x_k}{t}.
\end{equation}
In most cases this term will vanish, because the length of the integration path shrinks to zero as $t\to z_0$. The only non-vanishing situation occurs, if \emph{all} exhibit poles. 
As discussed around \eqn{eqn:kernelRes}, the only integration kernels with poles in the fundamental domain are those of the form $\omega_{jj}(z,x)$ for some $j\in\{1,\ldots,\genus\}$, where the pole is at $z=x$. 
Lemma 8 of \rcite{EnriquezHigher} implies the following expansion (no sum over $j$)
\begin{equation}\label{eqn:kernel1exp}
	\omega_{jj}(z,x)=-\frac{1}{2\pi\iunit}\frac{1}{z-x}+\cO(1),\quad\text{for }z\to x,\quad\forall j\in\{1,\ldots,\genus\}.
\end{equation}
Using the above expansion, the boundary term for labels $\mindx{i}_{n_m}=(j_m,j_m)$, for $j_m\in\{1,\ldots,\genus\}$, can be written as (no implicit summation)
\begin{align}
	\lim_{t_0\to z_0}&\Tilde{\Gamma}\left(\begin{smallmatrix}(j_1,j_1), & \ldots, & (j_{\ell-1},j_{\ell-1}), & (j_\ell,j_\ell), & (j_{\ell+1}, j_{\ell+1}), & \ldots, & (j_{k},j_{k})\\ x_1, & \ldots, & x_{\ell-1}, & t_0, & x_{\ell+1}, & \ldots, & x_k\end{smallmatrix};t_0\right)\notag\\
	&=\lim_{t_0\to z_0}\frac{1}{(-2\pi\iunit)^k}\left(\prod_{m=1}^{\ell-1}\int_{z_0}^{t_{m-1}}\frac{\dd t_m}{t_m-x_m}\right)\int_{z_0}^{t_{\ell-1}}\frac{\dd t_\ell}{t_\ell-t_0}\left(\prod_{m=\ell+1}^{k}\int_{z_0}^{t_{m-1}}\frac{\dd t_m}{t_m-x_m}\right)\,.
\end{align}
The above expression vanishes unless all $x_m\in\{z_0,t_0\}$ as visible after applying substitution \eqref{eqn:boundvalsub} below. 

If all $x_m\in\{z_0,t_0\}$, we find (using $a_m\in\{0,1\}$ with $a_m=0$ if $x_m=z_0$ and $a_m=1$ if $x_m=t_0$)
\begin{gather}
\begin{aligned}
	\lim_{t_0\to z_0}\Tilde{\Gamma}\left(\begin{smallmatrix}(j_1,j_1), & \ldots, & (j_{k},j_{k})\\ x_1, & \ldots, & x_k\end{smallmatrix};t_0\right)&=\lim_{t_0\to z_0}\Tilde{\Gamma}\left(\begin{smallmatrix}(j_1,j_1), & \ldots, & (j_{k},j_{k})\\ a_1(t_0-z_0)+z_0, & \ldots, & a_k(t_0-z_0)+z_0\end{smallmatrix};t_0\right)\\
	&=\lim_{t_0\to z_0}\frac{1}{(-2\pi\iunit)^k}\prod_{m=1}^{k}\int_{z_0}^{t_{m-1}}\frac{\dd t_m}{t_m-a_m(t_0-z_0)-z_0}\\
	&=\lim_{t_0\to z_0}\frac{1}{(-2\pi\iunit)^k}\prod_{m=1}^{k}\int_{0}^{t'_{m-1}}\frac{\dd t'_m}{t'_m-a_m}\\
	&=\frac{1}{(-2\pi\iunit)^k}G(a_1,\ldots,a_k;1),
	\label{eqn:boundval}
\end{aligned}
\end{gather}
where we used the substitution
\begin{equation}\label{eqn:boundvalsub}
	t_i'=\frac{t_i-z_0}{t_0-z_0},\quad i=1,\ldots,k,
\end{equation}
with $t_0'=1$ and where $G(\cdots,z)$ is the Goncharov polylogarithm \cite{Goncharov:2001iea}. When evaluated at $z=1$, Goncharov polylogarithms constitute MZVs via:
\begin{equation}
	G(\underbrace{0,\ldots,0}_{n_r-1},1,\underbrace{0,\ldots,0}_{n_{r-1}-1},1,\ldots,\underbrace{0,\ldots,0}_{n_1-1},1;1)=(-1)^{r}\zeta(n_1,\ldots,n_r).
\end{equation}
In summary, the boundary value \eqref{eqn:boundval} vanishes in most cases and otherwise yields (up to factors of $-2\pi\iunit$) a genus-zero MZV. If polylogarithms are considered which need regularization, this MZV from the boundary might also require to be regularized, which can be implemented through the usual shuffle regularization of MZVs.\medskip

Now that we have investigated the general cases of these higher-genus polylogarithms identities, we will exemplify this in the following, where specifically \exref{ex:polylog2} is remarkable due to the appearance of MZVs. Note, that in all of these examples there is no implicit summation over repeated indices.

\begin{example}\label{ex:polylog2}
		Let $j,k\in\{1,\ldots,h\}$, then as in \eqn{eqn:identitygeneral}, we can write (no summation over the indices $j,k$)
		\begin{equation}\label{eqn:ex3}
		\Gargbare{(jj), (kk)}{z_0, z}{z}=\int_{t_0=z_0}^{z}\dd_{t_0}\Gargbare{(jj), (kk)}{z_0,t_0}{t_0}+\lim_{t\rightarrow z_0}\Gargbare{(jj),(kk)}{z_0, t}{t}.
	\end{equation}
	For the derivative term, using the quadratic identity \eqref{eqn:kernelId} on \eqn{eqn:polylogellk}, we find:
	\begin{align}
		\int_{z_0}^z\dd_{t_0}&\Gargbare{(jj),(kk)}{z_0,t_0}{t_0}\notag\\
		&=\int_{t_0=z_0}^z\omega_{jj}(t_0,z_0)\int_{t_1=z_0}^{t_0}\omega_{kk}(t_1,t_0)-\int_{t_0=z_0}^z\omega_{kk}(t_0,z_0)\int_{t_1=z_0}^{t_0}\omega_{jj}(t_1,z_0)\notag\\
		&\quad+\int_{t_0=z_0}^z\omega_{kk}(t_0,z_0)\int_{t_1=z_0}^{t_0}\omega_{jj}(t_1,z_0)\notag\\
		&\quad-\sum_{m=1}^\genus\left[\int_{t_0=z_0}^z\omega_{m}(t_0)\int_{t_1=z_0}^{t_0}\omega_{mjj}(t_1,z_0)+\int_{t_0=z_0}^z\omega_{m}(t_0)\int_{t_1=z_0}^{t_0}\omega_{jmj}(t_1,t_0)\right.\notag\\
		&\qquad\qquad\left.+\int_{t_0=z_0}^z\omega_{mjj}(t_0,z_0)\int_{t_1=z_0}^{t_0}\omega_{m}(t_1)+\int_{t_0=z_0}^z\omega_{mj}(t_0,z_0)\int_{t_1=z_0}^{t_0}\omega_{jm}(t_1,t_0)\right]\notag\\
		&=\Gargbare{(jj),(kk)}{z_0,x_0}{z}\notag\\
		&\quad-\sum_{m=1}^\genus\left[\Gargbare{(m),(mjj)}{-,z_0}{z}+\Gargbare{(m),(jmj)}{-,z_0}{z}+\Gargbare{(mjj),(m)}{z_0,-}{z}+\Gargbare{(mj),(jm)}{z_0,x_0}{z}\right],
	\end{align}
	where there is no implicit summation over $j$ and $k$, and ``$-$'' denotes that the corresponding kernel is independent of the second variable and $x_0$ is an arbitrary point on the universal cover. To reduce all integrals to higher-genus polylogarithms in the above equation, we furthermore used the identity (no summation, and arbitrary $j,m$) $\Gargbare{(jmj)}{t_0}{t_0}=\Gargbare{(jmj)}{z_0}{t_0}$, which is an instance of the identity \cite[Eqn.~(9.51)]{DS:Fay} and identity \eqref{eqn:matching}.
	
	For the boundary value, following the derivation in \eqn{eqn:boundval}, we find an MZV (no summation over~$j,k$):
	\begin{align}
		\lim_{z\rightarrow z_0}\Gargbare{(jj),(kk)}{z_0, z}{z}
		&=\frac{1}{(-2\pi\iunit)^2}G(0,1;1)=-\frac{1}{(-2\pi\iunit)^2}\zeta(2)=\frac{1}{24}.
	\end{align}
	For genus $\genus\,{=}\,2$ and $j\,{=}\,k\,{=}\,1$, identity \eqref{eqn:ex3} reduces to
	\begin{align}
		\Gargbare{(11), (11)}{z_0, z}{z}&=-\Gargbare{(111),(1)}{z_0,-}{z}-\Gargbare{(211),(2)}{z_0,-}{z}-\Gargbare{(21),(12)}{-,-}{z}\notag\\
		&\quad-\Gargbare{(2),(121)}{-,-}{z}-\Gargbare{(2),(211)}{-,z_0}{z}-2\Gargbare{(1),(111)}{-,z_0}{z}+\frac{1}{24}.
	\end{align}
	Using the tools presented in \rcite{BBILZ}, identities like this one can also be verified numerically.
\end{example}
\begin{example}\label{ex:polylog1}
	Consider now the trivial case, where the two last indices of the innermost kernel are different. For this, we consider a polylogarithm of length two, beginning with the identity ($i,j,k,\ell\in\{1,\ldots,\genus\}$ with $k\neq\ell$)
	\begin{align}\label{eqn:polylogex1}
		\Gargbare{(i),(jk\ell)}{x_0,z}{z}=\int_{t_0=z_0}^{z}\dd_{t_0}\Gargbare{(i),(jk\ell)}{x_0,t_0}{t_0}\,.
	\end{align} 
	In the above equation, both kernels do not have a pole and the boundary term vanishes due to the mechanism explained above. The point $x_0$ is arbitrary here (and can even be omitted) since the holomorphic differentials $\omega_i(z,x)$ depend on the first variable $z$ only. Adapted to this scenario \eqn{eqn:polylogellk} generates 
	\begin{align}
		\Gargbare{(i),(jk\ell)}{x_0,z}{z}&=\int_{t_0=z_0}^{z}\omega_i(t_0)\int_{t_1=z_0}^{t_0}\omega_{jk\ell}(t_1,t_0)+\int_{t_0=z_0}^{z}\omega_{jk\ell}(t_0,z_0)\int_{t_1=z_0}^{t_0}\omega_{i}(t_1)\notag\\
		&\quad-\int_{t_0=z_0}^{z}\int_{t_1=z_0}^{t_0}\omega_i(t_1)\,\omega_{jk\ell}(t_0,t_1)\notag\\
		&=\int_{t_0=z_0}^{z}\omega_i(t_0)\int_{t_1=z_0}^{t_0}\omega_{jk\ell}(t_1,t_0)+\int_{t_0=z_0}^{z}\omega_{jk\ell}(t_0,z_0)\int_{t_1=z_0}^{t_0}\omega_{i}(t_1)\notag\\
		&\quad-\int_{t_0=z_0}^{z}\omega_i(t_0)\int_{t_1=z_0}^{t_0}\omega_{jk\ell}(t_1,t_0)-\int_{t_0=z_0}^{z}\omega_{jk\ell}(t_0,a)\int_{t_1=z_0}^{t_0}\omega_{i}(t_1)\notag\\
		&\quad+\int_{t_0=z_0}^{z}\omega_i(t_0)\int_{t_1=z_0}^{t_0}\omega_{jk\ell}(t_1,b)\notag\\
		&=\Gargbare{(jk\ell), (i)}{z_0,x_0}{z}-\Gargbare{(jk\ell),(i)}{a,x_0}{z}+\Gargbare{(i),(jk\ell)}{x_0, b}{z},\label{eqn:ex2-1}
	\end{align}
	where in the next-to-last step we applied the quadratic identity \eqn{eqn:kernelId} to reduce the product $\omega_i(t_1)\,\omega_{jk\ell}(t_0,t_1)$. The points $a,\,b$ can be chosen on the universal cover arbitrarily.\\
	A similar derivation for $j,k,\ell\in\{1,\ldots,\genus\}$ with $k\neq\ell$ yields the shorter identity
	\begin{equation}\label{eqn:ex2-2}
		\Gargbare{(j), (k\ell)}{x_0, z}{z}=-\Gargbare{(k\ell), (j)}{z_0, x_0}{z}+\Gargbare{(k\ell), (j)}{a, x_0}{z}+\Gargbare{(j), (k\ell)}{x_0, b}{z},
	\end{equation}
	where as above $x_0$ is an irrelevant point and $a,b$ are arbitrary points.\\
	Note, that both identities eqs.~\eqref{eqn:ex2-1}, \eqref{eqn:ex2-2} are trivial upon noticing that any kernel $\omega_i(z,x)$ or $\omega_{i_1\ldots i_r pq}(z,x)$ for $p\neq q$ is independent of the second variable $x$ (cf.~\eqn{eqn:non-matching}).
\end{example}
\begin{example}\label{ex:polylog3}
	Let us consider the depth-one case, where the first integration kernel carries the label $z$. We consider the formula derived in \eqn{eqn:ell1} for the cases $r\,{=}\,0$ and $r\,{=}\,1$ and will only be concerned with this formula for $j=k$, as other cases are trivial. Throughout this example there is no summation over repeated indices.
	\begin{itemize}
	\item \textbf{$r\,{=}\,0$:} In this case, we deal with the kernel $\omega_{jj}(z,x)$, which has a pole at $z\,{=}\,x$. Separating the pole from the rest of the kernel, i.e.~writing
		\begin{equation}
			\Tilde{\omega}_{jj}(z,x)=-\frac{1}{2\pi \iunit}\frac{1}{z-x}+\varpi_j(z,x),
		\end{equation}
		with $\varpi_j(z,x)$ a function which is regular at $z\,{=}\,x$, we find in \eqn{eqn:ell1} for the case $r\,{=}\,0$ the result
		\begin{align}
			\Gargbare{(jj)}{z}{z}
			&=\int_{t_0=z_0}^z\lim_{x\to t_0}\dd t_0\bigg(-\frac{1}{2\pi \iunit}\frac{1}{t_0-x}+\varpi_j(t_0,x)-\frac{1}{2\pi \iunit}\frac{1}{x-t_0}+\varpi_j(x,t_0)\bigg)\notag\\
			&\quad-\int_{t_0=z_0}^z\underbrace{\lim_{x\to t_0}\dd t_0\,\Tilde{\omega}_{jj}(x,z_0)}_{\omega_{jj}(t_0,z_0)}+\underbrace{\lim_{t\to z_0}\Gargbare{(jj)}{t}{t}}_{-\frac{1}{2\pi \iunit}G(1;1)}\notag\\
			&=\int_{t_0=z_0}^z\lim_{x\to t_0}\dd t_0\bigg(\varpi_j(t_0,x)+\varpi_j(x,t_0)\bigg)-\Gargbare{(jj)}{z_0}{z}-\frac{1}{2\pi\iunit}G(1;1).
		\end{align}
		There are a couple of remarks necessary for this result: First, for both polylogarithms appearing in this equation, endpoint regularization is required. 
	Secondly, the Goncharov polylogarithm $G(1;1)$ needs to be regularized by setting $G(1;1)\,{\equiv}\,0$. Lastly, if the function $\varpi(z,x)$ was anti-symmetric under exchange of $z$ and $x$ the integrand in the above equation would vanish, resulting in an instance of the reflection identity \cite[Eqn.~(2.44)]{Broedel:2014vla} at genus one.
	Unfortunately, for genera higher than one, the terms do not cancel to our knowledge. Thus, in the higher-genus case it is not clear to us how to rewrite the above unintegrated expression into higher-genus polylogarithms.
	\item \textbf{$r\,{=}\,1$:} For this case, the reversal of the labels $i_1\cdots i_r$ is trivial and we encounter a cancellation in the unintegrated term of \eqn{eqn:ell1} (all functions are regular at $x=t_0$ so that the limit can be taken without problems):
	\begin{align}
		\int_{t_0=z_0}^z&\lim_{x\to t_0}\dd t_0\bigg(\Tilde{\omega}_{i_1\cdots i_r jk}(t_0,x)+(-1)^r\left(\Tilde{\omega}_{i_r\cdots i_1jk}(x,t_0)-\Tilde{\omega}_{i_r\cdots i_1jk}(x,z_0)\right)\bigg)\notag\\
		&\!\!\!\!\overset{(r=1)}{=}\int_{t_0=z_0}^z\lim_{x\to t_0}\dd t_0\bigg(\Tilde{\omega}_{ijk}(t_0,x)-\Tilde{\omega}_{ijk}(x,t_0)+\Tilde{\omega}_{ijk}(x,z_0)\bigg)\notag\\
		&=\int_{t_0=z_0}^z\underbrace{\dd t_0\,\Tilde{\omega}_{ijk}(t_0,z_0)}_{\omega_{ijk}(t_0,z_0)}\notag\\
		&=\Gargbare{(ijk)}{z_0}{z}.
	\end{align}
	Furthermore, the boundary term vanishes and we are left with the identity
	\begin{equation}
		\Gargbare{(ijk)}{z}{z}=\Gargbare{(ijk)}{z_0}{z},
	\end{equation}
	which is an instance of the relation \cite[Eqn.~(9.51)]{DS:Fay}. Similar identities arise when all indices $i_1,\ldots,i_r$ are equal and $r$ is odd, as then the reversal of the indices $i_1\cdots i_r$ is trivial and the $(-1)^r$ factor leads to a cancellation (cf.~again \cite[Eqn.~(9.51)]{DS:Fay}). However, the general case of arbitrary labels $i_1\cdots i_r$ does not have such an immediate cancellation.
	\end{itemize}
\end{example}

\section{Summary and Outlook}
\label{sec:summary}
In this article we derived Fay-like identities for meromorphic kernels in the construction of polylogarithms on higher-genus Riemann surfaces from relations for Enriquez' meromorphic generating function. Various types of identities can be created by combining generating functions in a way such that residues cancel and periodicity properties are matched. Once considered at genus one, our identities reduce to the well-known Fay identity for the Kronecker function at genus one.   

Upon expanding every generating function within the identities into Enriquez' meromorphic integration kernels for polylogarithms, we find all kernel identities conjectured recently in \rcite{DS:Fay}. We prove that the space of identities is exhausted by our identities.  

In \secref{sec:funcrel}, we provide several examples for functional relations among higher-genus polylogarithms.  We can verify the shown relations numerically using the Schottky approach \cite{BBILZ}. When specializing to particular arguments of the higher-genus polylogarithms in these functional relations, one can obtain identities for multiple zeta values associated to  higher-genus polylogarithms. However, a proper definition of higher-genus multiple zeta values and a survey of their properties is beyond the scope of this article. 

\medskip
There is a couple of immediate open questions resulting from the article: 
\begin{itemize}
	\item For Riemann surfaces of genera zero and one, the Drinfeld \cite{Drinfeld2} and elliptic/KZB associators \cite{Enriquez:EllAss}, respectively, have been facilitated for building recursion algorithms for the calculation of various observables in quantum field theory and string theory. Thus, it would be interesting to explore representations for associators for connections on higher-genus surfaces. The most pressing question here is, whether information can be gained from the degeneration of those objects to the known associators at genera zero and one.
	\item It would be interesting to know to what extent the functional identities implied by Fay-like identities allow to single out bases for cohomologies in classes of higher-genus polylogarithms. This might as well touch upon the question in what way the uniqueness and exhaustiveness of Fay-like identities might help in showing closure under taking primitives. 
	\item Functional relations for higher-genus polylogarithms are believed to imply relations between higher-genus multiple zeta values. Can one extend the collection of examples in the current article to a survey of relations between higher-genus zeta values leading to a data mine similar to the two existing ones for genus-zero and genus-one multiple zeta values \cite{Blumlein:2009cf,Broedel:2015hia}?
	\item The most pressing question on a structural level is the relation of higher-genus Fay-like identities to the Fay trisecant equation \cite{fay,mumford1984tata}.
	\item In \rcite{DS:Fay} it was conjectured that all identities derived for periodic kernels would carry over to identities for meromorphic kernels. It would be nice to know, whether the opposite direction could be proven: in this case every identity for meromorphic generating functions should have an echo for the properly periodic kernels.
\end{itemize}

%%%%%%%%%%%%%%%%%%%%%%%%%%%%%%%%%%%%%%%%%%%%%%%%%%%%%%%%%%%%%%%%%%%%%%%%%%%%%%
%%%%%%%%%%%%%%%%%%%%%%%%%%%%%%%%%%%%%%%%%%%%%%%%%%%%%%%%%%%%%%%%%%%%%%%%%%%%%%
%%%%%%%%%%%%%%%%%%%%%%%%%%%%%%%%%%%%%%%%%%%%%%%%%%%%%%%%%%%%%%%%%%%%%%%%%%%%%%
%%%%%%%%%%%%%%%%%%%%%%%%%%%%%%%%%%%%%%%%%%%%%%%%%%%%%%%%%%%%%%%%%%%%%%%%%%%%%%

\subsection*{Acknowledgments}
We are grateful to Federico Zerbini and Zhexian Ji for various discussions and work on related projects. We thank Eric D'Hoker and Oliver Schlotterer for comments on a draft version of this article.\\
The work of K.B., J.B.~and E.I.~is partially supported by the Swiss National Science Foundation through the NCCR SwissMAP. The research of K.B., J.B.~and Y.M.~was supported by the Munich Institute for Astro-, Particle and BioPhysics (MIAPbP) which is funded by the Deutsche Forschungsgemeinschaft (DFG, German Research Foundation) under Germany's Excellence Strategy - EXC-2094 - 390783311.

%%%%%%%%%%%%%%%%%%%%%%%%%%%%%%%%%%%%%%%%%%%%%%%%%%%%%%%%%%%%%%%%%%%%%%%%%%%%%%
%%%%%%%%%%%%%%%%%%%%%%%%%%%%%%%%%%%%%%%%%%%%%%%%%%%%%%%%%%%%%%%%%%%%%%%%%%%%%%
%%%%%%%%%%%%%%%%%%%%%%%%%%%%%%%%%%%%%%%%%%%%%%%%%%%%%%%%%%%%%%%%%%%%%%%%%%%%%%
%%%%%%%%%%%%%%%%%%%%%%%%%%%%%%%%%%%%%%%%%%%%%%%%%%%%%%%%%%%%%%%%%%%%%%%%%%%%%%

\vspace*{2cm}\newpage
\noindent{\LARGE\textbf{Appendix}}
\addcontentsline{toc}{section}{Appendix}
\appendix

%%%%%%%%%%%%%%%%%%%%%%%%%%%%%%%%%%%%%%%%%%%%%%%%%%%%%%%%%%%%%%%%%%%%%%%%%%%%%%
%%%%%%%%%%%%%%%%%%%%%%%%%%%%%%%%%%%%%%%%%%%%%%%%%%%%%%%%%%%%%%%%%%%%%%%%%%%%%%
%%%%%%%%%%%%%%%%%%%%%%%%%%%%%%%%%%%%%%%%%%%%%%%%%%%%%%%%%%%%%%%%%%%%%%%%%%%%%%
%%%%%%%%%%%%%%%%%%%%%%%%%%%%%%%%%%%%%%%%%%%%%%%%%%%%%%%%%%%%%%%%%%%%%%%%%%%%%%

\section{Computations with component forms}

\subsection{Properties of the component forms}\label{app:compprop}

In this appendix, we explicitly show the properties of the component forms \eqref{eqn:firstcomp} and \eqref{eqn:secondcomp} stated in section \secref{sec:componentforms}. Based on the properties \eqref{eqn:kernelQPz}, \eqref{eqn:kernelQPx} and \eqref{eqn:kernelRes} of the integration kernels, we can now verify the properties of the component forms directly based of their definitions. 
\paragraph{First component form.} We verify the properties \eqref{eqn:firstcompquasiz}, \eqref{eqn:firstcompquasix} and \eqref{eqn:firstcompres} by direct computation. For the quasi-periodicity in $z$, we obtain
\begin{equation}
  \begin{aligned}
    K_{\bsub j}(\gamma_kz,x)&=\sum_{r\geq0} \sum_{m=0}^r\frac{1}{m!}\delta_{ki_1\ldots i_m}\,\omega_{i_{m+1}\ldots i_rj}(z,x)\,b_{i_1}\ldots b_{i_r} \\
    &=\sum_{r\geq0}\sum_{m=0}^r \frac{1}{m!}\omega_{i_{m+1}\ldots i_rj}(z,x)\,b_k^mb_{i_{m+1}}\ldots b_{i_r} \\
    &=\sum_{m\geq0}\frac{b_k^m}{m!}\sum_{r\geq m} \omega_{i_{m+1}\ldots i_rj}(z,x)\,b_{i_{m+1}}\ldots b_{i_r} \\
    &=\eunit^{b_k}K_{\bsub j}(z,x) \, ,
  \end{aligned}
\end{equation}
where we have used \eqn{eqn:kernelQPz} in the first line, resolved the Kronecker delta in the second line, regrouped the sums in the third line and relabeled $r\rightarrow r-m$, $i_{m+n}\rightarrow i_n$ in the last line. Completely analogous, we obtain for the quasi-periodicity in $x$
\begin{equation}
  \begin{aligned}
    K_{\bsub j}(z,\gamma_kx)
    &=
    K_{\bsub j}(z,x)+\sum_{r\geq0} \delta_{i_rj}\sum_{m=0}^{r-1}\frac{(-1)^{m+1}}{(m+1)!}\delta_{ki_{r-1}\ldots i_{r-m}}\,\omega_{i_1\ldots i_{r-m-1}k}(z,x)\,b_{i_1}\ldots b_{i_r} \\
    &=K_{\bsub j}(z,x)+\sum_{r\geq1}\sum_{m=1}^r \frac{(-1)^m}{m!}\omega_{i_1\ldots i_{r-m}k}(z,x)\,b_{i_1}\ldots b_{i_{r-m}}b_k^{m-1}b_j \\
    &=K_{\bsub j}(z,x)+K_{\bsub k}(z,x)\left(\sum_{m\geq1}\frac{(-1)^m}{m!}b_k^{m-1}\right)b_j \, ,
  \end{aligned}
\end{equation}
where we have used \eqn{eqn:kernelQPx} and similar manipulations as above. Using the exponential series, we can formally identify
\begin{equation}
  \sum_{m\geq1}\frac{(-1)^m}{m!}b_k^{m-1} = \frac{\eunit^{-b_k}-1}{b_k} \, ,
\end{equation}
thereby arriving at \eqn{eqn:firstcompquasix}. For the residue, \eqn{eqn:kernelRes} directly yields \eqn{eqn:firstcompres} upon resolution of the Kronecker delta. 

\paragraph{Second component form.} For the second component form, we perform a very similar procedure, but now paying attention to the second flavor of power counting variables. Again, using \eqn{eqn:kernelQPz} we find
\begin{equation}
  \begin{aligned}
    K_{\bsub[1] i \bsub[2] j}(\gamma_kz,x) &= \sum_{r,s\geq0}\left(\sum_{m=0}^r \frac{\omega_{i_{m+1}\ldots i_rip_1\ldots p_sj}(z,x)}{m!}b_k^mb_{i_{m+1}}\ldots b_{i_r}\otimes b_{p_1}\ldots b_{p_s}\right) \\ 
    &\quad+ \delta_{ik}\sum_{r,s\geq0}\left(\sum_{m=0}^s\sum_{p_{m+1},\ldots,p_s=1}^h\frac{\omega_{p_{m+1}\ldots p_sj}(z,x)}{(r+m+1)!}(b_k)^r\otimes(b_k)^mb_{p_{m+1}}\ldots b_{p_s}\right) \\
    &= \sum_{m\geq0}\frac{(b_k^m\otimes 1)}{m!}\sum_{\substack{s\geq 0 \\ r\geq m}}\sum_{\substack{i_{m+1},\ldots,i_r=1\\ p_1,\ldots,p_s=1}}^h\omega_{i_{m+1}\ldots i_rip_1\ldots p_sj}(z,x)\,b_{i_{m+1}}\ldots b_{i_r}\otimes b_{p_1}\ldots b_{p_s} \\
    &\quad+ \delta_{ik}\sum_{r,m\geq0}\frac{(b_k)^r\otimes(b_k)^m}{(r+m+1)!}\sum_{s\geq m}\sum_{p_{m+1},\ldots,p_s=1}^h\omega_{p_{m+1}\ldots p_sj}(z,x)(1\otimes b_{p_{m+1}}\ldots b_{p_s}) \\
    &= (\eunit^{b_k}\otimes 1)K_{\bsub[1] i \bsub[2] j}(z,x) + \delta_{ik}\left(\sum_{r,m\geq0}\frac{(b_k)^r\otimes(b_k)^m}{(r+m+1)!}\right) K_{\bsub[2]j}(z,x)
  \end{aligned}
\end{equation}
for the quasi-periodicity in $z$. This is exactly \eqn{eqn:seccompquasiz}. Moreover, we can use the exponential and geometric series to formally identify
\begin{equation}
  \begin{aligned}
    \sum_{r,m\geq0}\frac{(b_k)^r\otimes (b_k)^m}{(r+m+1)!}&=\sum_{r\geq0}\left(b_k\otimes b_k^{-1}\right)^r\left(1\otimes\frac{\eunit^{b_k}}{b_k}-1\otimes\sum_{m=0}^r\frac{(b_k)^{m-1}}{m!}\right) \\
    &=\frac{1\otimes \eunit^{b_k}}{1\otimes b_k-b_k\otimes1} - \sum_{r\geq 0}\sum_{m\geq 0}\left(b_k\otimes b_k^{-1}\right)^r\frac{1}{1\otimes b_k}\frac{(b_k)^m\otimes 1}{m!} \\
    &=\frac{1\otimes \eunit^{b_k}-\eunit^{b_k}\otimes1}{1\otimes b_k-b_k\otimes 1}  .
  \end{aligned}
\end{equation}
For the quasi-periodicity in $x$ in the special case of $K_{(b)ij}(z,x)$, we again use \eqn{eqn:kernelQPx} to derive
  \begin{align}
    K_{\bsub ij}(z,\gamma_kx) & = K_{\bsub ij}(z,x)+ \delta_{ij}\sum_{r\geq0}\sum_{m=0}^r\frac{(-1)^{m+1}}{(m+1)!}\sum_{i_1,\ldots,i_{r-m}=1}^h\omega_{i_1\ldots i_{r-m}k}(z,x)\,b_{i_1}\ldots b_{i_{r-m}}b_k^m \notag\\
    &=K_{\bsub ij}(z,x)+ \delta_{ij}\sum_{r\geq0}\sum_{i_1,\ldots,i_r=1}^h\omega_{i_1\ldots i_rk}(z,x)\,b_{i_1}\ldots b_{i_r}\left(\sum_{m\geq1}\frac{(-1)^m}{m!}b_k^{m-1}\right) \notag\\
    & = K_{\bsub ij}(z,x)+ \delta_{ij}K_{\bsub k}(z,x)\frac{\eunit^{-b_k}-1}{b_k}  ,
  \end{align}
which coincides with \eqn{eqn:seccompquasix}. The residue follows directly from resolution of the Kronecker delta in this case as well. We can also do this for the variations of the component forms involving the Hopf algebra morphisms. For example, we have 
\begin{equation}\label{eqn:quasiperiod_for_qfay}
 \begin{aligned} 
 K_{\bsub[2] i \Dbsub[12] j}(\gamma_kz,x)
 &=
 \mu_{23}\circ\permop_{12}\circ\Delta_2(K_{\bsub[1] i \bsub[2]j}(\gamma_kz,x)) \\
 &= \mu_{23}\circ\permop_{12}\circ\Delta_2\Bigg((\eunit^{b_k}\otimes 1)K_{\bsub[1] i \bsub[2] j}(z,x) \\
 &\quad+ \delta_{ik}\left(\frac{1\otimes \eunit^{b_k}-\eunit^{b_k}\otimes1}{1\otimes b_k-b_k\otimes 1} \right) K_{\bsub[2]j}(z,x)\Bigg) \\
 &= \mu_{23}\circ\permop_{12}\Bigg((\eunit^{b_k}\otimes 1 \otimes 1 )K_{\bsub[1] i \Dbsub[23] j}(z,x) \\
 &\quad+ \delta_{ik}\left(\frac{1\otimes\eunit^{b_k}\otimes \eunit^{b_k}-\eunit^{b_k}\otimes 1\otimes1}{1\otimes1\otimes b_k+1\otimes b_k\otimes 1-b_k\otimes1\otimes1} \right)K_{\Dbsub[23] j}(z,x)\Bigg) \\
 &= (1 \otimes \eunit^{b_k})K_{\bsub[2]i\Dbsub[12]j}(z,x)+ \delta_{ik}\frac{\eunit^{b_k}\otimes \eunit^{b_k}-1 \otimes \eunit^{b_k}}{b_k \otimes 1}K_{\Dbsub[12]j}(z,x)  ,
 \end{aligned}
\end{equation}
where we have used that the Hopf algebra maps are homomorphisms. Analogously, the quasi-periodicity of arbitrary variations of the component forms can be computed.

\subsection{Properties of the quadratic identity generating function}\label{app:fayprop}

In this appendix we derive the quasi-periodicity property as well as the residues of the generating function for the Fay-like identities defined in \eqn{eqn:HigherGenusFay}.
To show the analytical properties of $\QFay_{klm}(z,y,x)$, we can focus on $\qFay_{klm}(z,y,x)$. We start by deriving the quasi-periodicity in $z$. By using \eqn{eqn:firstcompquasiz}, we immediately obtain
\begin{equation}
    \label{eqn:Faytermquasiz}
    \qFay_{klm}(\gamma_i z,y,x)=(\eunit^{b_i} \otimes 1)\qFay_{klm}(z,y,x)  .
\end{equation}
The quasi-periodicity in $y$ is a little bit more involved, but applying equations \eqref{eqn:firstcompquasiz}, \eqref{eqn:firstcompquasix} and \eqref{eqn:seccompquasiz} and using the result \eqref{eqn:quasiperiod_for_qfay} allows us to formally deduce
  \begin{align}
    \label{eqn:Faytermquasix}
    \qFay_{klm}(z,\gamma_i y,x)=&~ 
    (1 \otimes \eunit^{ b_i}) \left[K_{\bsub[1]l}(z,x)-K_{\bsub[1]l}(z,y)\right] K_{\bsub[2]k}(y,x) (1 \otimes b_m) \notag\\
    &- (1 \otimes \eunit^{ b_i}) K_{\bsub[1]j}(z,y)K_{\bsub[2]j\Dbsub[12]k}(y,x) (b_l \otimes b_m) \notag\\
    &-\Bigg[ \delta_{ij} K_{\bsub[1]j}(z,y)\frac{\eunit^{\Delta{b_i}}-1 \otimes \eunit^{b_i}}{b_i\otimes 1}K_{\Dbsub[12]k}(y,x) \notag\\
    &\qquad+ K_{\bsub[1]i}(z,y)\frac{\eunit^{-b_i}-1}{b_i}\otimes\eunit^{b_i}K_{\bsub[2]k}(y,x) \notag\\
    &\qquad+ K_{\bsub[1]i}(z,y)\frac{\eunit^{-b_i}-1}{b_i}b_j \otimes \eunit^{b_i}K_{\bsub[2]j\Dbsub[12]k}(y,x) \notag\\
    &\qquad+ \delta_{ij} K_{\bsub[1]i}(z,y) \left(\frac{\eunit^{-b_i}-1}{b_i}b_j \otimes 1\right) \frac{\eunit^{\Delta(b_i)} - 1\otimes \eunit^{b_i}}{b_i \otimes 1}K_{\Dbsub[12]k}(y,x)\Bigg](b_l \otimes b_m) \notag\\
    =&~(1 \otimes \eunit^{ b_i})\qFay_{klm}(z,y,x) \notag\\
    &+K_{\bsub[1]i}(z,y)\frac{\eunit^{-b_i}-1}{b_i}\otimes \eunit^{b_i}\Big[K_{\Dbsub[12]k}(y,x)-K_{\bsub[2]k}(y,x)\notag\\
    &\hspace{35ex}-(b_j\otimes 1)K_{\bsub[2]j\Dbsub[12]k}(y,x)\Big] (b_l \otimes b_m)\notag\\
    =&~(1 \otimes \eunit^{ b_i})\qFay_{klm}(z,y,x)
  \end{align}
by means of the contraction identity \eqref{eqn:contractionID2}. Let us consider the residues next. Using eqs.~\eqref{eqn:firstcompres} and \eqref{eqn:seccompres}, we arrive at the conditions
\begin{subequations}
\begin{align}
  \label{eqn:FaytermResZX}
  (-2\pi \iunit)\Res_{z=x}\qFay_{klm}(z,y,x)&=K_{\bsub[2]k}(y,x)\,b_l\otimes b_m , \\
  \label{eqn:FaytermResYX}
  (-2\pi \iunit)\Res_{y=x}\qFay_{klm}(z,y,x)&=-K_{\bsub[1]k}(z,x)\,b_l \otimes b_m  , \\
  \label{eqn:FaytermResZY}
  (-2\pi \iunit)\Res_{z=y}\qFay_{klm}(z,y,x)&=K_{\Dbsub[12]k}(y,x)\,b_l \otimes b_m  ,
\end{align}	
\end{subequations}
where we have again used \eqn{eqn:contractionID2} to derive equation \eqref{eqn:FaytermResZY}. Having established these properties, we can use them to derive the analogous properties of the full expression $\QFay_{klm}(z,y,x)$ via \eqn{eqn:HigherGenusFay}. We get
\begin{subequations}
  \begin{align}
	  \QFay_{klm}(\gamma_lz,y,x) &= \qFay_{klm}(\gamma_lz,y,x)+\permop_{12} (\qFay_{kml}(y,\gamma_lz,x))\notag \\
				     &=\left((\eunit^{b_l}\otimes 1) \qFay_{klm}(z,y,x) + \permop_{12} ((1 \otimes \eunit^{b_l}) \qFay_{kml}(y,z,x))\right) \notag \\
    &=(\eunit^{b_l} \otimes 1 )\QFay_{klm}(z,y,x)  , \\
    \QFay_{klm}(z,\gamma_ly,x) &=(1\otimes\eunit^{b_l})\QFay_{klm}(z,y,x),
\end{align}
\end{subequations}
by means of eqs.~\eqref{eqn:Faytermquasiz} and \eqref{eqn:Faytermquasix}. Moreover, we have
\begin{subequations}
\begin{align}
  (-2\pi \iunit)\Res_{z=x}\QFay_{klm}(z,y,x)&=
  (-2\pi \iunit)  \Res_{z=x} \qFay_{klm}(z,y,x) + \permop_{12} \Res_{z=x} \qFay_{kml}(y,z,x)  \notag \\
					    &= K_{\bsub[2]k}(y,x) b_l \otimes b_m + \permop_{12} \left(- K_{\bsub[1]k}(y,x) b_m\otimes b_l\right)=0   \\
  (-2\pi \iunit)\Res_{y=x}\QFay_{klm}(z,y,x)&=\left(K_{\bsub[1]k}(z,x)-K_{\bsub[1]k}(z,x)\right)b_l\otimes b_m=0 ,\\
  \label{eqn:FayResZY}
  (-2\pi \iunit)\Res_{z=y}\QFay_{klm}(z,y,x)&=(-2\pi \iunit)\left[\Res_{z=y}\qFay_{klm}(z,y,x)+\Res_{z=y}\permop_{12} (\qFay_{kml}(y,z,x))\right]=0  ,
\end{align}
\end{subequations}
where we have used eqs.~\eqref{eqn:FaytermResZX}, \eqref{eqn:FaytermResYX} and \eqref{eqn:FaytermResZY}. Additionally, \eqn{eqn:FayResZY} requires that
\begin{equation}
	\Res_{z=y}\permop_{12} (\qFay_{kml}(y,z,x))=-\Res_{z=y}\qFay_{klm}(z,y,x) \, .
\end{equation}
This condition can be reduced to the requirement
\begin{equation}\label{eqn:rescondition}
  (-2\pi\iunit)\Res_{z=y}K_{\bsub k}(y,z)=-b_k\, .
\end{equation}
To see that this is true, consider the Laurent expansion of $K_{\bsub k}(z,y)$, which reads
\begin{equation}
  K_{\bsub k}(z,y)=\frac{b_k}{(-2\pi\iunit)}\frac{1}{z-y}+\mathcal{O}(1) \, .
\end{equation}
From this, we can directly read off that \eqn{eqn:rescondition} is true as exchanging the roles of $z$ and $y$ introduces an additional minus sign.

\subsection{Uniqueness of generating functions valued in the tensor algebra}\label{app:hopfproof}

In the proof of \thmref{thm:quadfay}, we identified the one-form $\Omega_{klm} \in \alg{b} \otimes \alg{b}$. This one-form is holomorphic, and satisfies
\begin{equation}
  \Omega_{klm}(z,\gamma_i y,x) = (1 \otimes e^{b_i}) \Omega_{klm}(z,y,x).
\end{equation}
By expanding $\Omega_{klm}$ in the first algebra $\alg{b}$, we can identify its components
\begin{equation}
  \Omega_{klm}(z,y,x) = \sum_{r = 0}^\infty b_{i_1} \cdots b_{i_r} \otimes \Omega_{klm,i_1 \cdots i_r}(z,y,x),
\end{equation}
where $\Omega_{klm,i_1 \cdots i_r}$ are also holomorphic and quasi-periodic,
\begin{equation}
  \Omega_{klm,i_1 \cdots i_r}(z,\gamma_i y,x) = e^{b_i} \Omega_{klm,i_1 \cdots i_r}(z,\gamma_i y,x).
\end{equation}

The proof of \thmref{thm:quadfay} uses the statement that $\Omega_{klm}$ is uniquely identified to vanish by its holomorphicity and quasi-periodicity requirements. We proceed to show this by adapting the theorem used by \rcite{EZ1} for identifying the uniqueness of Enriquez' connection.

\begin{theorem}[Uniqueness of algebra-valued one-forms]
  A one-form $H(y,x) \in \alg{b}$ can be uniquely identified if it satisfies the conditions 
  \begin{subequations}
    \begin{align}
      (2\pi \iunit) \Res_{y=x} H(y,x) &= \sum_{j=1}^\genus b_j r_j, \\
      H(\gamma_i y, x) &= e^{b_i} H(y,x),
    \end{align}
  \end{subequations}
  for some $r_j \in \alg{b}$. In other words, if $H$ and $H'$ satisfy these conditions for the same $r_j$, then $H \equiv H'$. 
\end{theorem}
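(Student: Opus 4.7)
My plan is to reduce the theorem to showing that the difference $D \defeq H - H'$ vanishes, and to do so by a word-by-word induction in the expansion of $D$ in the algebra $\alg{b}$. By construction $D(y,x)$ inherits the $B$-quasi-periodicity $D(\gamma_i y, x) = \eunit^{b_i} D(y,x)$, and since $H$ and $H'$ share the same residue datum $\sum_j b_j r_j$, the residue of $D$ at $y = x$ vanishes, so $D$ is holomorphic in $y$ on the universal cover. I would then expand $D(y,x) = \sum_{r \ge 0} D_{i_1 \cdots i_r}(y,x)\, b_{i_1} \cdots b_{i_r}$ with each coefficient a scalar holomorphic one-form in $y$, and read the quasi-periodicity off coefficient-by-coefficient.

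Exactly as in the derivation of $K_{\bsub j}(\gamma_k z,x) = \eunit^{b_k} K_{\bsub j}(z,x)$ carried out in \appref{app:compprop}, expanding $\eunit^{b_k}$ as a power series and matching coefficients of $b_{i_1} \cdots b_{i_r}$ yields
\begin{equation*}
  D_{i_1 \cdots i_r}(\gamma_k y, x) = \sum_{m=0}^{r} \frac{1}{m!}\, \delta_{k i_1 \cdots i_m}\, D_{i_{m+1} \cdots i_r}(y, x).
\end{equation*}
The induction on $r$ then runs as follows: assuming $D_{i_1 \cdots i_s} \equiv 0$ for all $s < r$, all $m > 0$ terms drop out and the identity collapses to $D_{i_1 \cdots i_r}(\gamma_k y, x) = D_{i_1 \cdots i_r}(y, x)$. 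Combined with the trivial $A$-periodicity on the universal cover and the holomorphicity of $D_{i_1 \cdots i_r}$, this shows that $D_{i_1 \cdots i_r}$ descends to a global holomorphic one-form on $\RSurf$.

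The hard part will be the final step at each induction stage: concluding that such a global holomorphic one-form must be identically zero is not automatic, because the space of holomorphic one-forms on $\RSurf$ has dimension $\genus > 0$. What rescues the argument in the intended application is that the input $\Omega_{klm}$ carries additional structure which removes the abelian-differential ambiguity. Every term in $\qFay_{klm}$ carries a trailing $b_m$ on the second tensor site (either directly from the $(1 \otimes b_m)$ factor or from $(b_l \otimes b_m)$), and an analogous inspection of $\qFay_{kml}$ before applying $\permop_{12}$, using the crucial cancellation $K_{\bsub[1]m}(y,x) - K_{\bsub[1]m}(y,z)$ of the holomorphic piece, shows that $\QFay_{klm}$ has no length-$0$ contribution on the tensor site upon which the quasi-periodicity in $y$ acts. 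Integration in $z$ preserves this, so the induction for $\Omega_{klm}$ starts from $D_\emptyset \equiv 0$ and propagates the vanishing to every higher word. The abstract theorem should therefore be read together with this implicit normalisation of the length-$0$ component, which is automatic in every application made in this article.
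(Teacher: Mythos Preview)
Your reduction to the difference $D = H - H'$ and the word-by-word expansion are fine, and you correctly identify that the induction step produces, at each stage, a global holomorphic one-form $D_{i_1\cdots i_r}$ on $\RSurf$. But your proposed fix for the final step does not work: knowing that $D_\emptyset \equiv 0$ does \emph{not} propagate. At the next stage the induction hypothesis only tells you that $D_{i_1}(\gamma_k y,x) = D_{i_1}(y,x)$, i.e.\ that $D_{i_1}$ is a periodic holomorphic one-form on $\RSurf$; nothing forces it to vanish, since the space of such forms has dimension $\genus$. The same obstruction recurs at every word length, so the induction never closes.

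The missing ingredient is an $A$-period computation. The paper obtains it by integrating $H$ over the boundary of the canonical dissection $\bcyc_1\acyc_1\bcyc_1^{-1}\acyc_1^{-1}\cdots$: on one hand this picks up $(2\pi\iunit)\Res_{y=x} H = \sum_j b_j r_j$, on the other hand the $\bcyc$-contributions cancel pairwise while the $\acyc$-contributions combine via the quasi-periodicity to $\sum_j (\eunit^{b_j}-1)\oint_{\acyc_j} H$. Matching first letters gives $\oint_{\acyc_j} H = \tfrac{b_j}{\eunit^{b_j}-1}\, r_j$, so the $A$-periods of $H$ are \emph{determined} by the residue data. Consequently $D = H-H'$ has vanishing $A$-periods at every coefficient level, and a periodic holomorphic one-form with vanishing $A$-periods is identically zero. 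This is the step that actually kills the $\genus$-dimensional ambiguity you flagged; no special structure of $\Omega_{klm}$ is needed.
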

\begin{proof}
  The proof proceeds in two parts. First, we show that the residue and quasi-periodicity conditions imply a condition on the value of $H$'s A-periods. Then, we deduce conditions on the coefficients of $G$ and see that the residue, quasi-periodicity, and A-periods uniquely identify them.

  By performing a canonical dissection of the Riemann surface $\RSurf$, we identify the boundary of the fundamental domain with the sequence of cycles $\bcyc_1 \acyc_1 \bcyc_1^{-1} \acyc_1^{-1} \cdots \bcyc_\genus \acyc_\genus \bcyc_\genus^{-1} \acyc_\genus^{-1}$. Then, integrating along this boundary we find
  \begin{equation}
    \begin{aligned}
      \oint_{\partial \RSurf} H(y,x) &= (2\pi \iunit) \Res_{y=x} H(y,x) = \sum_{j=1}^\genus b_j r_j \\
      &= \sum_{j=1}^\genus \oint_{\bcyc_j \acyc_j \bcyc_j^{-1} \acyc_j^{-1}} H(y,x) = \sum_{j=1}^\genus \oint_{\acyc_j} H(\gamma_i y,x) - H(y,x) = \sum_{j=1}^\genus (e^{b_i} - 1) \oint_{\acyc_j} H(y,x).
    \end{aligned}
  \end{equation}

  By collecting words with the same starting letter, one finds
  \begin{equation}
    \frac{e^{b_j}-1}{b_j} \oint_{\acyc_j} H(y,x) = r_j \implies \oint_{\acyc_j} H(y,x) = \frac{b_j}{e^{b_j} - 1} r_j,
  \end{equation}
  where $\frac{b_j}{e^{b_j} - 1}$ is identified with the formal sum $\sum_{s=0}^\infty \frac{B_s (b_j)^s}{s!}$ where $B_s$ are the Bernoulli numbers.

  Now, by expanding $H(y,x) = \sum_{s=0}^\infty h_{i_1 \cdots i_s}(y,x)\, b_{i_1} \cdots b_{i_s}$, one identifies the conditions on $h$ as
  \begin{equation}
    \begin{aligned}
    H(\gamma_i y,x) = e^{b_i} H(y,x) &\implies h_{i_1 \cdots i_s}(\gamma_i y, x) = \sum_{k=0}^s \frac{\delta_{i_1 \cdots i_k i}}{k!} h_{i_{k+1} \cdots i_s}(y,x), \\
    (2\pi \iunit) \Res_{y=x} H(y,x) = \sum_{j=1}^\genus b_j r_j &\implies (2\pi \iunit) \Res_{y=x} h_{i_1 \cdots i_s}(y,x) = r_{i_1, i_2 \cdots i_s}, \\
    \oint_{\acyc_j} H(y,x) = \frac{b_j}{e^{b_j} - 1} r_j &\implies \oint_{\acyc_j} h_{i_1 \cdots i_s}(y,x) = \sum_{k=0}^s \frac{\delta_{i_1 \cdots i_k j} B_k}{k!} r_{i_1, i_{k+1} \cdots i_s},
    \end{aligned}
  \end{equation}
  where $r_j = \sum_{s = 0}^\infty r_{j,i_1 \cdots i_s} b_{i_1} \cdots b_{i_s}$. These conditions are sufficient to uniquely identify $h_{i_1 \cdots i_s}$: if $h$ and $h'$ have the same residue, quasi-periodicity, and A-period, then their difference $h-h'$ is a holomorphic periodic one-form with vanishing A-periods, and can only be identically zero.

  Since each coefficient of $H$ is uniquely identified, so is $H$ itself.
\end{proof}
In \rcite{EZ1}, one can identify $r_j = a_j$ and deduce the uniqueness of Enriquez' connection.
In our case, since $\Omega_{klm,i_1 \cdots i_r}$ are holomorphic we identify $r_j = 0$. Since $0$ satisfies the same quasi-periodicity and residue conditions (i.e.~$0 = e^{b_i} 0$ and $\Res_{y=x} 0 = 0$), we deduce that $\Omega_{klm,i_1 \cdots i_r} \equiv 0$, and consequently that $\Omega_{klm} \equiv 0$.

\subsection{Derivations for Fay-like identities}
\label{app:derivations}

In this appendix, we will perform several explicit calculations used in \secref{sec:identities} and \secref{sec:kernelidentities}. We start by deriving \eqn{eqn:Fayisolated} from \eqn{eqn:HigherGenusFaynon-trivial}. First, by applying the index swapping identities \eqn{eqn:indexswap} and \eqn{eq:indexswap2} to the corresponding terms with repeated $z$-dependence. Doing this, we arrive at the expression 
\begin{small}
\begin{equation}
  \begin{aligned}
  \QFay'_{klm}(z,y,x) =&  \Bigg(\Big[(K_{\bsub[1] pp'}(z,x)-K_{\bsub[1] pp'}(z,y))\Big]\at_{p'=p} K_{\bsub[2]k}(y,x)+ K_{\bsub[1]k}(z,x) K_{\bsub[2]qq'}(y,x)\at_{q'=q} \\
  &- K_{\bsub[1]j}(z,y) K_{\bsub[2]j \Dbsub[12]k}(y,x) -K_{\bsub[1] k}(z,x) K_{\bsub[2]qq'}(y,z)\at_{q'=q} \\
  &- K_{\bsub[2]j}(y,z)K_{\bsub[1]j\Dbsub[12]k}(z,x)\Bigg) (b_{l} \otimes b_m  ) \\
	  =& \Bigg(\Big[(K_{\bsub[1] pp'}(z,x)-K_{\bsub[1] pp'}(z,y))\Big]\at_{p'=p} K_{\bsub[2]k}(y,x)+ K_{\bsub[1]k}(z,x) K_{\bsub[2]qq'}(y,x)\at_{q'=q} \\
  &- K_{\bsub[1]j}(z,y) K_{\bsub[2]j \Dbsub[12]k}(y,x) -\Bigg(K_{\bsub[1]j}(z,x)K_{\bsub[2]jk}(y,z)\\
  &-K_{\bsub[1]j}(z,x)K_{\bsub[2]jk}(y,x)+K_{\bsub[1]k}(z,x) K_{\bsub[2]qq'}(y,x)\at_{q'=q}\Bigg) \\
  &-\Bigg(K_{\bsub[2]j}(y,x)K_{\bsub[1]j\Dbsub[12]k}(z,x) -K_{\bsub[1]j}(z,x)\left[ K_{\bsub[2]jk}(y,z)- K_{\bsub[2]jk}(y,x)\right]\\
  &+ \left[K_{\bsub[2]jk}(y,z)-K_{\bsub[2]jk}(y,x)\right]K_{\Dbsub[12]j}(z,x) \Bigg)\Bigg)(b_l \otimes b_m  ) \\
  =& \Bigg(\left[K_{\bsub[2]jk}(y,x)-K_{\bsub[2]jk}(y,z)\right]K_{\Dbsub[12]j}(z,x) \\
  & +\Big[K_{\bsub[1]jk}(z,x)-K_{\bsub[1]jk}(z,y)\Big] K_{\bsub[2]j}(y,x) \\
  &-K_{\bsub[2]j}(y,x) K_{\bsub[1]j\Dbsub[12]k}(z,x) 
  - K_{\bsub[1]j}(z,y)K_{\bsub[2]j\Dbsub[12]k}(y,x)\Bigg)(b_l \otimes b_m  ) \, ,
  \end{aligned}
\end{equation}
\end{small}
where we have again used the index swapping identity \eqref{eqn:indexswap} in the last step. \medskip

Next, we verify the transformation \eqref{eqn:mappingOS}. We can do this using the Hopf algebra axioms. For example, for the first term in \eqn{eqn:Fayisolated} we have
\begin{equation}
  \label{eqn:transformation}
  \Xi \left(  K_{\bsub[2]jk} K_{\Dbsub[12]j} \right) = \id \otimes \mu \circ \Delta \otimes \id \circ \id \otimes S \circ\Delta (K_{\bsub j} ) K_{\Sbsub[2]jk} \, .
\end{equation}
Then notice 
\begin{align}
  \id \otimes \mu \circ \Delta \otimes \id \circ \id \otimes S \circ \Delta 
  &= 
  \id \otimes \mu \circ \id \otimes \id \otimes S \circ \Delta \otimes \id \circ \Delta \notag\\
  &= 
  \id \otimes \mu \circ \id \otimes \id \otimes S \circ \id \otimes \Delta \circ \Delta \notag\\
  &= 
  \id \otimes (\eta \circ \epsilon) \circ \Delta 
  = \id \otimes 1,
\end{align}
where in the first equality we notice that $\Delta_1$ and $S_2$ do not interfere with each other, in the second equality we use coassociativity and in the last equality we use antipode compatibility condition. Then $\id \otimes (\eta \circ \epsilon)$ kills all the non-trivial algebra elements in the second tensor site leaving the only term of the coproduct, when the second tensor site is the neutral element ($\id \otimes 1: \alg{b} \to \alg{b}^{\otimes 2}$ stands for concatenating a neutral element from the right). The second term in \eqn{eqn:Fayisolated} is treated analogously. The second term in \eqn{eqn:Fayisolated} is treated analogously. The second line in \eqn{eqn:Fayisolated} is obviously mapped to the second line of \eqn{eqn:FayDS}. The term of the third line in \eqn{eqn:Fayisolated} can be shown to be mapped to the third line of \eqn{eqn:FayDS} due to
\begin{align}
  \id &\otimes \mu \circ \Delta \otimes \id \circ \id \otimes S \circ \mu \otimes \id \circ \id \otimes \Delta 
  =
  \id \otimes \mu \circ \Delta \otimes \id \circ \mu \otimes \id \circ \id^{\otimes 2} \otimes S \circ \id \otimes \Delta \notag\\
  &=
  \id \otimes \mu \circ \mu\otimes \mu\otimes \id  \circ \id \otimes \permop\otimes\id^{\otimes2}  \circ \Delta\otimes \Delta \otimes \id \circ \id^{\otimes 2} \otimes S \circ \id \otimes \Delta \notag\\
  &=
  \mu \otimes \mu \circ \id^{\otimes 3} \otimes \mu \circ \id \otimes \permop\otimes\id^{\otimes2} \circ \id^{\otimes 4} \otimes S \circ \Delta\otimes \Delta \otimes \id  \circ \id \otimes \Delta \notag\\
  &=
  \mu \otimes \mu \circ  \id \otimes \permop\otimes\id \circ \id^{\otimes 3} \otimes \mu \circ \id^{\otimes 4} \otimes S \circ \id^{\otimes 3} \otimes \Delta  \circ \Delta \otimes \Delta \notag\\
  &=
  \mu \otimes \mu \circ  \id \otimes \permop\otimes\id \circ \id^{\otimes 3} \otimes (\eta \circ \epsilon)  \circ \Delta \otimes \Delta \notag\\
  & = 
  \mu \otimes \mu \circ  \id \otimes \permop\otimes\id\circ \Delta \otimes \id \otimes 1\notag\\
  &= 
  \mu \otimes \id \circ  \id \otimes \permop \circ \Delta \otimes \id = \mu_{12} \circ \permop_{23} \circ \Delta_1,
\end{align}
where we used for each equality: commutativity of $\Delta_1 \circ S_2$; compatibility $\mu$ and $\delta$; associativity of $\mu$ and commutativity of $\Delta_1 \Delta_2 S_3$; commutativity of $\mu_{45}$ and $\permop_{23}$; compatibility of the antipode; property of $\id \otimes (\eta \circ \epsilon)$ when acted on $\Delta$; property $\mu(\cdot \otimes 1) = \id (\cdot)$. The last term in \eqn{eqn:Fayisolated} can be treated analogously. We have (omitting arguments for brevity)
\begin{equation}
  \label{eqn:fourthline}
  \Xi\left(K_{\bsub[1] j} K_{\bsub[2]j\Dbsub[12]k}\right)=K_{\Dbsub[12]j}\mu_{23}\circ\Delta_1\circ S_2\circ\mu_{23}\circ\permop_{12}\circ\Delta_2(K_{\bsub[1]j\bsub[2]k})  .
\end{equation}
It can be seen that this is mapped to the last line of \eqn{eqn:FayDS} by
  \begin{align}
    &\mu_{23}\circ\Delta_1\circ S_2\circ\mu_{23}\circ\permop_{12}\circ\Delta_2\notag\\
    &=(\id\otimes\mu){{\circ}}(\Delta\otimes\id){\circ}(\id\otimes\mu){\circ}(\id\otimes\permop){\circ}(\id\otimes S\otimes S){\circ}(\permop\otimes\id){\circ}(\id\otimes\Delta) \notag\\
    &=(\id\otimes\mu){\circ}(\Delta\otimes\id){\circ}(\id\otimes\mu){\circ}(\id\otimes\permop){\circ}(\permop\otimes\id){\circ}(S\otimes \id\otimes S){\circ}(\id\otimes\Delta) \notag\\
    &=(\id\otimes\mu){{\circ}}(\id^{\otimes2}\otimes\mu){{\circ}}(\Delta\otimes\id){{\circ}}(\id\otimes\permop){{\circ}}(\permop\otimes\id){{\circ}}(S\otimes \id\otimes S){{\circ}}(\id\otimes\Delta) \notag\\
    &=(\id\otimes\mu){{\circ}}(\id^{\otimes2}\otimes\mu){{\circ}}(\id^{\otimes2}\otimes\permop){{\circ}}(\id\otimes\permop\otimes\id){{\circ}}(\permop\otimes\id^{\otimes2}){{\circ}}(\id\otimes\Delta\otimes\id){{\circ}}(S\otimes \id\otimes S){{\circ}}(\id\otimes\Delta)\notag \\
    &=(\id\otimes\mu){\circ}(\id\otimes\permop){\circ}(\permop\otimes\id){\circ}(\id^{\otimes 2}\otimes\mu){\circ}(S\otimes \id^{\otimes 2}\otimes S){\circ}(\id\otimes\Delta\otimes\id){\circ}(\id\otimes\Delta)\notag \\
    &=(\id\otimes\mu){\circ}(\id\otimes\permop){\circ}(\permop\otimes\id){\circ}(\id^{\otimes 2}\otimes\mu){\circ}(S\otimes \id^{\otimes 2}\otimes S){\circ}(\id^{\otimes2}\otimes\Delta){\circ}(\id\otimes\Delta)\notag \\
    &=(\id\otimes\mu){\circ}(\id\otimes\permop){\circ}(\permop\otimes\id){\circ}(S\otimes\id\otimes(\eta{\circ}\epsilon)){\circ}(\id\otimes\Delta)\notag \\
    &=(\id\otimes\mu){\circ}(\id\otimes\permop){\circ}(\permop\otimes\id){\circ}(S\otimes\id \otimes 1)\notag \\
    &=\permop{\circ}\, S_1,
  \end{align}
where we have again used for each equality: anti-homomorphism property of $S$ with respect to the product; $\permop\circ S_2=S_1\circ\permop$; commutativity of $\mu_{23}$ and $\Delta_1$; $\Delta_1\circ\permop_{23}\circ\permop_{12}=\permop_{34}\circ\permop_{23}\circ\permop_{12}\circ\Delta_2$; $\mu_{34}\circ\permop_{34}\circ\permop_{23}\circ\permop_{12}=\permop_{23}\circ\permop_{12}\circ\mu_{34}$ and commutativity of $S_{13}$ and $\Delta_2$; cocommutativity of the coproduct; compatibility of the antipode; compatibility of the counit with respect to the coproduct and that $\mu(\cdot\otimes1)=\id(\cdot)$. Plugging this result into \eqn{eqn:fourthline}, we see that it evaluates to the last line of \eqn{eqn:FayDS}. \medskip

Last, we explain how the Fay-like identity for kernels \eqref{eqn:kernelId} follows from the Fay-like identity for generating functions \eqref{eqn:FayDSIsolated}: We begin by expanding the generating functions in \eqn{eqn:FayDSIsolated} in words $(-1)^sb_{i_1}\ldots b_{i_r}\otimes b_{p_s}\ldots b_{p_1}$. This can be done term by term, exploiting the definitions of the component forms and applying the respective Hopf algebra morphisms. The result for the distinct terms is given by
\begin{subequations}
  \begin{align}
&K_{\bsub[1]k}(u,v) K_{\Sbsub[2]pp'}(w,x)\at_{p'=p}\notag\\
&\hspace{0.5cm}=\sum_{r,s\geq 0}(-1)^s\omega_{i_1\cdots i_ri}(u,v)\,\omega_{p_1\cdots p_spp'}(w,x)\at_{p'=p}b_{i_1}\ldots b_{i_r}\otimes b_{p_s}\ldots b_{p_1} \, , \\[4pt]
%%%%%%%%%%%%%%%%%
&K_{\Dbsub[12]jk}(u,v)K_{\Sbsub[2]j}(w,x) \notag\\
&\quad=\sum_{r,s\geq0}\sum_{m=0}^s(-1)^m\omega_{(i_1\cdots i_r\shuffle p_s\cdots p_{m+1})jk}(u,v)\,\omega_{p_1\cdots p_mj}(w,x)\,b_{i_1}\ldots b_{i_r}\otimes b_{p_s}\ldots b_{p_1} \, , \\
%%%%%%%%%%%%%%%%%
&K_{\Dbsub[12]j\bsub[1]k}(u,v) K_{\Sbsub[2]j}(w,x) \notag\\
&\quad=\sum_{r,s\geq0}\sum_{l=0}^r\sum_{m=0}^s(-1)^m\omega_{(i_1\cdots i_l\shuffle p_s\cdots p_{m+1})ji_{l+1}\cdots i_rk}(u,v)\,\omega_{p_1\cdots p_mj}(w,x)\,b_{i_1}\ldots b_{i_r}\otimes b_{p_s}\ldots b_{p_1} \, , \\
%%%%%%%%%%%%%%%%%
&K_{\Dbsub[12]j}(u,v)K_{\Sbsub[2]j\bsub[1]k}(w,x)\notag \\
&\quad= \sum_{r,s\geq 0} \sum_{l=0}^r\sum_{m=0}^s(-1)^m\omega_{(i_1\cdots i_l\shuffle p_s\cdots p_{m+1})j}(u,v)\,\omega_{p_1\cdots p_mji_{l+1}\cdots i_rk}(w,x)\,b_{i_1}\ldots b_{i_r}\otimes b_{p_s}\ldots b_{p_1}\,,
  \end{align}
\end{subequations}
where the additional sum over $m$ (and $l$) in the second, third and fourth expansion arises due to the different possibilities of distributing the letters to the two generating functions. We can now read off the coefficient of the word $(-1)^sb_{i_1}\ldots b_{i_r}\otimes b_{p_s}\ldots b_{p_1}$ for each distinct term. In particular, substituting the appropriate arguments and expansions for all the terms in \eqn{eqn:FayDSIsolated}, we directly arrive at \eqn{eqn:kernelId}.

%%%%%%%%%%%%%%%%%%%%%%%%%%%%%%%%%%%%%%%%%%%%%%%%%%%%%%%%%%%%%%%%%%%%%%%%%%%%%%%
%%%%%%%%%%%%%%%%%%%%%%%%%%%%%%%%%%%%%%%%%%%%%%%%%%%%%%%%%%%%%%%%%%%%%%%%%%%%%%%
%%%%%%%%%%%%%%%%%%%%%%%%%%%%%%%%%%%%%%%%%%%%%%%%%%%%%%%%%%%%%%%%%%%%%%%%%%%%%%%
%%%%%%%%%%%%%%%%%%%%%%%%%%%%%%%%%%%%%%%%%%%%%%%%%%%%%%%%%%%%%%%%%%%%%%%%%%%%%%%

\bibliographystyle{HigherGenusFay}
\bibliography{HigherGenusFay}

\end{document}